\pgfplotsset{compat=1.18}
\pgfplotsset{
    layers/my layer set/.define layer set={
        background,
        pre main,
        main,
        foreground
    }{        %
    },
    set layers=my layer set,
}
\tikzset{
  pics/car/.style args={}{
  code={
    \begin{scope}[
      every node/.style={
        draw,
        font=\footnotesize,
        node distance=3mm and 4mm
      }
    ] 
            \draw[very thick, rounded corners=0.5ex,fill=c15,thick]  (0,.15) -- ++(2,0) -- ++(0,0.25) -- ++(-0.5,0.05) -- ++(-0.25,+0.25) -- ++(-0.75,0.0) -- ++(-0.25,-0.2) -- ++(-0.25, 0.) -- (0,.15);
            \draw[draw=black,fill=gray!50,thick] (0.4,.165) circle (.165);
            \draw[draw=black,fill=gray!50,thick] (1.5,.165) circle (.165);
            \draw[draw=black,fill=black,thick] (0.4,.165) circle (.025);
            \draw[draw=black,fill=black,thick] (1.5,.165) circle (.025);
        \end{scope}
    }
  }
}
\tikzset{
dot/.style = {circle, fill, minimum size=#1,
              inner sep=0pt, outer sep=0pt},
dot/.default = 6pt %
}
\definecolor{c1}{HTML}{f9f871}
\definecolor{c2}{HTML}{ffc75f}
\definecolor{c3}{HTML}{ff9671}
\definecolor{c4}{HTML}{ff6f91}
\definecolor{c5}{HTML}{d65db1}
\definecolor{c6}{HTML}{845ec2}
\definecolor{c11}{HTML}{008f7a}
\definecolor{c12}{HTML}{008e9b}
\definecolor{c13}{HTML}{0089ba}
\definecolor{c14}{HTML}{0081cf}
\definecolor{c15}{HTML}{2c73d2}
\definecolor{c16}{HTML}{845ec2}
\definecolor{c21}{HTML}{c4fcef}
\crefname{formula}{formula}{formulas}
\newtheorem{definition}{Definition}
\newtheorem{lemma}[definition]{Lemma}
\newtheorem{theorem}[definition]{Theorem}
\newcommand{\chFourFancyName}{Mosaic}
\newcommand{\chThreeFancyName}{VerSAILLE}
\title{
Provably Safe Neural Network Controllers\\via Differential Dynamic Logic
}
\author{%
  Samuel Teuber\textsuperscript{1} \quad
  Stefan Mitsch \textsuperscript{2} \quad
  Andr\'e Platzer \textsuperscript{1,3}\\ %
  \textsuperscript{1} Karlsruhe Institute of Technology \quad
  \textsuperscript{2} DePaul University \quad
  \textsuperscript{3} Carnegie Mellon University\\ %
  \texttt{teuber@kit.edu}\quad
  \texttt{smitsch@depaul.edu}\quad
  \texttt{platzer@kit.edu}\\
}
\newcommand{\formulaset}[2][]{\ensuremath{\text{FOL}_{#2}^{#1}}}
\newcommand{\truesym}{\ensuremath{\top}}
\newcommand{\reals}{\ensuremath{\mathbb{R}}}
\newcommand{\linreals}{\ensuremath{\mathrm{L}\mathbb{R}}}
\newcommand{\nnProgramFml}{\ensuremath{\alpha_g}}%
\newcommand{\noetherianReals}{\ensuremath{\mathrm{N}\mathbb{R}}}%
\newcommand{\atomsof}[1]{\ensuremath{\text{Atom}\mleft(#1\mright)}}
\newcommand{\grammar}[2]{\ensuremath{#1\ \Coloneqq{}\ #2}}
\newcommand{\interpretationsubstitute}[3]{\ensuremath{#1_{#2}^{#3}}}
\newcommand{\varsofformula}[1]{\ensuremath{\text{V}\mleft(#1\mright)}}
\newcommand{\dL}[0]{{\text{\upshape\textsf{d{\kern-0.05em}L}}\xspace}}
\newcommand{\dLstatespace}[0]{\ensuremath{\mathcal{S}}}
\newcommand{\dLinterpretationapp}[1]{\ensuremath{\left\llbracket #1 \right\rrbracket}}
\newcommand{\boundedvars}[1]{\ensuremath{BV\mleft(#1\mright)}}
\newcommand{\freevars}[1]{\ensuremath{FV\mleft(#1\mright)}}
\newcommand{\defaultcontroller}[0]{\ensuremath{\left(\alpha_{\text{ctl}};\alpha_{\text{plant}}\right)^*}}
\newcommand{\hybridsystemscontract}[0]{\ensuremath{%
\phi \implies \left[\defaultcontroller{}\right]\psi%
}}
\newcommand{\satatoms}[1]{\text{sat-atoms}\mleft(#1\mright)}
\newcommand{\projection}[2]{#1\!\downharpoonright_{#2}}
\newcommand{\controllerDescriptionFml}{\ensuremath{\kappa}}
\newcommand{\nnControllerDescriptionFml}{\ensuremath{\kappa_g}}
\newcommand{\controllerFormula}{\ensuremath{\zeta_c}}
\newcommand{\stateFormula}{\ensuremath{\zeta_s}}
\newcommand{\realArithmeticSafetyGuarantee}[0]{\ensuremath{\stateFormula \land \controllerDescriptionFml \implies \controllerFormula}}
\newcommand{\safetyGuaranteeProgramNoLoop}[0]{\alpha_{\text{refl}}\mleft(\controllerDescriptionFml\mright); \alpha_{\text{plant}}}
\newcommand{\safetyGuaranteeProgram}[0]{\left( \safetyGuaranteeProgramNoLoop \right)^*}
\newcommand{\safetyGuarantee}[0]{\phi \implies \left[ \safetyGuaranteeProgram \right] \psi}
\newcommand{\completeStateSpaceFormula}[0]{\ensuremath{\left(\phi \land \langle \left(\alpha_{\text{ctl}};\alpha_{\text{plant}}\right)^* \rangle \bigwedge_{i=1}^m z_i = z_i^+\right)
\implies
\left(\stateFormula\right)_{z_1 \ldots z_m}^{z_1^+ \ldots z_m^+}}}
\newcommand{\accRelPos}[0]{\ensuremath{p_{\text{rel}}}}
\newcommand{\accRelVel}[0]{\ensuremath{v_{\text{rel}}}}
\newcommand{\accRelAcc}[0]{\ensuremath{a_{\text{rel}}}}
\newcommand{\accVelConst}[0]{\ensuremath{v_{\text{const}}}}
\renewcommand{\implies}{\rightarrow}
\renewcommand{\iff}{\leftrightarrow}
\DeclareMathOperator*{\reluop}{ReLU}
\newcommand{\relu}{\ensuremath{\reluop}}
\DeclareAcronym{OLNNV}{
    short = open-loop NNV,
    long = open-loop neural network verification,
    first-style=short
}
\DeclareAcronym{CLNNV}{
    short = closed-loop NNV,
    long = closed-loop neural network verification,
    first-style=short
}
\DeclareAcronym{NN}{
    short = NN,
    long = neural network
}
\DeclareAcronym{FNN}{
    short = FNN,
    long = feed forward neural network
}
\DeclareAcronym{NNCS}{
    short = NNCS,
    long = neural network based control system
}
\DeclareAcronym{DNF}{
    short = DNF,
    long = disjunctive normal form
}
\DeclareAcronym{ACASX}{
    short = ACAS X,
    long = Airborne Collision Avoidance System X,
    cite={olson2015airborne}
}
\DeclareAcronym{ACASXu}{
    short = ACAS Xu,
    long = Airborne Collision Avoidance System X unmanned
}
\newcommand{\keymaeraxtext}{Ke{\kern-0.75ptY}maera X}
\DeclareAcronym{keymaerax}{
    short = \keymaeraxtext{},
    long = \keymaeraxtext{},
    first-style = long,
    tag = noindexplease
}
\DeclareAcronym{modelplex}{
    short = ModelPlex,
    long = ModelPlex,
    first-style = long,
    tag = noindexplease
}
\DeclareAcronym{DNNV}{
    short = DNNV,
    long = DNNV,
    cite={Shriver2021},
    first-style = long,
    tag = noindexplease
}
\DeclareAcronym{OVERT}{
    short = OVERT,
    long = OVERT,
    first-style = long,
    tag = noindexplease
}
\DeclareAcronym{dL}{
    short={\text{\upshape\textsf{d{\kern-0.05em}L}}\xspace},
    long=differential dynamic logic
}
\DeclareAcronym{NMAC}{
    short=NMAC,
    long=Near Mid-Air Collision
}
\DeclareAcronym{CPS}{
    short=CPS,
    long=Cyber-Physical System
}
\DeclareAcronym{SNNT}{
    short=\textsc{N$^3$V},
    long=\textbf{N}on-linear \textbf{N}eural \textbf{N}etwork \textbf{V}erifier,
    first-style=short
}
\DeclareAcronym{SMT}{
    short=SMT,
    long=Satisfiability Modulo Theories
}
\DeclareAcronym{RL}{
    short=RL,
    long=reinforcement learning
}
\DeclareAcronym{MILP}{
    short=MILP,
    long=Mixed Integer Linear Programming
}
\DeclareAcronym{ACC}{
    short=ACC,
    long=Adaptive Cruise Control
}
\DeclareAcronym{TCAS}{
    short=TCAS,
    long=Traffic Alert and Collision Avoidance System
}
\DeclareAcronym{FAA}{
    short=FAA,
    long=Federal Aviation Administration
}
\begin{document}

\maketitle

\begin{abstract}
\looseness=-1
While \acfp{NN} have a large potential as autonomous controllers for Cyber-Physical Systems, verifying the safety of \emph{\acfp{NNCS}} poses significant challenges for the practical use of \acsp{NN}---especially when safety is needed for \emph{unbounded time horizons}.
One reason for this is the intractability of analyzing \acp{NN}, ODEs and hybrid systems.
To this end, we introduce \chThreeFancyName{} %
(\textbf{Ver}ifiably \textbf{S}afe \textbf{AI} via \textbf{L}ogically \textbf{L}inked \textbf{E}nvelopes)%
:
The first general approach that allows reusing control theory literature for \ac{NNCS} verification.
By joining forces, we can exploit the efficiency of \ac{NN} verification tools while retaining the rigor of \ac{dL}. 
Based on a provably safe control envelope in \ac{dL}, we derive a specification for the \ac{NN} which is proven with \ac{NN} verification tools.
We show that a proof of the \ac{NN}'s adherence to the specification is then \emph{mirrored} by a \ac{dL} proof on the infinite-time safety of the \ac{NNCS}.

The \ac{NN} verification properties resulting from hybrid systems
typically contain \emph{nonlinear arithmetic} over formulas with \emph{arbitrary logical structure} while efficient \acs{NN} verification tools merely support linear constraints.
To overcome this divide, we present \chFourFancyName{}: An efficient, \emph{sound and complete} verification approach for polynomial real arithmetic properties on piece-wise linear \acp{NN}.
\chFourFancyName{} partitions complex NN verification queries into simple queries and
lifts off-the-shelf linear constraint tools to the nonlinear setting in a completeness-preserving manner by combining approximation with exact reasoning for counterexample regions.
In our evaluation we demonstrate the versatility of \chThreeFancyName{} and \chFourFancyName{}:
We prove infinite-time safety on the classical Vertical Airborne Collision Avoidance \ac{NNCS} verification benchmark for some
scenarios while (exhaustively) enumerating counterexample regions in unsafe scenarios.
We also show that our approach significantly outperforms the State-of-the-Art tools in \ac{CLNNV}.

\end{abstract}
\section{Introduction}
\label{sec:introduction}
\looseness=-1
For controllers of \acfp{CPS}, the use of \acfp{NN} is both a blessing and a curse.
On the one hand, using \acp{NN} allows the development of goal-oriented controllers that optimize soft requirements such as passenger comfort, frequency of collision warnings or energy efficiency.
On the other hand, guaranteeing that \emph{all} control decisions chosen by an \ac{NN} are safe is very difficult due to the complex feedback loop between the subsymbolic reasoning of an \ac{NN} and the intricate dynamics often encountered in physical systems.
How can this curse be alleviated?
Neural Network Verification (NNV) techniques all have tried one of three strategies:
Open-loop NNV entirely omits the analysis of the physical system and only analyzes input-output properties of the NN~\cite{zhang2018efficient,xu2020automatic,xu2020fast,wang2021beta,Henriksen2020,Henriksen2021,katz2017reluplex,katz2019marabou,bunel2020branch,bak2020improved}.
Open-loop analyses alone cannot justify the safety of an \ac{NNCS}, because they ignore its physical, feedback-loop dynamics.
Closed-loop NNV performs a time-bounded analysis of the feedback loop between the \ac{NN} and its physical environment~\cite{Forets2019JuliaReachReachability,Schilling2021,tran2019star,tran2020neural,9093970,IvanovACM21,Ivanov2021,Huang2019,Fan2020,Dutta2019,Sidrane2021,Akintunde2022}.
Unfortunately, a safety guarantee that comes with a time-bound (measured in seconds rather than minutes or hours) is often insufficient when it comes to deploying safety-critical \acp{NNCS} in the real world.
For example, the safety of an adaptive cruise control system must be independent of the trip length.
Finally, another line of work explored techniques for learning and then verifying \emph{approximations of} barrier certificates for infinite-time guarantees~\cite{DBLP:conf/ijcai/BacciG021,DBLP:conf/nips/LechnerZCH21,DBLP:conf/ciss/ChenFMPP21,DBLP:conf/nips/ChangRG19,DBLP:conf/corl/DawsonQGF21,DBLP:journals/corr/abs-2301-11683}.
For \emph{continuous-time}, verification has not been scaled beyond simple linear control functions~\cite[Appendix]{DBLP:conf/nips/ChangRG19} as it requires \ac{OLNNV} w.r.t. nonlinear specifications, which is a notoriously neglected topic~\cite{DBLP:conf/corl/DawsonQGF21,DBLP:journals/trob/DawsonGF23}.

\looseness=-1
As an alternative to the three outlined approaches, we propose to verify \acp{NNCS} based on the rigorous mathematical foundations of \acf{dL}.
\ac{dL} is a program logic allowing the proof of infinite-time safety for abstract, nondeterministic control strategies (often called control envelopes).
Due to its expressiveness and its powerful proof calculus, \ac{dL} even allows the derivation of such guarantees for continuous-time systems or systems whose differential equations have no closed-form solution.
By grounding our verification approach in \ac{dL}, we can reuse safety results from the control theory literature for NN verification -- especially for cases where characterizations of safe behavior and controllable/invariant regions are known (e.g. airborne collision avoidance~\cite{Julian2019a}).
How this knowledge can be reused is a non-trivial question:
While \ac{dL} is an excellent basis for reasoning about symbolic control strategies, the numerical/subsymbolic reasoning of NNs at their scale is far beyond the intended purpose of \ac{dL}'s proof calculus.
Conversely, open/closed-loop NNV tools and barrier certificates lack the infinite-time and exact reasoning available within \ac{dL}.
This work demonstrates how open-loop NNV can be combined with \ac{dL} reasoning to combine their strengths while canceling out their weaknesses.
Consequently, by relying on results from the control theory literature, we prove infinite-time safety guarantees for \acp{NNCS} that are not provable through either technique alone.

\begin{figure}[t]%
    \centering
    \resizebox{!}{3cm}{
    \begin{tikzpicture}[node distance=1cm]
        \tikzstyle{io} = [minimum height=0.6cm, text centered]
    	\tikzstyle{process} = [rectangle, minimum width=1.5cm, minimum height=0.6cm, text centered, draw=black]
    	\tikzstyle{arrow} = [thick,->,>=stealth]

        \node [rectangle,fill=black!5!white,minimum width=14.5cm,minimum height=1.7cm,anchor=north west] at (0cm,0.2cm) {};
        \node [rectangle,fill=c1!50!white,minimum width=14.5cm,minimum height=1.1cm,anchor=north west] at (0cm,-1.5cm) {};
        \node [rectangle,fill=c1,minimum width=14.5cm,minimum height=1.3cm,anchor=north west] at (0cm,-2.6cm) {};

        \node (envelope) at (0.1cm,0cm) [io,anchor=north west] {Control Envelope};

        \node (keymaerax) [process,right=of envelope,xshift=-0.6cm] {\ac{dL} model (proven in Ke{\kern-0.75ptY}maera X)};

        \node (esafe) [io, right=of keymaerax,color=c11,xshift=5.35cm] {\large Envelope safe};

        \node (modelplex) [process,below=of keymaerax.north west,anchor=north west,yshift=0.2cm] {ModelPlex};

        \node (versaille) [process,below=of modelplex.north west,anchor=north west] {Open-Loop NNV query};

        \node (nn) [io,right=of versaille.east, anchor=west] {NN $g$};

        \node (mosaic1) [process,right=of versaille,yshift=-1.2cm,xshift=-3.75cm,align=left] {Approximation \&\\DPLL(T) generalization}; %

        \node (mosaic2) [process,right=of mosaic1,xshift=-0.5cm] {Open-Loop NNV};

        \node (mosaic3) [process,right=of mosaic2,align=left,xshift=-0.5cm] {SMT reasoning\\(for completeness)};

        \node (nncssafe) [io,right=of versaille.east,below=of esafe.west,anchor=west,yshift=-0.8cm,color=c11, draw=c11,double] {\large NNCS safe};

        \node (nnprop) [io,right=of mosaic3.east,below=of nncssafe.west,anchor=west,color=c11,yshift=-0.2cm] {\large NN property holds};

        \draw [arrow] (envelope) -- (keymaerax);
        \draw [arrow] (envelope.east) -- (modelplex.west);
        \draw [arrow] (keymaerax) -- (esafe);
        \draw [arrow] ([xshift=1.9cm]keymaerax.south west) -- ([xshift=1.9cm]versaille.north west);
        \draw [arrow] ([xshift=1cm]modelplex.south west) -- ([xshift=1cm]versaille.north west);
        \draw [arrow] ([xshift=0.5cm]versaille.south west) |- (mosaic1.west);

        \draw [arrow] (mosaic1) -- (mosaic2);

        \draw [arrow] (mosaic2) -- (mosaic3);

        \draw [arrow] (mosaic3) -- (nnprop);

        \draw [arrow] (nn) -- (versaille);

        \draw [-{Implies},double,line width=0.4mm,color=c11] ([xshift=0.5cm]esafe.south west) -- ([xshift=0.5cm]nncssafe.north west);
        \draw [-{Implies},double,line width=0.4mm,color=c11] ([xshift=0.5cm]nnprop.north west) -- ([xshift=0.5cm]nncssafe.south west);

        \draw [-] (0cm,-1.5cm) -- ++(14.5cm,0cm);

        \node [io,anchor=south west,align=left] at (0cm,-1.5cm) {Preliminaries};

        \node [io,anchor=north west,align=left] at (0cm,-1.5cm) {\Cref{sec:safe_controller_implementations}:};

        \node [io,anchor=south west,align=left] at (0cm,-2.6cm) {\large VerSAILLE};

        \draw [-] (0cm,-2.6cm) -- ++(14.5cm,0cm);
        \node [io,anchor=north west,align=left] at (0cm,-2.6cm) {\Cref{sec:nnnv}:};

        \node [io,anchor=south west,align=left] at (0cm,-3.6cm) {\large Mosaic};

        \draw [arrow,dashed,color=black!80] (nn) -- node[anchor=south,color=black!80] {provide semantics for} (nncssafe);
        \draw [dashed,color=black!80] ([yshift=-0.2cm]keymaerax.east) -| ([xshift=1.1cm]nn.east);
    \end{tikzpicture}
}
    \caption{\chThreeFancyName{} reflects a proof of a control envelope in an \ac{NN} to verify infinite-time safety of an \ac{NNCS} from mere \ac{OLNNV} properties.
    \chFourFancyName{} \emph{completely} lifts off-the-shelf \ac{OLNNV} tools to polynomial arithmetic by combining approximation with judicious SMT reasoning.}
    \label{fig:overview}
\end{figure}
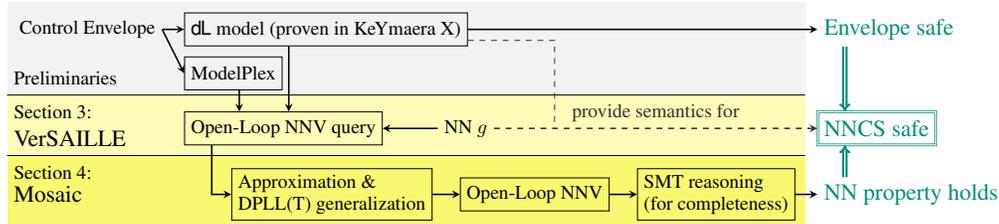
\paragraph{Overview.}
\looseness=-1
This paper alleviates the curse of \ac{NNCS} safety. %
As shown in \Cref{fig:overview}, our work integrates the deductive approach of \ac{dL} with techniques for \ac{OLNNV}.
To apply our approach, we assume that an abstract, nondeterministic control envelope has already been verified in \ac{dL} (via \ac{keymaerax}~\cite{Fulton2015}, synthesized via CESAR~\cite{DBLP:conf/tacas/KabraLMP24} or from the literature).
Based on the \ac{dL} safety result,
\chThreeFancyName{} (\textbf{Ver}ifiably \textbf{S}afe \textbf{AI} via \textbf{L}ogically \textbf{L}inked \textbf{E}nvelopes; \Cref{sec:safe_controller_implementations}) derives a verification query for \ac{OLNNV} by instrumenting \ac{modelplex}~\cite{Mitsch2016}.
By reflecting the \ac{NN} through a \emph{mirror program} in \ac{dL}, we can then reason about an \ac{NNCS} in- and outside the \ac{dL} calculus simultaneously.
The verification of an \ac{OLNNV} query generated by \chThreeFancyName{} yields a \ac{dL} proof that the \ac{NNCS} refines a safe control envelope
---implying that the infinite-time safety guarantee carries over to the \ac{NNCS}.

Due to the inherent nonlinearities of hybrid systems, the generated \ac{OLNNV} queries often contain polynomial arithmetic and the formulas have arbitrary logical structure.
Hence for such queries, we also introduce \chFourFancyName{}---an efficient, \emph{sound and complete} framework for \ac{OLNNV} tools.
The approach \emph{lifts} complete off-the-shelf \ac{OLNNV} tools for linear constraints to \emph{polynomial} constraints with \emph{arbitrary} logical structure.
To this end, we combine approximation 
with a generalization of DPLL(T) that makes the logical decomposition \emph{efficiently} applicable to
\ac{NN} verification 
(whereas ``classical'' DPLL(T) would become prohibitively inefficient).
At the same time, \chFourFancyName{} retains completeness by generalizing counterexamples into locally affine regions (\Cref{sec:nnnv}).
In summary, \chThreeFancyName{} provides rigorous semantics and a formal proof of infinite-time safety, while \chFourFancyName{} makes our approach practically applicable to real-world systems (see also \Cref{sec:evaluation}).

\paragraph{Contribution.}
Our contribution has three parts.
While our implementation (\ac{SNNT}) supports \acp{NN} most commonly analyzed by \ac{OLNNV} (\relu{} \acp{NN}), our theoretical contribution (\chThreeFancyName{}) reaches far beyond this and lays the foundations for analyzing a wide range of \ac{NNCS} architectures:
\begin{itemize}
    \item We present \chThreeFancyName{},
    the formal foundation that, for the first time, enables a \emph{sound proof of infinite-time safety for a concrete \ac{NNCS}} by reusing safety proofs from control-theory literature (in the form of \ac{dL} models).
    \chThreeFancyName{} supports a large class of feed-forward \acp{NN} (any \ac{NN} with piece-wise Noetherian activation functions; see \Cref{sec:background}).
    \item
    We introduce \chFourFancyName{}, a framework for the efficient, \emph{sound and complete} verification of properties in \emph{polynomial} real arithmetic on piece-wise linear \acp{NN}.
    Unlike other \ac{NN} verifiers,
    \chFourFancyName{} furthermore supports constraints with \emph{arbitrary} propositional structure.
    \chFourFancyName{} combines approximation techniques, a generalization of DPLL(T), and judicious SMT reasoning to \emph{lift} sound and complete linear-constraint \ac{OLNNV} tools to efficient, \emph{sound and complete} polynomial constraint verification.
    \chFourFancyName{} can exhaustively characterize unsafe state space regions (useful for retraining or the generation of fallback controllers).
    \item We implement \chFourFancyName{} for \relu{} \acp{NN} in the tool \ac{SNNT} and demonstrate our approach on three case studies from adaptive cruise control, airborne collision avoidance ACAS~X and steering under uncertainty.
    We show that, unlike \ac{SNNT}, State-of-the-Art \ac{CLNNV} tools cannot provide infinite-time guarantees due to approximation errors.
\end{itemize}

\begin{wrapfigure}[10]{r}{0.35\textwidth}
    \centering
    \resizebox{!}{0.9cm}{
    \begin{tikzpicture}%
    \pic at (0,0) {car={}};
    
    \pic at (5.,0) {car={}};
    
    \draw [<->] (2.1,0.3) --node[anchor=south, pos=0.5] {$\mathlarger{\mathlarger{\mathlarger{\accRelPos' = \accRelVel}}}$}node[anchor=north, pos=0.5] {$\mathlarger{\mathlarger{\mathlarger{\accRelVel' = \accRelAcc}}}$} (4.9,0.3);

    \draw [->] (7.1,0.3) --node[anchor=south, pos=0.5,yshift=0.1cm] {$\mathlarger{\mathlarger{\mathlarger{\accVelConst}}}$} (8.05,0.3);
    
    \end{tikzpicture}
    }
    \caption{Adaptive Cruise Control: The front-car (right) drives with constant $\accVelConst$. The ego-car approaches with relative velocity $\accRelVel$ (controlled via $\accRelAcc$) from $\accRelPos$.}
    \label{fig:running:acc_idea}
\end{wrapfigure}

\paragraph{Running Example.}
\looseness=-1
The common NNCS safety benchmark of \acl{ACC}~\cite{Brosowsky2021,Fulton2018,ARCH21:ARCH_COMP21_Category_Report_Artificial} will serve as running example to demonstrate the introduced concepts.
Consider an ego-car following a front-car on a 1-D lane as shown in \Cref{fig:running:acc_idea}.
The front-car drives with constant velocity $\accVelConst$ while the ego-car (at position $\accRelPos$ behind the front-car) approaches with arbitrary initial (relative) velocity $\accRelVel$ which is adjusted through the ego-car's acceleration $\accRelAcc$.
The primary objective is to ensure the ego-car never crashes into the front car (i.e. $\accRelPos>0$), however there may be secondary objectives (e.g. energy efficiency) which are learned by an \ac{NNCS}.
\noindent
We demonstrate how a nondeterministic, high-level acceleration strategy (i.e. a safe envelope) can be modeled and verified in \ac{dL} (\Cref{sec:background}), 
how \chThreeFancyName{} derives NN properties (\Cref{sec:safe_controller_implementations}) and how such polynomial properties can be verified on a given \ac{NN} (\Cref{sec:nnnv}).
No techniques are specific to the running example, but all are applicable to a wide range of \acp{NNCS}---as demonstrated by our evaluation (\Cref{sec:evaluation}).

\section{Background}
\label{sec:background}
\looseness=-1
\begin{wraptable}[11]{r}{0.4\textwidth}%
\vspace*{-1em}
\begin{tabularx}{\textwidth}{l|X}
    Program & Semantics\\\hline
    $x \coloneqq e$ & Assign term $e$ to $x$ \\
    $x \coloneqq *$ & Nondet. assign to $x$ \\
    $?Q$ & Test of formula $Q$\\
    $x'=t\&Q$ & Evolve $x$ along the diff. equation within $Q$\\
    $\alpha \cup \beta$ & Nondet. choice\\
    $\alpha;\beta$ & Sequential composition\\
    $\left(\alpha\right)^*$ & Nondet. loop
\end{tabularx}
\caption{Program primitives of \ac{dL}}
\label{tab:dl_program_semantics}
\end{wraptable}
\looseness=-1
We review \ac{dL}, \acp{NN} and \ac{NN} verification.
$\formulaset{\reals}$ (resp. $\formulaset{\linreals}$) is the set of polynomial (resp. linear) real arithmetic first-order logic formulas. 
$\formulaset{\noetherianReals}$ extends $\formulaset{\reals}$ with \emph{Noetherian functions}~\cite{Platzer2020} $h_1,\dots,h_r$.
A \emph{Noetherian chain} is a sequence of real analytic functions $h_1,\dots,h_q$ s.t. all partial derivatives of all $h_j$ can be written as a polynomial $\frac{\partial h_j\mleft(y\mright)}{\partial y_i}\mleft(y\mright)=p_{ij}\mleft(y,h_1\mleft(y\mright),\dots,h_q\mleft(y\mright)\mright)$.
\emph{Noetherian functions} are representable as a polynomial over functions in a Noetherian chain.
Most %
activation functions used in \acp{NN} are Noetherian
(see \Cref{subsec:activation_table})
Atoms of a formula $\zeta$ are denoted as $\atomsof{\zeta}$ and its variables as $\varsofformula{\zeta}$.

\subsection{Differential Dynamic Logic}
\label{subsec:background:dl}
\looseness=-1
\Acf{dL}~\cite{Platzer2017a,Platzer2020,DBLP:conf/lics/Platzer12a,DBLP:journals/jar/Platzer08} is a first-order multi-modal logic in which the modality is parameterized with a \emph{hybrid program} describing a (discrete or continuous) state transition (see also \Cref{apx:additional_background}).
Thus, \ac{dL} formulas are evaluated in a state $\nu$ (if $\nu$ satisfies a formula $\psi$ we denote this as $\nu \vDash \psi$).
Hybrid Programs are constructed from the primitives in \Cref{tab:dl_program_semantics} and are first-class citizens of the logic (see example below).
\ac{dL} is tailored to the analysis of (time discrete and time continuous) hybrid systems and supports the analysis of differential equations.
Through its invariance reasoning capabilities, \ac{dL} allows us to prove the infinite-time safety of control envelopes w.r.t. a system's dynamics---even for cases where the dynamics' differential equations have no closed-form solution.
There is a large body of research on the verification of real-world control envelopes using \ac{dL} (e.g. ACAS X~\cite{Jeannin2017}). %
In \ac{dL}, the formula $\left[\alpha\right] \psi$ expresses that $\psi$ is always satisfied after the execution of $\alpha$ and $\left\langle \alpha\right\rangle \psi$ that there exists a state satisfying $\psi$ after the execution of $\alpha$.
\ac{dL} comes with a sound and relatively complete proof calculus~\cite{DBLP:journals/jar/Platzer08,Platzer2017a,Platzer2020} and an interactive theorem prover \acf{keymaerax}~\cite{Fulton2015}.

\paragraph{Running Example.}
\looseness=-1
We model our running example as a hybrid program in \ac{dL} with differential equations describing the evolution of $\accRelPos,\accRelVel,\accRelAcc$ along with a \emph{control envelope}, i.e. an abstract acceleration strategy, $\alpha_{\text{ctrl}}$ that runs at least every $T$ seconds while the overall system may run for arbitrarily many iterations (modeled by a nondeterministic loop).
Given suitable initial conditions (\text{accInit}), our objective is to prove the absence of collisions ($\accRelPos >0$).
This can be achieved by proving \Cref{eq:acc_contract}
where the place-holder $\alpha_{\text{ctrl}}$ determines the relative acceleration $-B \leq \accRelAcc \leq A$ ($A$ and $-B$ are resp. maximal acceleration/braking).
\begin{equation}
\label[formula]{eq:acc_contract}
\underbrace{\text{accInit}}_{\text{initial conditions}}
\implies
\Big[
\big( \underbrace{\alpha_{\text{ctrl}}}_{\text{controller}};c\coloneqq 0;\underbrace{(\accRelPos'=\accRelVel,\accRelVel'=-\accRelAcc,c'=1\,\&\,c \leq T)}_{\text{environment}} \big)^*
\Big] \underbrace{\accRelPos > 0}_{\text{safety constraint}}
\end{equation}
Our control envelope $\alpha_{\text{ctrl}}$ allows braking with $-B$ and an acceleration of $0$ or another value if the constraint $\text{accCtrl}_0$ or resp. $\text{accCtrl}_1$ is satisfied
(the concrete constraints are found in \Cref{apx:acc}):
\[
\alpha_{\text{ctrl}} ~\equiv~
\accRelAcc \coloneqq -B \cup
\left(\accRelAcc \coloneqq 0; ?\left(\text{accCtrl}_0\right)\right) \cup
\left(\accRelAcc \coloneqq *; ?\left(-B \leq \accRelAcc \leq A \land \text{accCtrl}_1\right)\right)
\]
The envelope is nondeterministic:
While always braking with $-B$ would be safe, an \ac{NNCS} may \emph{learn} to balance braking with secondary objectives (e.g. minimal acceleration or not falling behind).
A proof for \Cref{eq:acc_contract} in \ac{keymaerax} uses the loop invariant \text{accInv} (see \Cref{apx:acc}).
Automation of \ac{dL} proofs as well as control envelope and invariant synthesis is discussed in the literature~\cite{Platzer2017a,Platzer2020,SogokonMTCP22,DBLP:conf/tacas/KabraLMP24}.

\paragraph{ModelPlex.}
\looseness=-1
Many \ac{CPS} safety properties can be formulated through a \ac{dL} formula $\phi \implies \left[ \left( \alpha_{\text{ctrl}}; \alpha_{\text{plant}} \right)^* \right] \psi$ where $\phi$ describes initial conditions, and $\psi$ describes the safety criterion 
to be guaranteed when following the control-plant loop.
\Ac{modelplex} shielding~\cite{Mitsch2016} allows the synthesis of correct-by-construction \emph{controller monitor formulas} $\controllerFormula$ that ensure an implementation's runtime behavior matches the envelope $\alpha_{\text{ctrl}}$.
Interpreting an implementation's action as a state transition and denoting the old state's variables as $x_i$ and the new state's variables as $x_i^+$, $\controllerFormula$ tells us which combinations of $x_i$ and $x_i^+$ (i.e. which state transitions) are admissible w.r.t. $\alpha_{\text{ctrl}}$ (see \Cref{def:correct_monitor} in \Cref{apx:additional_background}).

\paragraph{Running Example.}
We can apply \ac{modelplex} on the proven contract in \Cref{eq:acc_contract} to synthesize a monitor for $\alpha_{\text{ctrl}}$.
For this simple scenario, the resulting controller monitor formula \footnotemark{} tells us what new acceleration value $\accRelAcc^+$ may be chosen given the current values of $\accRelPos,\accRelVel$:
\begin{equation}
\label[formula]{eq:acc_controller_formula}
\text{accCtrlFml}
\equiv
\accRelAcc^+=B \lor \left(\accRelAcc^+=0 \land \text{accCtrl}_0^+ \right) \lor \left(-B \leq \accRelAcc^+\leq A \land \text{accCtrl}_1^+ \right).
\end{equation}
Here, $\text{accCtrl}_i^+$ is the constraint $\text{accCtrl}_i$ with $\accRelAcc$ replaced by $\accRelAcc^+$.
\footnotetext{The formula furthermore keeps the new $\accRelPos^+$ and $\accRelVel^+$ values unchanged ($\accRelPos=\accRelPos^+ \land \accRelVel = \accRelVel^+$), elided.}
Given an action of a concrete controller implementation that changes $\accRelAcc$ to $\accRelAcc^+$,
\Cref{eq:acc_controller_formula} tells us if this action is in accordance with  the strategy modeled by $\alpha_{\text{ctrl}}$, i.e. whether we have a proof of safety for the given state transition.

\subsection{Neural Network Verification}
\label{subsec:background:nnv}
\looseness=-1
This work focuses on feed-forward neural networks typically encountered in \acp{NNCS}.
The behavior of an \ac{NN} with input dimension $I \in \mathbb{N}$ and output dimension $O \in \mathbb{N}$ can be summarized as a function $g: \reals^I \to \reals^O$.
The white-box behavior is described by a sequence of $L \in \mathbb{N}$ hidden layers with dimensions $n^{(k)}$ that iteratively transform an input vector $x^{(0)} \in \reals^I$ into an output vector $x^{(L)}\in\reals^O$.
The computation of layer $k$ is given by 
$
    x^{(k+1)} = f^{(k)}\left( W^{(k)} x^{(k)} + b^{(k)} \right)
$,
i.e. an affine transformation (with $\formulaset{\noetherianReals}$ representable numbers) followed by a nonlinear activation function $f^{(k)}$. %
We distinguish different classes of \acp{NN}. %
To this end, we decompose the activation functions $f^{(k)}$ as $f^{(k)}\mleft(x\mright) = \sum_{i=1}^s \mathds{1}_{q_i}\mleft(x\mright)f_i\mleft(x\mright)$ where $f_i$ are functions, $q_i$ are formulas over $n^{(k)}$ variables and $\mathds{1}_{q_i}\mleft(x\mright)$ is $1$ iff $q_i\mleft(x\mright)$ is true and $0$ otherwise.
\Cref{tab:nn_classes} summarizes which results are applicable to which \ac{NN} class.
\begin{table}[t]
    \centering
    \begin{tabular}{l|l|l|l|c|c}
        \ac{NN} class & All $f_i$ & All $q_i$ in & Applicable & Decidable & Example\\\hline
        piece-wise Noetherian & Noetherian &  $\formulaset{\noetherianReals}$ & \Cref{sec:safe_controller_implementations} & & Sigmoid \\\hline
        piece-wise Polynomial & Polynomial & $\formulaset{\reals}$ & \Cref{sec:safe_controller_implementations}  & $\checkmark$ & $x^2$\\\hline
        piece-wise Linear & Linear & $\formulaset{\linreals}$ & \Cref{sec:safe_controller_implementations,sec:nnnv}  & $\checkmark$ & MaxPool \\\hline
        ReLU & \multicolumn{2}{l|}{$f^{(k)}\left(x\right)=\max\left(0,x\right)$} & \Cref{sec:safe_controller_implementations,sec:nnnv,sec:evaluation}  & $\checkmark$ & $\relu$\\
    \end{tabular}
    \caption{Applicability of our results on \ac{NNCS} safety and decidability of the safety verification problem: Each class is a subset of its predecessor in the table.}
    \label{tab:nn_classes}
\end{table}
Each class is a subset of the previous class, i.e. our theory (\Cref{sec:safe_controller_implementations}) is widely applicable while our implementation (\Cref{sec:evaluation}) focuses on the most common \acp{NN}.
\Ac{OLNNV} tools  %
analyze \acp{NN} in order to verify properties on input-output relations.
Their common functionality is reflected in the VNNLIB standard~\cite{VNNLIB,DBLP:journals/sttt/BrixMBJL23}.
Off-the-shelf tools are limited to linear, normalized queries (\Cref{def:olnnv_interface}).
To address this challenge, we present a lifting procedure for the verification of generic (i.e. nonlinear and not normalized) \ac{OLNNV} queries over polynomial real arithmetic (\Cref{sec:nnnv}).
\begin{definition}[Open-Loop NNV Query]
\label{def:olnnv_interface}
An \emph{\ac{OLNNV} query} consists of a formula $p \in \formulaset{\reals}$ over free input variables $Z=\left\{z_1,\dots,z_I\right\}$ and output variables $x_1^+,\dots,x_O^+$.
We call $p$ \emph{normalized} iff 
$p$ is a conjunction of some input constraints and a disjunctive normal form over mixed/output constraints,
i.e. it has the structure
$
\bigwedge_j p_{1,j}\mleft(z_1,\dots,z_I\mright) \land \bigvee_{i\geq 2} \bigwedge_j p_{i,j}\mleft(z_1,\dots,z_I,x_1^+,\dots,x_O^+\mright),
$
where all $p_{i,j}$ are atomic real arithmetic formulas
and all $p_{1,j}$ only contain the free variables from $Z$.
We call a query \emph{linear} iff $p \in \formulaset{\linreals}$ and call it \emph{nonlinear} otherwise.
\end{definition}

\section{VerSAILLE: Verifiably Safe AI via Logically Linked Envelopes}
\label{sec:safe_controller_implementations}
\begin{textAtEnd}
\subsection{Proofs for \Cref{sec:safe_controller_implementations}}
\label{apx:proofs:versaille}
\looseness=-1
This subsection proves the soundness of the approach outlined in \Cref{sec:safe_controller_implementations}.
This result is achieved by proving that a concrete \ac{NNCS} refines~\cite{Loos2016,prebet2024uniform} an abstract hybrid program.
The approach can either be applied by first proving safety for a suitable \ac{dL} model or by reusing results from the \ac{dL} literature 
(both demonstrated  \Cref{sec:evaluation,apx:evaluation}).
In our proofs for \Cref{sec:safe_controller_implementations} we show a slightly more general version of the result from \Cref{thm:overview:soundness} (see \Cref{thm:overview:validtiy_safety}).
To this end, we formally define a controller description as follows:
\begin{definition}[Controller Description]
\label{def:overview:controller_implementation}
Let $\alpha_{\text{ctrl}}$ be some hybrid program with free variables $\freevars{\alpha_{\text{ctrl}}}=\left\{z_1,\dots,z_m\right\}$ and bound variables $\boundedvars{\alpha_{\text{ctrl}}}=\left\{x_1,\dots,x_n\right\}$,
which overlap if a variable is both read and written to.
A \emph{controller description} $\controllerDescriptionFml \in \formulaset{\noetherianReals}$ for $\alpha_{\text{ctrl}}$ is a formula with free variables $\freevars{\alpha_{\text{ctrl}}} \cup \left\{x^+ \mid x \in \boundedvars{\alpha_{\text{ctrl}}}\right\}$ such that the following formula is valid:
$
\forall\,z_1 \dots z_m\ \exists\,x_1^+ \dots x_n^+\ \controllerDescriptionFml
$.
\end{definition}
Based on Controller Descriptions we can then show that such controller descriptions exist for all piece-wise Noetherian \acp{NN}:
\begin{lemma}[Existence of $\nnControllerDescriptionFml$]
\label{lemma:overview:chi_g_controller_description}
Let $g:\mathbb{R}^I \to \mathbb{R}^O$ be a piece-wise Noetherian \ac{NN}. 
There exists a \emph{controller description} $\nnControllerDescriptionFml\in\formulaset{\noetherianReals}$
with \emph{input variables} $z_1,\dots,z_I$
and \emph{output variables} $x_1^+,\dots,x_O^+$
s.t.
$
\big(\nu\mleft(x_1^+\mright),\dots,\nu\mleft(x_O^+\mright)\big)
=
g\big(\nu\mleft(z_1\mright),\dots,\nu\mleft(z_I\mright)\big)
$
iff
$\nu \vDash \nnControllerDescriptionFml$,
i.e. $\nnControllerDescriptionFml$'s satisfying assignments correspond exactly to $g$'s in-out relation.
\end{lemma}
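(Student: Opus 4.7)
The plan is to build $\nnControllerDescriptionFml$ compositionally, mirroring the layer-by-layer computation of $g$ inside $\formulaset{\noetherianReals}$. I would introduce fresh intermediate variables for the pre- and post-activation signals of each hidden layer and existentially quantify them out. Writing $y^{(0)}_i$ for $z_i$ and $y^{(L)}_i$ for $x_i^+$ makes the external interface of the resulting formula coincide with the input/output variables of $g$, and closure of $\formulaset{\noetherianReals}$ under conjunction, disjunction, and existential quantification keeps the construction in the required fragment.

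Concretely, for each layer $k$ I would first encode the affine step by the conjunction of equalities $u^{(k+1)}_j = \sum_i W^{(k)}_{ji} y^{(k)}_i + b^{(k)}_j$. These lie in $\formulaset{\noetherianReals}$ because the entries of $W^{(k)}$ and $b^{(k)}$ are $\formulaset{\noetherianReals}$-representable by assumption on the NN. For the activation $f^{(k)} = \sum_{i=1}^s \mathds{1}_{q_i} f_i$ I would introduce fresh indicator variables $c^{(k)}_i$ pinned down by
\[
\bigwedge_{i=1}^s \bigl( \bigl( q_i\mleft(u^{(k+1)}\mright) \land c^{(k)}_i = 1 \bigr) \lor \bigl( \lnot q_i\mleft(u^{(k+1)}\mright) \land c^{(k)}_i = 0 \bigr) \bigr)
\]
and then assert $y^{(k+1)} = \sum_{i=1}^s c^{(k)}_i \, f_i\mleft(u^{(k+1)}\mright)$. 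Each $q_i$ is in $\formulaset{\noetherianReals}$ and each $f_i$ is a Noetherian term, so the whole layer encoding stays in $\formulaset{\noetherianReals}$. Taking the conjunction over all layers and existentially quantifying the $u^{(k)}$, the interior $y^{(k)}$, and the $c^{(k)}_i$ yields $\nnControllerDescriptionFml$.

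Correctness is then a routine induction on layer index. For the forward direction, a satisfying $\nu$ forces each $u^{(k+1)}_j$ to equal the true pre-activation, each $c^{(k)}_i$ to equal $\mathds{1}_{q_i}(u^{(k+1)})$, and hence $y^{(k+1)}$ to coincide with $f^{(k)}$ applied to the pre-activation; composing layers yields $(\nu(x_1^+),\dots,\nu(x_O^+)) = g(\nu(z_1),\dots,\nu(z_I))$. For the converse, the values actually produced by running $g$ on the $\nu(z_i)$ witness every existential and satisfy every conjunct by construction. Totality, $\forall z_1 \dots z_I\ \exists x_1^+ \dots x_O^+\ \nnControllerDescriptionFml$, then follows immediately because the same witness construction applies at any inputs, using functionality of $g$.

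The main obstacle I anticipate is choosing the right encoding of the piece-wise decomposition: the natural ``one disjunct per piece'' encoding is only faithful when the $q_i$ are pairwise disjoint, whereas the hypothesis merely posits a valid sum-of-indicator representation of $f^{(k)}$. Using explicit indicator variables $c^{(k)}_i$ and an arithmetic sum sidesteps this subtlety while keeping the formula in $\formulaset{\noetherianReals}$, after which the remainder of the proof amounts to layer-indexed bookkeeping.
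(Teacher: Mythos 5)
Your proposal is correct and takes essentially the same route as the paper: the paper also encodes each affine step as a term, handles each activation output $\sum_i \mathds{1}_{q_i}f_i$ by introducing fresh variables constrained via $\left(q_i \land v_i = f_i\mleft(x\mright)\right) \lor \left(\neg q_i \land v_i = 0\right)$ (semantically the same as your $c_i$-indicator encoding, since $v_i = c_i f_i$), sums them, and existentially quantifies all intermediates, with totality argued exactly as you do.
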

\begin{proof}
Previous work showed how to encode piece-wise linear \acp{NN} through real arithmetic SMT formulas (see e.g.~\cite{Papusha2020,Eleftheriadis2022}).
Each output dimension of an affine transformation can be directly encoded as a real arithmetic term.
For a given output-dimension of a piece-wise Noetherian activation function we have to encode a term $\sum_{i=1}^s \mathds{1}_{q_i}\mleft(x\mright) f_i\mleft(x\mright)$ with Noetherian functions $f_i$ and predicates $q_i$ over real arithmetic with Noetherian functions.
To this end, we can introduce fresh variables $v_1,\dots,v_s$ where we assert the following formula for each $v_i$:
\[
\left(q_i \land v_i = f_i\mleft(x\mright)\right) \lor \left(\neg q_i \land v_i = 0\right).
\]
The activation function's result then is the sum $\sum_{i=1}^s v_i$.
By existentially quantifying all intermediate variables of such encodings, we obtain a real arithmetic formula that only contains input and output variables and satisfies the requirements of the Lemma.

If we assign $x_1^+ \dots x_O^+$ to the values provided by $g\mleft(z_1,\dots,z_I\mright)$ for a given $z_1\dots z_I$, the formula $\nnControllerDescriptionFml$ is satisfied.
Therefore, $\forall z_1\dots z_I \exists x_1^+ \dots x_O^+ \nnControllerDescriptionFml.$ is valid.
\end{proof}
When replacing $\alpha_{ctrl}$ by an \ac{NN} $g$,
free and bound variables of $\alpha_{ctrl}$ must resp. match to input and output variables of $g$.
Based of a description $\nnControllerDescriptionFml$, we then construct a hybrid program that behaves as described by $\nnControllerDescriptionFml$:
\looseness=-1
We now formalize the idea of modeling a given \ac{NN} $g$ through a hybrid program which behaves identically to $g$.%
We show that such \emph{nondeterministic mirrors exist} for all piece-wise Noetherian \ac{NN} $g$:
\begin{definition}[Nondeterministic Mirror for $\nnControllerDescriptionFml$]
\label{def:overview:nondet_mirror}
\looseness=-1
Let $\alpha_{\text{ctrl}}$ be some hybrid program with bound variables $\boundedvars{\alpha_{\text{ctrl}}} = \left\{x_1, \dots, x_n\right\}$.
For a controller description $\nnControllerDescriptionFml$ with variables matching to $\alpha_{\text{ctrl}}$, $\nnControllerDescriptionFml$'s \emph{nondeterministic mirror} $\alpha_{\text{refl}}$ is defined as:\\
$
    \alpha_{\text{refl}}\mleft(\nnControllerDescriptionFml\mright) ~\equiv~ \left(x_1^+ \coloneqq *;\dots;x_n^+ \coloneqq *; ?\mleft(\nnControllerDescriptionFml\mright); x_1 \coloneqq x_1^+;\dots;x_n\coloneqq x_n^+\right)
$
\end{definition}
\begin{lemma}[Existence of $\nnProgramFml$]
\label[lemma]{lemma:nn_program_existence}
\looseness=-1
For any piece-wise Noetherian \ac{NN} $g:\reals^I \to \reals^O$ there exists a nondeterministic mirror $\nnProgramFml$ that behaves identically to $g$. \\
Formally, $\nnProgramFml$ only has free variables $\overline{z}$ and bound variables $\overline{x}$ and for any state transition $\left(\nu,\mu\right) \in \dLinterpretationapp{\nnProgramFml}$: %
$\mu\left(\overline{x}\right) = g\left(\nu\left(\overline{z}\right)\right)$ 
($\overline{x}$ and $\overline{z}$ vectors of dimension $I$ and $O$)
\end{lemma}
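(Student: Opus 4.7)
The plan is to combine Lemma~\ref{lemma:overview:chi_g_controller_description} with Definition~\ref{def:overview:nondet_mirror}. I would first apply Lemma~\ref{lemma:overview:chi_g_controller_description} to the given $g$ to obtain a controller description $\nnControllerDescriptionFml \in \formulaset{\noetherianReals}$ whose satisfying assignments over $z_1,\dots,z_I$ and $x_1^+,\dots,x_O^+$ are precisely the input-output pairs of $g$. I then set $\nnProgramFml \equiv \alpha_{\text{refl}}\mleft(\nnControllerDescriptionFml\mright)$ as in Definition~\ref{def:overview:nondet_mirror} and verify the two claims of the lemma.

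For the free/bound variable claim, I would unfold $\nnProgramFml$ to the sequence $x_1^+ \coloneqq *;\dots;x_O^+ \coloneqq *;\ ?\nnControllerDescriptionFml;\ x_1 \coloneqq x_1^+;\dots; x_O \coloneqq x_O^+$ and argue that the $x_i^+$ are auxiliary: they are assigned (hence bound) before any read, and they occur free only inside the test. The $x_i$ only occur on the left of their final assignments, so they are bound but not free. Since the only free variables of $\nnControllerDescriptionFml$ that are not among the $x_i^+$ are those of $\overline{z}$, the externally visible free variables are exactly $\overline{z}$ and the externally bound variables are exactly $\overline{x}$.

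For the semantic claim, I would trace the transition relation of the sequential composition. Starting from a state $\nu$, the nondeterministic assignments $x_i^+ \coloneqq *$ produce all intermediate states agreeing with $\nu$ on $\overline{z}$ but assigning arbitrary reals to the $x_i^+$. The test $?\nnControllerDescriptionFml$ then keeps exactly those intermediate states whose $x_i^+$ values coincide with $g\mleft(\nu\mleft(\overline{z}\mright)\mright)$, by Lemma~\ref{lemma:overview:chi_g_controller_description}. The final block $x_i \coloneqq x_i^+$ copies these values to the $x_i$, so every $\mu$ with $\mleft(\nu,\mu\mright) \in \dLinterpretationapp{\nnProgramFml}$ satisfies $\mu\mleft(\overline{x}\mright) = g\mleft(\nu\mleft(\overline{z}\mright)\mright)$. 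The validity of $\forall\,z_1 \dots z_I\ \exists\,x_1^+ \dots x_O^+\ \nnControllerDescriptionFml$ (also from Lemma~\ref{lemma:overview:chi_g_controller_description}) additionally guarantees that such a $\mu$ exists for every $\nu$, so $\nnProgramFml$ realizes $g$ on every input rather than being vacuously safe.

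The main obstacle is really bookkeeping rather than mathematical depth, but the subtlety worth emphasizing is the treatment of the $x_i^+$ as \emph{fresh} auxiliary variables. When $\nnProgramFml$ is later embedded into a larger hybrid program (as in $\safetyGuaranteeProgramNoLoop$), the $x_i^+$ must be $\alpha$-renamed to avoid collisions with plant or environment variables; otherwise the free/bound-variable partition and the pointwise identification with $g$ could break down under unintended aliasing.
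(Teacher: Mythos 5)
Your proposal is correct and follows essentially the same route as the paper: the paper's proof is exactly the one-line observation that the controller description $\nnControllerDescriptionFml$ from \Cref{lemma:overview:chi_g_controller_description} plugged into the nondeterministic mirror $\alpha_{\text{refl}}\mleft(\nnControllerDescriptionFml\mright)$ of \Cref{def:overview:nondet_mirror} yields $\nnProgramFml$. Your explicit tracing of the free/bound variables and of the transition relation (and the freshness caveat for the $x_i^+$) just spells out what the paper leaves implicit.
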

\begin{proof}
\label{proof:nn_program_existence}
Based on \Cref{lemma:overview:chi_g_controller_description} we can construct a controller description $\kappa_g \in \formulaset{\noetherianReals}$ for $g$ which we can turn into a hybrid program through the nondeterministic mirror $\alpha_{\text{refl}}\left(\kappa_g\right)$.
\end{proof}
Similarly to the more general notion of a controller description, \Cref{thm:overview:validtiy_safety} also permits a slightly more general version of a state space restriction instead of an inductive invariant.
Formally, this notion is described as a state reachability formula:
\begin{definition}[State Reachability Formula]
\label{def:complete_state_space_constraint}
A \emph{state reachability formula} $\stateFormula$ with free variables $z_1,\dots,z_m$ is \emph{complete} for the hybrid program $\left(\alpha_{\text{ctrl}};\alpha_{\text{plant}}\right)^*$ with free variables $z_1,\dots,z_m$ and initial state $\phi$ iff the following \ac{dL} formula is valid where $\left(\stateFormula\right)_{z_1 \ldots z_m}^{z_1^+ \ldots z_m^+}$ represents $\stateFormula$ with $z_i^+$ substituted for $z_i$ for all $1 \leq i \leq m$:
\begin{equation}
\label[formula]{eq:state_reachability_requirement}
\completeStateSpaceFormula.
\end{equation}
\end{definition}%
There is usually an overlap between free and bound variables, i.e. $z_1,\dots,z_m$ may contain variables later modified by the hybrid program.
Our definition requires that for any program starting in a state satisfying $\phi$, formula $\stateFormula$ is satisfied in all terminating states.
$\stateFormula$ thus overapproximates the program's reachable states.
In particular, inductive invariants %
(i.e. for $\phi \implies \left[\alpha^*\right] \psi$ a formula $\zeta$ s.t. $\phi \implies \zeta$ and  $\zeta \implies \left[\alpha\right] \zeta$)
are state reachability formulas:
\begin{lemma}[Inductive Invariants are State Reachability Formulas]
\looseness=-1
If $\zeta$ is an inductive invariant of $\phi \implies \left[\left(\alpha_{\text{ctrl}};\alpha_{\text{plant}}\right)^*\right] \psi$, $\zeta$ is a state reachability formula.
\end{lemma}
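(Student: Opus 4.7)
The plan is to show that whenever a state $\nu$ satisfies the premise of \cref{eq:state_reachability_requirement}, it also satisfies its conclusion, by invoking the standard loop induction rule of \dL{} on $\zeta$ and then relating the two renamings via the equalities $z_i = z_i^+$ extracted from the diamond modality.

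\textbf{Step 1 (Loop Induction).} From the two assumed conditions on an inductive invariant, $\phi \implies \zeta$ and $\zeta \implies \left[\alpha_{\text{ctrl}};\alpha_{\text{plant}}\right]\zeta$, the \dL{} loop invariant rule gives validity of
$
\phi \implies \left[\left(\alpha_{\text{ctrl}};\alpha_{\text{plant}}\right)^*\right]\zeta,
$
so in every state $\nu$ with $\nu \vDash \phi$, every state $\mu$ reachable from $\nu$ via $(\alpha_{\text{ctrl}};\alpha_{\text{plant}})^*$ satisfies $\mu \vDash \zeta$.

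\textbf{Step 2 (Using the Diamond).} Fix $\nu$ with $\nu \vDash \phi \land \langle(\alpha_{\text{ctrl}};\alpha_{\text{plant}})^*\rangle \bigwedge_{i=1}^m z_i = z_i^+$. The diamond yields some state $\mu$ with $(\nu,\mu) \in \dLinterpretationapp{(\alpha_{\text{ctrl}};\alpha_{\text{plant}})^*}$ and $\mu \vDash \bigwedge_{i=1}^m z_i = z_i^+$. Step~1 gives $\mu \vDash \zeta$. Since the $z_i^+$ are fresh and therefore not in $\boundedvars{(\alpha_{\text{ctrl}};\alpha_{\text{plant}})^*}$, they are unchanged along the transition, hence $\mu(z_i^+) = \nu(z_i^+)$ and together with the tested equalities $\mu(z_i) = \mu(z_i^+) = \nu(z_i^+)$ for all $i$.

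\textbf{Step 3 (Substitution Lemma).} Because $\varsofformula{\zeta} \subseteq \{z_1,\dots,z_m\}$, the truth of $\zeta$ at $\mu$ depends only on the values $\mu(z_1),\dots,\mu(z_m)$. By the standard substitution/coincidence lemma of \dL{}, $\mu \vDash \zeta$ is equivalent to $\nu \vDash \left(\zeta\right)_{z_1 \ldots z_m}^{z_1^+ \ldots z_m^+}$, because evaluating the substituted formula at $\nu$ plugs in exactly $\nu(z_i^+) = \mu(z_i)$ for each $z_i$. This yields the conclusion of \cref{eq:state_reachability_requirement}, so $\zeta$ is a state reachability formula.

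The only subtlety, and the one place a referee might want care, is Step~3: one must confirm that the substitution of $z_i^+$ for $z_i$ is admissible (no variable capture) and that the coincidence lemma legitimately moves evaluation from $\mu$ to $\nu$. This is routine given that the $z_i^+$ are chosen to be fresh w.r.t. $\zeta$ and the hybrid program, but it is the step that makes the link between the ``reached state'' view and the ``substituted formula at the initial state'' view precise.
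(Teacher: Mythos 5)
Your proof is correct, and it reaches the same destination as the paper's --- every state reachable from $\phi$ satisfies $\zeta$, and the tested equalities $z_i = z_i^+$ transfer this to the renamed formula at the initial state --- but your decomposition is different and, in my view, cleaner. The paper re-derives the loop property by an explicit induction on the iteration count $k$, interleaving the renaming argument with the induction; its inductive step is stated somewhat loosely (it asserts that the witness state $u_k$ for $k$ iterations satisfies the same diamond formula with the \emph{same} $z_i^+$ values, which only makes sense once one fixes the final state of the run first). You instead invoke the soundness of the \dL{} loop invariant rule as a black box to get $\phi \implies \left[\left(\alpha_{\text{ctrl}};\alpha_{\text{plant}}\right)^*\right]\zeta$ in one step, and then isolate the renaming into a separate coincidence/substitution argument using that the fresh $z_i^+$ lie outside $\boundedvars{\left(\alpha_{\text{ctrl}};\alpha_{\text{plant}}\right)^*}$. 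This separation makes explicit the one step the paper glosses over --- why $\mu \vDash \zeta$ at the reached state is the same as $\nu \vDash \left(\zeta\right)_{z_1 \ldots z_m}^{z_1^+ \ldots z_m^+}$ at the initial state --- and you correctly flag the only hypotheses it needs (freshness of the $z_i^+$ and $\varsofformula{\zeta} \subseteq \{z_1,\dots,z_m\}$, the latter being part of \Cref{def:complete_state_space_constraint}). No gap; if anything your version is the one I would keep.
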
%
\begin{proof}%
We begin by recalling the requirement for $\zeta$ to be a state reachability formula:
\[
\completeStateSpaceFormula{}.
\]
Let $\zeta$ be an inductive invariant for some contract of the form given above.
First, consider that for any state satisfying the left side of the formula above it holds that there exists some $k$ such that $\phi \land \langle \left(\alpha_{\text{ctrl}};\alpha_{\text{plant}}\right)^k \rangle \bigwedge_{i=1}^m z_i = z_i^+$ is satisfied by the same state (this follows from the semantics of loops in \ac{dL}).
If we can prove that any such state also satisfies $\left(\stateFormula\right)_{z_1,\dots,z_m}^{z_1^+,\dots,z_m^+}$ we obtain that $\zeta$ is a state reachability formula.
We proceed by induction:
First, consider $k=0$ in this case $z_i$ has the same value as $z_i^+$ for all $i$.
The formula then boils down to $\phi \implies \zeta$. This formula is guaranteed to be valid by the first requirement of inductive invariants.
Next, we now assume that we already proved that $\zeta$ holds for $k$ loop iterations and show it for $k+1$.
Since we assume some state $u_{k+1}$ that satisfies $\phi \land \langle \left(\alpha_{\text{ctrl}};\alpha_{\text{plant}}\right)^{k+1} \rangle \bigwedge_{i=1}^m z_i = z_i^+$, there also has to be some state $u_k$ satisfying $\phi \land \langle \left(\alpha_{\text{ctrl}};\alpha_{\text{plant}}\right)^{k} \rangle \bigwedge_{i=1}^m z_i = z_i^+$.
However, we already know for $u_k$ that it satisfies $\left(\stateFormula\right)_{z_1,\dots,z_m}^{z_1^+,\dots,z_m^+}$.
Since $u_{k+1}$ is reachable from $u_k$ through the execution of $\alpha_{\text{ctrl}};\alpha_{\text{plant}}$ we know that $u_{k+1}$ satisfies $\left(\stateFormula\right)_{z_1,\dots,z_m}^{z_1^+,\dots,z_m^+}$ (this corresponds to the property $\zeta \implies \left[\alpha\right]\zeta$ of inductive invariants).
\end{proof}%
\looseness=-1%
As $\nnProgramFml$ and $g$ mirror each other, we can reason about them interchangeably.
Our objective is now to prove that $\alpha_g$ is a refinement of $\alpha_{\text{ctrl}}$.
To this end, we use the shielding technique \ac{modelplex}~\cite{Mitsch2016} to
automatically generate a correct-by-construction controller monitor $\controllerFormula$ for $\alpha_{\text{ctrl}}$.
The formula $\controllerFormula$ then describes what behavior for $\nnProgramFml$ is acceptable so that $\nnProgramFml$ still represents a refinement of $\alpha_{\text{ctrl}}$.
As seen in \Cref{sec:safe_controller_implementations}, we do \emph{not} require that $\nnProgramFml$ adheres to $\controllerFormula$ on all states, but only on reachable states. %
For efficiency we therefore allow limiting the analyzed state space to an inductive invariant $\stateFormula$ (i.e. for $\phi \implies \left[\alpha^*\right] \psi$ a formula $\stateFormula$ s.t. $\phi \implies \stateFormula$ and  $\stateFormula \implies \left[\alpha\right] \stateFormula$).
Despite the infinite-time horizon, the practical use of our approach often faces implementations with a limited value range for inputs and outputs (e.g.,
$\accRelVel$ within the ego-car's physical capabilities).
Only by exploiting these ranges, is it possible to prove safety for \acp{NN} that were only trained on a particular value range. %
To this end, we allow specifying value \emph{ranges} (i.e. intervals) for variables.
We define the range formula $R \equiv \bigwedge_{v \in \varsofformula{P}} \underline{R}\mleft(v\mright) \leq v \leq \overline{R}\mleft(v\mright)$ for lower and upper bounds $\underline{R}$ and $\overline{R}$.
Using \(R\), we specialize a contract to the implementation specifics by adding a range check to $\alpha_\text{plant}$.
The safety results for the original contract can be reused:
\begin{lemma}[Range Restriction]
\label{lem:overview:introduce_bounds}
Let $\hybridsystemscontract{}$ be a valid \ac{dL} formula.
Then the formula
$C_2\equiv\left(\phi \land R \implies \left[\left(\alpha_{\text{ctrl}};\left(\alpha_{\text{plant}};?\mleft(R\mright)\right)\right)^*\right]  \psi\right)$
with ranges $R$ is valid and \(R\) is an invariant for $C_2$.
\end{lemma}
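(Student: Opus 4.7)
The argument is purely semantic and rests on the observation that inserting a test $?R$ after $\alpha_{\text{plant}}$ can only \emph{remove} transitions from the original program, never add any. By the \ac{dL} semantics of tests, $\dLinterpretationapp{?R} = \{(\nu,\nu) \mid \nu \vDash R\}$, so sequential composition immediately yields $\dLinterpretationapp{\alpha_{\text{plant}};?R} \subseteq \dLinterpretationapp{\alpha_{\text{plant}}}$. A routine induction on loop unfoldings (the transition relation of $\alpha^*$ is built monotonically from compositions of the body with itself) then lifts this to $\dLinterpretationapp{(\alpha_{\text{ctrl}};(\alpha_{\text{plant}};?R))^*} \subseteq \dLinterpretationapp{(\alpha_{\text{ctrl}};\alpha_{\text{plant}})^*}$.

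\textbf{Validity of $C_2$.} I would then take any $\nu \vDash \phi \land R$; in particular $\nu \vDash \phi$. For any state $\mu$ with $(\nu,\mu) \in \dLinterpretationapp{(\alpha_{\text{ctrl}};(\alpha_{\text{plant}};?R))^*}$, the inclusion above gives $(\nu,\mu) \in \dLinterpretationapp{(\alpha_{\text{ctrl}};\alpha_{\text{plant}})^*}$, and the validity of $\hybridsystemscontract{}$ forces $\mu \vDash \psi$. This establishes the postcondition of $C_2$.

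\textbf{Invariance of $R$.} The claim to show is $\phi \land R \implies [(\alpha_{\text{ctrl}};(\alpha_{\text{plant}};?R))^*]\,R$. Every reachable state $\mu$ arises from $n \geq 0$ iterations of the body. If $n = 0$ then $\mu = \nu$, and $R$ holds by assumption. If $n \geq 1$, then $\mu$ is the final state of the $n$-th iteration whose last step is the test $?R$, and by the semantics of tests this transition can only terminate in states satisfying $R$. Either way $\mu \vDash R$.

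\textbf{Main obstacle.} There is essentially none: the whole result is a monotonicity-by-construction argument combined with the defining semantics of $?$. The only care required is to state the inclusion $\dLinterpretationapp{\alpha_{\text{plant}};?R} \subseteq \dLinterpretationapp{\alpha_{\text{plant}}}$ precisely and to split the invariance argument into the $n=0$ and $n \geq 1$ cases. A syntactic derivation inside the \ac{dL} calculus is equally direct, reducing via the loop, composition, and test axioms to a first-order tautology over $\phi$, $R$, and $\psi$, but the semantic route above is shorter and makes the underlying monotonicity manifest.
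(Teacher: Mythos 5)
Your proof is correct, but it takes a different route from the paper's. You argue purely semantically: $\dLinterpretationapp{\alpha_{\text{plant}};?R} \subseteq \dLinterpretationapp{\alpha_{\text{plant}}}$, hence by monotonicity of composition and iteration the new loop's transition relation refines the old one's, and validity of $C_2$ follows by transferring each reachable state back to the original contract; invariance of $R$ falls out of the test being the last step of every iteration. The paper instead works proof-theoretically with loop invariants: it extracts an invariant $I$ witnessing the original contract (with $\phi \implies I$, $I \implies \psi$, $I \implies \left[\alpha_{\text{ctrl}};\alpha_{\text{plant}}\right]I$), observes $\phi \land R \implies I$, and re-establishes the inductive step for the modified body by reducing $I \implies \left[\alpha_{\text{ctrl}};\alpha_{\text{plant}};?\mleft(R\mright)\right]I$ to $I \implies \left[\alpha_{\text{ctrl}};\alpha_{\text{plant}}\right]\left(R \implies I\right)$ via the monotonicity rule of \ac{dL}. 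Your version is more elementary and self-contained — it does not need to posit the existence of a loop invariant for an arbitrary valid contract (which implicitly leans on completeness of the invariant method) — and it explicitly discharges the ``$R$ is an invariant'' claim, which the paper's proof leaves implicit. The paper's version, in exchange, is directly a derivation in the \ac{dL} calculus and hands you a concrete loop invariant ($I$, and by the same argument $I \land R$) for the range-restricted system, which is what the surrounding development actually consumes when it folds $R$ into $\stateFormula$. The only nit on your write-up: when you say the last step of the $n$-th iteration is $?R$ ``and this transition can only terminate in states satisfying $R$,'' it is worth noting that $\dLinterpretationapp{?R}$ is a subset of the identity relation, so the final state both satisfies $R$ and equals the state produced by $\alpha_{\text{plant}}$ — that identity is what your refinement inclusion silently uses as well.
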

\begin{proof}
We use the notation $C_1\equiv\left(\hybridsystemscontract{}\right)$.
We begin by showing that the validity of $C_1$ implies the validity of $C_2$.
Intuitively, this follows from the fact that the introduced check $?\mleft(R\mright)$ only takes away states.
Let $I$ be a loop invariant such that $\phi \implies I$, $I \implies \psi$ and $I \implies \left[\alpha_{\text{ctrl}};\alpha_{\text{plant}}\right] I$ (assumed due to the validity of $C_1$).
Then clearly, it also holds that $\phi \land R \implies I$.
Furthermore, $I \implies \left[\alpha_{\text{ctrl}};\alpha_{\text{plant}};?\mleft(R\mright)\right]I$ can be reduced to $I \implies \left[\alpha_{\text{ctrl}};\alpha_{\text{plant}}\right]\left(R \implies I\right)$ which we can shown through the monotonicity rule of \ac{dL}.
Since we already know that $I \implies \psi$, it follows that $C_2$ is valid, because $I$ is a loop invariant.
\end{proof}
Including $R$ into $\stateFormula$ allows us to exploit the range limits for the analysis of $\nnProgramFml$.
Our objective is to use \ac{OLNNV} techniques to check whether $g$ (and therefore $\nnProgramFml$) satisfies the specification synthesized by \ac{modelplex}.
To this end, we use a \emph{nonlinear \ac{NN} verifier} to prove safety of our \ac{NNCS}:
\begin{theorem}[Safety Criterion]
\label{thm:overview:validtiy_safety}
Let $\controllerFormula$ and $\stateFormula$ be controller and state reachability formulas for
a valid \ac{dL} contract $C\equiv\left(\hybridsystemscontract{}\right)$.
For any controller description $\controllerDescriptionFml$, if
\begin{equation}
\label[formula]{eq:safety_property}
\realArithmeticSafetyGuarantee
\end{equation}
is valid, then the following \acs{dL} formula is valid as well:
\begin{equation}
    \label[formula]{eq:safety_guarantee}
\safetyGuarantee
\end{equation}
\end{theorem}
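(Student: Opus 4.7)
The plan is to reduce safety of the concrete loop $(\safetyGuaranteeProgramNoLoop)^*$ to the already-proven safety of the abstract loop $\defaultcontroller$ via a trace-refinement argument, with $\controllerFormula$ serving as the logical bridge between $\controllerDescriptionFml$ and $\alpha_{\text{ctrl}}$. The validity of the contract $C$ gives us $\psi$ after any abstract loop execution, so it suffices to show that every concrete trace is subsumed by an abstract trace.

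First I would establish a one-step refinement lemma: if $\nu \vDash \stateFormula$ and $(\nu,\mu) \in \dLinterpretationapp{\alpha_{\text{refl}}(\controllerDescriptionFml)}$, then $(\nu,\mu) \in \dLinterpretationapp{\alpha_{\text{ctrl}}}$. Unfolding the nondeterministic mirror, any such $(\nu,\mu)$ corresponds to post-values nondeterministically assigned to $x_i^+$ and then copied into $x_i$, subject only to passing the test $?(\controllerDescriptionFml)$. The hypothesis $\stateFormula \land \controllerDescriptionFml \implies \controllerFormula$ then yields $\controllerFormula$ on the pre/post pair $(\nu,\mu)$, and correctness of the ModelPlex monitor (\Cref{def:correct_monitor}) lifts this into a genuine $\alpha_{\text{ctrl}}$-transition from $\nu$ to $\mu$.

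Second I would promote the one-step refinement to the loop. Sequential composition with $\alpha_{\text{plant}}$ trivially preserves refinement, so every step of the concrete loop taken from a $\stateFormula$-state is matched by a step of the abstract loop. I would then induct on the number $k$ of concrete iterations from an initial $\nu_0 \vDash \phi$, maintaining the property ``the concrete state after $k$ iterations is reachable by some $k$-step abstract execution, and satisfies $\stateFormula$.'' The base case $k=0$ uses $\phi \implies \stateFormula$, obtained by specializing \Cref{eq:state_reachability_requirement} with zero loop iterations and the identity witness $z_i = z_i^+$. The inductive step applies the refinement lemma at the current concrete step, exhibits the corresponding abstract $(k{+}1)$-step trace, and reapplies \Cref{eq:state_reachability_requirement} to recover $\stateFormula$ at the new concrete state. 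The validity of $C$ ensures that $\psi$ holds at the endpoint of every such abstract trace, and therefore at every concrete state reached, establishing \Cref{eq:safety_guarantee}.

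The main obstacle will be the careful bookkeeping around $\stateFormula$: it is a state reachability formula rather than an inductive invariant, so one cannot treat it as a pre-/post-condition of the concrete loop directly. Instead, the induction must carry along an explicit witnessing abstract trace whose terminal state agrees with the current concrete state on the free variables $z_1,\dots,z_m$, so that the diamond-reachability antecedent of \Cref{eq:state_reachability_requirement} can be discharged on demand. Once this alignment is in place, monotonicity of $[\cdot]$ together with the validity of $C$ closes the argument.
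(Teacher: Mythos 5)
Your proposal is correct and follows essentially the same route as the paper's proof: an induction on the number of loop iterations that carries along a witnessing abstract trace of $\defaultcontroller$, uses $\stateFormula \land \controllerDescriptionFml \implies \controllerFormula$ together with monitor correctness to turn each $\alpha_{\text{refl}}\mleft(\controllerDescriptionFml\mright)$ step from a $\stateFormula$-state into an $\alpha_{\text{ctrl}}$ step, and then recovers $\stateFormula$ and $\psi$ at the new state from \Cref{eq:state_reachability_requirement} and the validity of $C$. Your explicit remark about needing to carry the abstract trace (rather than treating $\stateFormula$ as an inductive invariant) is exactly the bookkeeping the paper's proof performs, if anything stated more carefully.
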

\begin{proof}
Assume the validity of 
\[
\realArithmeticSafetyGuarantee.
\]
Let $v \in \dLstatespace$ be some arbitrary state.
We need to show that any such $v$ satisfies \Cref{eq:safety_guarantee}:
\[
\safetyGuarantee.
\]
To this end, assume $v \vDash \phi$, we prove that $\psi$ as well as $\stateFormula$ is upheld after any number of loop iterations by induction on the number $n$ of loop iterations.\\
\bigskip\\
\textit{Base Case:} $n=0$\\
In this case, the only state we need to consider is $v$ since there were no loop iterations.
We know through the validity of $C$ that $\phi \implies \psi$.
Thus, $v \vDash \psi$.
Furthermore, we recall the requirement of a state reachability formula:\[
\completeStateSpaceFormula{}.
\]
By extending $v$ such that all $z_i^+$ have the same values as the corresponding $z_i$, we get a state that satisfies this formula.
Consequently, $v \vDash \stateFormula$.
\bigskip\\
\textit{Inductive Case:} $n \to \left(n+1\right)$\\
In the induction case, we know that for all %
\[
\left(v,\tilde{v}_0\right) \in \dLinterpretationapp{\left(\safetyGuaranteeProgramNoLoop\right)^n}
\]
it holds that
$\tilde{v}_0 \vDash \psi$ and $\tilde{v}_0 \vDash \stateFormula$.\\
We must now prove the induction property for any state reachable from $\tilde{v_0}$ through execution of the program $\left(\safetyGuaranteeProgramNoLoop\right)$.
For any $\tilde{v}_1\in\dLstatespace$ such that $\left(\tilde{v}_0,\tilde{v}_1\right) \in \dLinterpretationapp{x_1^+\coloneqq*;\dots;x_n^+\coloneqq*;?\mleft(\controllerDescriptionFml\mright);}$ (by the definition of $\controllerDescriptionFml$ we know that such a state exists) we know that $\tilde{v}_1 \vDash \controllerDescriptionFml$.
According to the coincidence lemma~\cite[Lemma 3]{Mitsch2016}, since $\stateFormula$ does not concern the $x^+$ variables, it is then true that
\[
\tilde{v}_1 \vDash \stateFormula \land \controllerDescriptionFml.
\]
Through the validity of \Cref{eq:safety_property} assumed at the beginning, we then know that it must be the case that $\tilde{v}_1 \vDash \controllerFormula$.
More specifically, this means that for any $\tilde{v}_2\in\dLstatespace$ with
\[
\left(\tilde{v}_1,\tilde{v}_2\right) \in \dLinterpretationapp{x_1\coloneqq x^+_1;\dots;x_n\coloneqq x_n^+};
\]
it holds that $\left(\tilde{v}_0,\tilde{v}_2\right) \vDash \controllerFormula$.
By definition this implies that $\left(\tilde{v}_0,\tilde{v}_2\right)\in\dLinterpretationapp{\alpha_{\text{ctrl}}}$.%

In summary, this means that for any $\left(\tilde{v}_0,\tilde{v}_2\right)\in\dLinterpretationapp{\alpha_{\text{refl}}\mleft(\controllerDescriptionFml\mright)}$ it holds that $\left(\tilde{v}_0,\tilde{v}_2\right)\in\dLinterpretationapp{\alpha_{\text{ctrl}}}$.%

Through the semantics of program composition in hybrid programs it follows that subsequently for any $\tilde{v}_3 \in \dLstatespace$ with $\left(\tilde{v}_0,\tilde{v}_3\right)\in\dLinterpretationapp{\alpha_{\text{refl}}\mleft(\controllerDescriptionFml\mright);\alpha_{\text{plant}}}$ it holds that $\left(\tilde{v}_0,\tilde{v}_3\right)\in\dLinterpretationapp{\alpha_{\text{ctrl}};\alpha_{\text{plant}}}$.

We also know that $\left(v,\tilde{v}_0\right) \in \dLinterpretationapp{\defaultcontroller}$ and that $\left(\tilde{v}_0,\tilde{v}_3\right) \in \dLinterpretationapp{\alpha_{\text{ctrl}};\alpha_{\text{plant}}}$.
Since this implies $\left(v,\tilde{v}_3\right) \in \dLinterpretationapp{\defaultcontroller}$, i.e. there is a trace of states from $v$ to $\tilde{v}_3$, and since we already know that $v \vDash \phi$, $\left(v,\tilde{v}_3\right)$ satisfy the right side of \Cref{eq:state_reachability_requirement}.
Since \Cref{eq:state_reachability_requirement} must be valid we get that $\tilde{v}_3 \vDash \stateFormula$.
Consequently, we know through the validity of $C$ that:
\[
\tilde{v}_3 \vDash \psi \land \stateFormula.
\]
This concludes the induction proof and thereby also the proof of \Cref{thm:overview:validtiy_safety}
\end{proof}%

\begin{definition}[Nonlinear Neural Network Verifier]
\label{def:nonlinear_olnnv}
A \emph{nonlinear neural network verifier} accepts as input a piece-wise Noetherian \ac{NN} $g$ and nonlinear \ac{OLNNV} query $p$ with free variables  $z_1,\dots,z_I,x_1^+,\dots,x_O^+$.
The tool must be \emph{sound}, i.e. if there is a $z \in \reals^I$ satisfying $p\mleft(x,g\mleft(z\mright)\mright)$ then the tool must return \texttt{sat}.
A tool that always returns \texttt{unsat} if no such $z \in \reals^I$ exists is called \emph{complete}.
\end{definition}%
\begin{lemma}[Soundness w.r.t Controller Descriptions]
\label{lemma:overview:soundness_controller_descriptions}
Let $\nnControllerDescriptionFml$ be a controller description for a piece-wise Noetherian \ac{NN} $g$.\\
Further, let $C\equiv\left(\hybridsystemscontract{}\right)$ be a contract with controller monitor $\controllerFormula\in\formulaset{\reals}$ and inductive invariant $\stateFormula\in\formulaset{\reals}$ where the free and bound variables respectively match $g$'s inputs and outputs.
If a sound Nonlinear Neural Network Verifier returns \texttt{unsat} for the query $p \equiv \left(\stateFormula \land \neg \controllerFormula\right)$ on $g$ then:
\begin{enumerate*}
    \item $\nnControllerDescriptionFml \land \stateFormula \implies \controllerFormula$ is valid;
    \item $\phi \implies \left[\left(\alpha_{\text{refl}}\left(\nnControllerDescriptionFml\right);\alpha_{\text{plant}}\right)^*\right] \psi$ is valid.
\end{enumerate*}
\end{lemma}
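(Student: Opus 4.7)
The plan is to derive part (1) directly from the soundness of the nonlinear verifier using the defining property of the controller description $\nnControllerDescriptionFml$, and then obtain part (2) by feeding part (1) into the safety criterion \Cref{thm:overview:validtiy_safety}.

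For part (1), I would argue by contradiction. Suppose there exists a state $\nu$ with $\nu \vDash \nnControllerDescriptionFml \land \stateFormula \land \neg\controllerFormula$. By \Cref{lemma:overview:chi_g_controller_description}, the satisfying assignments of $\nnControllerDescriptionFml$ coincide exactly with the input-output relation of $g$, so $\bigl(\nu(x_1^+),\dots,\nu(x_O^+)\bigr) = g\bigl(\nu(z_1),\dots,\nu(z_I)\bigr)$. Then choosing $z \coloneqq \bigl(\nu(z_1),\dots,\nu(z_I)\bigr)$ yields a witness showing that $p \equiv \stateFormula \land \neg\controllerFormula$ is satisfied at $\bigl(z, g(z)\bigr)$, since neither $\stateFormula$ nor $\controllerFormula$ mentions variables outside $\{z_1,\dots,z_I,x_1^+,\dots,x_O^+\}$ (they are controller/invariant formulas over precisely the matching inputs and outputs). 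By \Cref{def:nonlinear_olnnv}, a sound verifier would then be required to return \texttt{sat} on $p$, contradicting the assumption that it returned \texttt{unsat}. Hence no such $\nu$ exists and $\nnControllerDescriptionFml \land \stateFormula \implies \controllerFormula$ is valid.

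For part (2), I would invoke \Cref{thm:overview:validtiy_safety} with the controller description $\controllerDescriptionFml \coloneqq \nnControllerDescriptionFml$. The theorem requires three things: a valid \dL{} contract $C$ (given by hypothesis), a controller monitor $\controllerFormula$ (given), and a state reachability formula $\stateFormula$. Since the hypothesis states that $\stateFormula$ is an \emph{inductive invariant} of $C$, the earlier lemma ``Inductive Invariants are State Reachability Formulas'' gives us that $\stateFormula$ satisfies \Cref{def:complete_state_space_constraint}. Finally, part (1) gives the validity of \Cref{eq:safety_property} for $\nnControllerDescriptionFml$. Applying \Cref{thm:overview:validtiy_safety} then yields the validity of $\safetyGuarantee{}$ instantiated with $\controllerDescriptionFml = \nnControllerDescriptionFml$, which is exactly the claim.

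The only subtlety, and the step I would be most careful about, is the variable matching in part (1): the lemma assumes that the free/bound variables of the contract match the inputs/outputs of $g$, and the verifier query is phrased over inputs $z_1,\dots,z_I$ and outputs $x_1^+,\dots,x_O^+$. Making precise that $\stateFormula \land \neg\controllerFormula$ is a well-formed \ac{OLNNV} query over exactly these variables, and that the witness $z$ extracted from $\nu$ together with $g(z) = (\nu(x_1^+),\dots,\nu(x_O^+))$ really does satisfy the query in the sense of \Cref{def:nonlinear_olnnv}, is the pivot on which the argument turns; everything else is a bookkeeping application of previously established results.
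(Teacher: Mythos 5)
Your proposal is correct and follows essentially the same route as the paper: part (1) is the contrapositive of the verifier's soundness combined with the fact that $\nnControllerDescriptionFml$'s models are exactly $g$'s input-output pairs, and part (2) is an application of \Cref{thm:overview:validtiy_safety} (via the fact that inductive invariants are state reachability formulas). The only cosmetic difference is that you phrase part (1) as a proof by contradiction while the paper argues it directly by case split on whether a state satisfies $\stateFormula \land \nnControllerDescriptionFml$; the substance is identical.
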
%
\begin{proof}
Let all variables be defined as above.
We assume that the nonlinear \ac{NN} verifier did indeed return \texttt{unsat}.
By definition this means that there exists no $z \in \mathbb{R}^I$ such that $p\mleft(z,g\mleft(z\mright)\mright) = \truesym$.
Due to the formalization of $\nnControllerDescriptionFml$ (see \Cref{lemma:overview:chi_g_controller_description}), this means there exists no $z \in \mathbb{R}^I$ such that $\stateFormula \land \nnControllerDescriptionFml \land \neg \controllerFormula$.
Among all states consider now any state $v$ such that $v \nvDash \stateFormula \land \nnControllerDescriptionFml$.
In this case $v \vDash \stateFormula \land \nnControllerDescriptionFml \implies \controllerFormula$ vacuously.
Next, consider the other case, i.e. a state $v$ such that $v \vDash \stateFormula \land \nnControllerDescriptionFml$.
In this case it must hold that $v \nvDash \neg \controllerFormula$.
So $v \vDash \controllerFormula$.
Therefore, $v \vDash \stateFormula \land \nnControllerDescriptionFml \implies \controllerFormula$.
This means that \Cref{eq:safety_property} is satisfied by all states and, therefore, valid which proves the first claim.
\Cref{thm:overview:validtiy_safety} then implies the safety guarantee stated in \Cref{eq:safety_guarantee} for $\nnControllerDescriptionFml$ which proves the second claim.
\end{proof}%
While we lay the foundation for analyses on piece-wise Noetherian \acp{NN},
a subclass is decidable:
\begin{lemma}[Decidability for Polynomial Constraints]
\label{lemma:decidability}
Given a piece-wise polynomial \ac{NN} $g$, 
the problem of verifying $\left(\stateFormula \land \neg\controllerFormula\right) \in \formulaset{\reals}$ for $g$ is decidable.
\end{lemma}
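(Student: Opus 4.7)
The plan is to reduce the verification task to a sentence of the first-order theory of real closed fields and then invoke Tarski's classical decidability result (equivalently, cylindrical algebraic decomposition).

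First, I would invoke \Cref{lemma:overview:chi_g_controller_description} to build a controller description $\nnControllerDescriptionFml$ for $g$ by encoding each layer symbolically. The key observation is that when $g$ is \emph{piece-wise polynomial}, every activation piece $f_i$ is a polynomial and every guard $q_i$ lies in $\formulaset{\reals}$, so the per-neuron encoding $(q_i \land v_i = f_i(x)) \lor (\neg q_i \land v_i = 0)$ uses only polynomial atoms connected by propositional operators. Since affine transformations also preserve polynomiality, a straightforward induction over the layers shows that the full encoding, together with existential quantifiers over the freshly introduced intermediate variables, yields a formula $\nnControllerDescriptionFml \in \formulaset{\reals}$ (rather than merely $\formulaset{\noetherianReals}$). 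This refinement of \Cref{lemma:overview:chi_g_controller_description} under the stronger polynomial hypothesis is the one arithmetic check the proof actually requires.

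Next, I would invoke \Cref{lemma:overview:soundness_controller_descriptions}: verifying the query $\stateFormula \land \neg \controllerFormula$ against $g$ amounts to checking whether there is an input $z \in \reals^I$ (and values for intermediate variables) satisfying $\stateFormula \land \nnControllerDescriptionFml \land \neg \controllerFormula$. By hypothesis $\stateFormula, \controllerFormula \in \formulaset{\reals}$, and by the previous step $\nnControllerDescriptionFml \in \formulaset{\reals}$, so this is a formula of polynomial real arithmetic whose existential closure is a sentence over real closed fields.

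Finally, decidability follows by Tarski's theorem on the decidability of the first-order theory of real closed fields. The main obstacle is not the decidability argument itself, which is essentially immediate once the reduction is in place, but rather cleanly verifying that the encoding produced by \Cref{lemma:overview:chi_g_controller_description} strictly stays within $\formulaset{\reals}$ under the piece-wise polynomial assumption; everything else is a direct combination of the preceding lemmas with Tarski's classical result.
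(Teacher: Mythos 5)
Your proposal is correct and follows essentially the same route as the paper's proof: reduce the verification problem via \Cref{lemma:overview:soundness_controller_descriptions} to the validity (equivalently, unsatisfiability of the negation) of $\stateFormula \land \nnControllerDescriptionFml \implies \controllerFormula$, observe that the encoding of \Cref{lemma:overview:chi_g_controller_description} stays within $\formulaset{\reals}$ under the piece-wise polynomial hypothesis, and conclude by decidability of first-order real arithmetic. You spell out the membership check $\nnControllerDescriptionFml \in \formulaset{\reals}$ in more detail than the paper, which simply asserts it, but the argument is the same.
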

\begin{proof}
\label{proof:decidability}
The problem of verifying $\left(\stateFormula \land \neg\controllerFormula\right) \in \formulaset{\reals}$ for $g$ is the same as proving the validity of the formula $\stateFormula \land \nnControllerDescriptionFml \implies \controllerFormula$ (see \Cref{lemma:overview:soundness_controller_descriptions}).
For piece-wise polynomial \ac{NN} this formula is in $\formulaset{\reals}$ and the validity problem is thus decidable.
\end{proof}
\end{textAtEnd}
\looseness=-1
We introduce \chThreeFancyName{}, our approach for the verification of \acp{NNCS} via \ac{dL} contracts.
The key idea of \chThreeFancyName{} are \emph{nondeterministic mirrors},
a mechanism that allows us to reflect a given \ac{NN} $g$ and reason within and outside of \ac{dL} simultaneously.
This allows us to instrument \ac{OLNNV} techniques to prove an \ac{NN} specification outside of \ac{dL} which \emph{implies} the safety of a corresponding (mirrored) \ac{dL} model describing the \ac{NNCS}.
Reconsider the \ac{ACC} example (\Cref{sec:introduction}) for which we synthesized a controller monitor formula in \Cref{sec:background}.
The remaining open question is the following:
\begin{quote}
If we replace the control envelope $\alpha_{\text{ctrl}}$ by a given piece-wise Noetherian \ac{NN} $g$, does the resulting system retain the same safety guarantees?
\end{quote}
\looseness=-1

\paragraph{Summary of \chThreeFancyName{}}
\looseness=-1
The input for \chThreeFancyName{} is a proven \ac{dL} contract.
Additionally, one may provide an inductive invariant $\stateFormula$ to simplify the subsequent state space analysis.
Using \ac{modelplex}'s synthesis of $\controllerFormula$, \chThreeFancyName{} constructs a nonlinear \ac{OLNNV} query.
If we verify this query on an \ac{NN} $g$, then the \ac{NNCS} where we \emph{substitute} the control envelope by $g$ retains the same safety guarantee.

\paragraph{Running Example}
\looseness=-1
Since one can only provide formal guarantees for something one can describe formally, we first need a semantics for what it means to substitute $\alpha_{\text{ctrl}}$ by $g$.
To this end, we formalize a given piece-wise Noetherian \ac{NN} $g$ as a
hybrid program $\nnProgramFml$ which we call the \emph{nondeterministic mirror} of $g$ (see \Cref{def:overview:nondet_mirror} and \Cref{lemma:nn_program_existence} in \Cref{apx:proofs}).
E.g. for \ac{ACC}, this program must have two free (i.e. read) variables $\accRelPos,\accRelVel$ and one bound (i.e. written) variable $\accRelAcc$
and must be designed in such a way that it exactly implements the \ac{NN} $g$.
Showing safety (see question above) is then equivalent to proving the following \ac{dL} formula where $\alpha_{\text{plant}}$ describes \ac{ACC}'s physical dynamics:
\begin{equation}
\label[formula]{eq:acc_guarantee}
\text{accInit} \implies \left[\left(\alpha_g;\alpha_{\text{plant}}\right)^*\right]\accRelPos>0.
\end{equation}
\looseness=-1
Since \acp{NN} do not lend themselves well to interactive analysis, an \emph{automatable} mechanism to prove \cref{eq:acc_guarantee} \emph{outside} the \ac{dL} calculus is desirable.
As discussed in \Cref{sec:background}, we can prove the safety ($\accRelPos>0$) of $\alpha_{\text{ctrl}}$ via \ac{dL}.
Thus, if we can show that all behavior of the nondeterministic mirror $\nnProgramFml$ is \emph{already} modeled by $\alpha_{\text{ctrl}}$, the safety guarantee carries over from the envelope to $\nnProgramFml$. %
To show this refinement relation~\cite{Loos2016,prebet2024uniform}, we instrument the controller monitor $\text{accCtrlFml}$ in \Cref{eq:acc_controller_formula}. %
We verify that the \ac{NN} $g$ satisfies the controller monitor formula $\text{accCtrlFml}$
(i.e. 
we show that $g$'s input-output relation satisfies \Cref{eq:acc_controller_formula}).
If this is the case, $\nnProgramFml$'s behavior is modeled by our envelope $\alpha_{\text{ctrl}}$.
In practice, it is unnecessary that the behavior of $\nnProgramFml$ is modeled by $\alpha_{\text{ctrl}}$ \emph{everywhere}
(e.g. we are not interested in states with $\accRelPos \leq 0$).
It suffices to consider all states within the inductive invariant $\text{accInv}$ of the envelope's system (\Cref{eq:acc_contract}) as those are precisely the states for which the guarantee on $\alpha_{\text{ctrl}}$ holds.
Thus, we can prove \Cref{eq:acc_guarantee} by showing that $g$ satisfies the following specification for all inputs:
\begin{equation}
\label[formula]{eq:acc_olnnv_property}
\text{accInv} \implies \text{accCtrlFml}
\end{equation}
\looseness=-1
\chThreeFancyName{} also allows to soundly constrain the system to value ranges where the \ac{NN} has been trained (\Cref{lem:overview:introduce_bounds}).
Verifying the queries generated by \chThreeFancyName{} with a (nonlinear) \ac{OLNNV} tool (\Cref{def:nonlinear_olnnv}) then implies safety of the \ac{NNCS} (full formalism see \Cref{apx:proofs:versaille}):
\begin{theoremE}[Soundness][end,text link=]%
\label{thm:overview:soundness}%
Let $g$ be a piece-wise Noetherian \ac{NN}.
Further, let $C\equiv\left(\hybridsystemscontract{}\right)$ be a valid contract with controller monitor $\controllerFormula\in\formulaset{\reals}$ and inductive invariant $\stateFormula\in\formulaset{\reals}$.
If a sound Nonlinear Neural Network Verifier returns \texttt{unsat} for the query $p \equiv \left(\stateFormula \land \neg \controllerFormula\right)$ on $g$ then
$\phi \implies \left[\left(\nnProgramFml;\alpha_{\text{plant}}\right)^*\right] \psi$ is valid
for the nondet. mirror $\nnProgramFml$.
\end{theoremE}%
\begin{proofE}
\label{proof:overview:soundness}%
The formula $\phi \implies \left[\left(\nnProgramFml;\alpha_{\text{plant}}\right)^*\right] \psi$ is equivalent to $\phi \implies \left[\left(\alpha_{\text{refl}}\left(\nnControllerDescriptionFml\right);\alpha_{\text{plant}}\right)^*\right] \psi$ as $\nnProgramFml$ behaves precisely like $\alpha_{\text{refl}}\left(\nnControllerDescriptionFml\right)$.
The result immediately follows from \Cref{lemma:overview:soundness_controller_descriptions}
\end{proofE}
If $g$ is piece-wise polynomial, \Cref{eq:acc_olnnv_property} is expressible in \formulaset{\reals} and therefore its verification decidable (\Cref{lemma:decidability} which is a special case of \cite{prebet2024uniform}).
In practice, we can be much more efficient than naively applying real arithmetic theory solvers by relying on \ac{OLNNV} technologies to check the negated property $\text{accInv} \land \neg \text{accCtrlFml}$:
If this property is unsatisfiable for a given \ac{NN} $g$, then 
\Cref{eq:acc_guarantee} is valid.
Off-the-shelf \ac{OLNNV} tools are unable to reason about \Cref{eq:acc_olnnv_property} due to its nonlinearities and the non-normalized formula structure.
Thus, our second contribution (\Cref{sec:nnnv})
lifts \ac{OLNNV} tools to the task of verifying
nonlinear, non-normalized queries.

\section{\chFourFancyName{}: Nonlinear Open-Loop NN Verification}%
\label{sec:nnnv}
\begin{textAtEnd}
    \subsection{Proofs for \Cref{sec:nnnv}}
\end{textAtEnd}
\newcommand{\normalizeAlg}[0]{\textsc{Mosaic}}%
\newcommand{\generalizeAlg}[0]{\textsc{Generalize}}%
\newcommand{\enumerateAlg}[0]{\textsc{Enum}}%
\newcommand{\overapproxAlg}[0]{\textsc{Linearize}}%
\newcommand{\filterAlg}[0]{\textsc{Filter}}%
\looseness=-1
Since \acp{NNCS} usually exhibit nonlinear physical behavior, the verification property $\stateFormula \land \neg\controllerFormula$ will be nonlinear as well.
$\stateFormula \land \neg\controllerFormula$ is also a formula of arbitrary structure and not a normalized \ac{OLNNV} query (\Cref{def:olnnv_interface}).
This is evident in the verification query for our running example (see \Cref{apx:acc}) or in \Cref{fig:ncubev} where the green and blue constraints (left) describe two independent nonlinear input regions ($\mathrm{case}_1$ and $\mathrm{case}_2$) and the red regions (right) correspond to the unsafe regions for $\mathrm{case}_1$.
In contrast to this, off-the-shelf \ac{OLNNV} tools (e.g. nnenum~\cite{bak2020improved} or Marabou~\cite{katz2019marabou}) only support the verification of linear, normalized \ac{OLNNV} queries on piece-wise linear \acp{NN}.
\chFourFancyName{} is a framework that allows us to \emph{lift} off-the-shelf \ac{OLNNV} tools for linear, normalized \ac{OLNNV} queries to polynomial queries of arbitrary logical structure.
This approach has the notable advantage that \chFourFancyName{}'s capabilities grow as \ac{OLNNV} technology advances while retaining completeness w.r.t. polynomial constraints.
An overview of the algorithm is given in \Cref{algorithm:nnnv:the_algorithm} and \Cref{fig:ncubev} which is explained throughout this section with further details in \Cref{apx:mosaic}.
\def\FunctionFOne(#1){-4.21*(#1)^0.5}%
\def\FunctionZero(#1){0}%
\def\FunctionG(#1){4.21*(-#1)^0.5-30}%
\def\FunctionGTwo(#1){-30}%
\def\FunctionUOne(#1){-(0.1*(#1-30))^2+30}%
\def\FunctionUTwo(#1){(0.4*(#1-15))^2+70}%

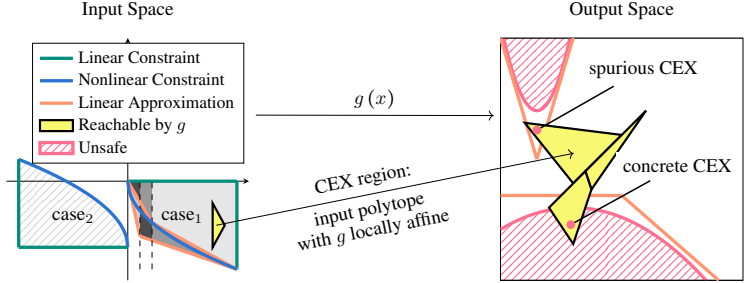
\begin{figure}[t]%
    \centering
    \resizebox{0.35\linewidth}{!}{
    \begin{tikzpicture}
    \begin{axis}[
            axis y line=center,
            axis x line=middle, 
            axis on top=false,
            xmin=-100,
            xmax=100,
            ymin=-45,
            ymax=65,
            xshift=-5cm,
            height=6.0cm,
            width=6.0cm,
            ticks=none,
            legend style={font=\small},
            legend cell align=left,
            title={Input Space}
        ]
        \addplot [name path = F2, domain=0:90, samples=100, mark=none, line width=0.6mm, c11] {\FunctionZero(x)};
        
        \draw [name path = F3, line width=0.6mm, c11] (90,-40) -- (90,0);
        \addlegendentry{Linear Constraint}
        \addlegendentry{Nonlinear Constraint}
        \addlegendentry{Linear Approximation}
        \addlegendentry{Reachable by $g$}
        \addlegendentry{Unsafe}
        \addlegendimage{line legend,line width=0.6mm,c15}
        \addlegendimage{line legend,line width=0.6mm,c3}
        \addlegendimage{area legend,line width=0.4mm,fill=c1}
        \addlegendimage{area legend,line width=0.6mm,pattern color=c4, pattern=north east lines,draw=c4}
        
        \draw [name path = FA1, line width=0.6mm, c3] (20,-18.82) -- (90,-40);
        \draw [name path = FA2, line width=0.6mm, c3] (0.0, 0.0) -- (20,-18.82);

        \draw [name path = FA3, line width=0.6mm, c3] (10,-25) -- (90,-40);
        \draw [name path = FA4, line width=0.6mm, c3] (0.0, 0.0) -- (10,-25);

        \draw [dashed,black] (10,-40) -- (10,0);
        \draw [dashed,black] (20,-40) -- (20,0);

        \node (case1) [left, black,anchor=south] at (axis cs: 45,-22) {$\text{case}_1$};

        \addplot [name path = F1, domain=0:90, samples=100, mark=none, line width=0.6mm, c15] {\FunctionFOne(x)};
        
        \addplot [name path = G1,domain=-90:0, samples=100, mark=none, line width=0.6mm, c15] {\FunctionG(x)};
        \addplot [name path = G2, domain=-90:0, samples=100, mark=none, line width=0.6mm, c11] {\FunctionGTwo(x)};
        \draw [name path = G3,line width=0.6mm, c11] (-90,-30) -- (-90,10);

        \addplot [pattern color=gray!30, pattern=north east lines] fill between [of = G1 and G2, soft clip={domain=-90:0}];
        
        \node [left, black,anchor=south] at (axis cs: -45,-22) {$\text{case}_2$};

        \addplot [black!10] fill between [of = FA1 and F2, soft clip={domain=20:90}];
        \addplot [black!40] fill between [of = FA1 and FA3, soft clip={domain=20:90}];
        \addplot [black!40] fill between [of = FA2 and F2, soft clip={domain=10:20}];
        \addplot [black!70] fill between [of = FA2 and FA3, soft clip={domain=10:20}];
        \addplot [black!20] fill between [of = FA2 and FA4, soft clip={domain=0:10}];
        \addplot [black!70] fill between [of = FA2 and F2, soft clip={domain=0:10}];

        \filldraw[line width=0.4mm,fill=c1,on layer=foreground] (70,-10) -- (80,-20) -- (70,-30) -- cycle; 

        \node (axis1End) at (axis cs:100,30) {};
        \node (counterexampleInput) at (axis cs:73,-20) {};
    \end{axis}
    
    \begin{axis}[
            axis on top=false,
            ticks=none,
            xmin=0,
            xmax=100,
            ymin=0,
            ymax=100,
            xshift=4cm,
            height=6.0cm,
            width=6.0cm,
            legend style={font=\small},
            legend cell align=left,
            title={Output Space}
        ]

        \addplot [name path = U1, domain=0:100, samples=100, mark=none, line width=0.6mm, c4] {\FunctionUOne(x)};
        \addplot[name path = UZero, domain=0:100, samples=100, mark=none, line width=0mm] {0};
        \addplot [pattern color=c4, pattern=north east lines] fill between [of = U1 and UZero, soft clip={domain=0:87}];

        \addplot [name path = U2, domain=0:100, samples=100, mark=none, line width=0.6mm, c4] {\FunctionUTwo(x)};
        \addplot[name path = UTop, domain=0:100, samples=100, mark=none, line width=0mm] {100};
        \addplot [pattern color=c4, pattern=north east lines] fill between [of = U2 and UTop, soft clip={domain=0:87}];

        \draw [name path = A1, line width=0.6mm, c3] (0,35) -- (50,35);
        \draw [name path = A2, line width=0.6mm, c3] (50,35) -- (90,0);

        \draw [name path = A3, line width=0.6mm, c3] (0,100) -- (15,50);
        \draw [name path = A4, line width=0.6mm, c3] (30,100) -- (15,50);

        \filldraw[line width=0.4mm,fill=c1,on layer=foreground] (10,65)  -- (50,60) -- (30,40) -- cycle; 
        \filldraw[line width=0.4mm,fill=c1,on layer=foreground] (35,45) -- (60,70) -- (37.5,37.5) -- cycle; 
        \filldraw[line width=0.4mm,fill=c1,on layer=foreground] (20,30) -- (35,45) -- (37.5,37.5) -- (30,15) -- cycle; 

        \node [fill, draw, circle, minimum width=4pt, inner sep=0pt, pin={[pin edge={black,thick},fill=white, outer sep=1pt,pin distance=30]45:{spurious CEX}},c4] at (15,62) {};

        \node [fill, draw, circle, minimum width=4pt, inner sep=0pt, pin={[pin edge={black,thick},fill=white, outer sep=1pt,pin distance=30]45:{concrete CEX}},c4] at (29,23) {};
        
        \node (axis2End) at (axis cs:0,68) {};
        \node (counterexampleOutput) at (axis cs:33,53) {};
    \end{axis}
    \draw[->] (axis1End) --node[anchor=south] {$g\left(x\right)$} (axis2End);
    \draw[->] (counterexampleInput.center) -- (counterexampleOutput) node[pos=0.42, above, sloped]{CEX region:} node[pos=0.42, below, sloped,align=center]{input polytope\\with $g$ locally affine};
    \end{tikzpicture}
    }
    \caption{Visualization of the nonlinear verification algorithm \chFourFancyName{} in \Cref{sec:nnnv}}
    \label{fig:ncubev}
\end{figure}

\paragraph[Mosaic]{\normalizeAlg{}.}
\looseness=-1
\begin{wrapfigure}[11]{r}{0.25\textwidth}
    \centering
    \resizebox{\linewidth}{!}{
    \begin{tikzpicture}
    \begin{axis}[
            axis y line=center,
            axis x line=middle, 
            axis on top=false,
            xmin=0,
            xmax=100,
            ymin=0,
            ymax=100,
            height=2.75cm,
            width=3.85cm,
            x label style={at={(axis description cs:0.5,-0.5), font=\small},anchor=north},
            y label style={at={(axis description cs:-0,.5), font=\small},rotate=90,anchor=south},
            xlabel={Input Space},
            ylabel={Output},
            legend style={font=\small},
            legend cell align=left,
            xtick={20,60,80},
            xticklabels={$A_1$,$A_2$,$A_3$},
            ymajorticks=false
        ]

        \draw[thick, black, fill=c4] (10,10) rectangle (30,30);

        \draw[thick, black, fill=c4] (50,10) rectangle (90,30);

        \draw[thick, black, fill=c4] (10,50) rectangle (30,60);

        \draw[thick, black, fill=c4] (10,80) rectangle (30,90);

        \draw[thick, black, fill=c4] (50,40) rectangle (70,50);

        \draw[thick, black, fill=c4] (50,70) rectangle (70,100);

        \draw[thick, black, fill=c4] (70,55) rectangle (90,65);

        \draw [dashed,black] (10,0) -- (10,100);
        \draw [dashed,black] (30,0) -- (30,100);
        \draw [dashed,black] (50,0) -- (50,100);
        \draw [dashed,black] (70,0) -- (70,100);
        \draw [dashed,black] (90,0) -- (90,100);
        
    \end{axis}
    \end{tikzpicture}
    }
    \caption{Enumeration of specification regions}
    \label{fig:dpllt_vs_mosaic}
\end{wrapfigure}
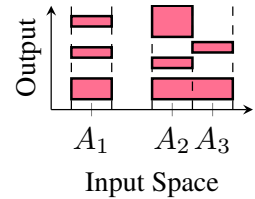
To tackle formulas with arbitrary logical structure we could use DNNV~\cite{Shriver2021} which implements a simple expansion algorithm or the standard formulation of DPLL(T).
Consider the specification in \Cref{fig:dpllt_vs_mosaic} where the x-axis represents the \ac{NN}'s inputs, the y-axis possible outputs, and the red regions are considered unsafe:
DPLL(T), or an expansion algorithm, would enumerate all 7 red regions individually and then invoke an \ac{OLNNV} tool for each.
However, this becomes prohibitively inefficient when used for reachability-based \ac{OLNNV} tools: Such tools will compute the reachable regions for $A_1$ three times, for $A_2$ three times, and for $A_3$ two times.
Instead, we propose $\normalizeAlg{}$ which first only enumerates \emph{input} regions (e.g. the regions $A_1$, $A_2$, and $A_3$) and then enumerates unsafe regions in the output space \emph{per input region}.
The unsafe regions are then aggregated into disjunctive normal form.
Hence, for the example in \Cref{fig:dpllt_vs_mosaic}, $\normalizeAlg{}$ would only yield \emph{three} normalized queries which are still compatible with the standardized interface for \ac{NN} verification tools~\cite{VNNLIB}.
We call this partitioning of the input space a \emph{mosaic} and,
in reminiscence of this analogy, we call the individual queries \emph{azulejos} (\textipa{[A.Tu'le.xo]}, see e.g. regions in shades of gray in \Cref{fig:ncubev}).
$\normalizeAlg{}$ guarantees that reachability-based \ac{OLNNV} tools do not explore the same input region multiple times (see \emph{flatness} result in \Cref{lem:decomp:minimum}; \Cref{apx:mosaic}).
For our ACAS case study, naive rewriting (as done by e.g. DNNV~\cite{Shriver2021}) may produce up to 39 trillion propositionally feasible queries.
In contrast, \normalizeAlg{} only produces 19k queries (see \Cref{tab:dnnv_comp} in \Cref{subsec:eval:closed_loop}).
\normalizeAlg{} also separates nonlinear constraints $q_n$ (must be checked outside \ac{OLNNV}) from linear constraints $q_l$ (can be passed to \ac{OLNNV}).

On the technical side, Mosaic proceeds by executing a SAT solving based DPLL(T) loop until a satisfiable conjunction of \emph{input} constraints is found. At this stage, we fix the conjunction's linear input constraints (i.e. the azulejo) and an inner loop enumerates conjunctions over mixed/output constraints that are satisfiable in combination with the fixed azulejo.
For each such conjunction, we save the conjunction of linear mixed/output constraints.
This results in a linear, normalized Open-Loop NNV query (conjunction over input, disjunctive normal from over output).
We employ a similar inner loop to enumerate satisfiable conjunctions of nonlinear constraints to later check counterexamples via SMT solving (see Retaining completeness).
At each step, we interleave propositional and theory solving to discard conjunctions unsatisfiable in real arithmetic as early as possible.

\begin{algorithm}[t]
    \caption{Verification of nonlinear queries on piece-wise linear \acp{NN}:
    \enumerateAlg{} internally uses off-the-shelf \ac{OLNNV} tools for the verification of linear normalized queries on \acp{NN}.}
    \label{algorithm:nnnv:the_algorithm}
    \begin{algorithmic}
    \Require{Formula $p \in \formulaset{\reals}$, \text{ranges} $R$, piece-wise linear \ac{NN} $g$}
    \State $p_o \leftarrow \overapproxAlg\mleft(p,R\mright)$
    \Comment{Generate linearized $p$}
    \ForAll{$\left(q_l,q_n\right) \in \normalizeAlg\mleft(p_o\mright)$}
        \Comment{Iterate over normalized queries}
        \ForAll{$\left(\iota,\omega\right) \in \enumerateAlg\mleft(g, q_l\mright)$}
        \Comment{uses \generalizeAlg{} \& \ac{OLNNV}}
            \If{\Call{Filter}{$q_l \land q_n,\iota,\omega$}=\texttt{concrete}}
                \Return \texttt{unsafe}
            \EndIf
        \EndFor
    \EndFor
    \State \Return \texttt{safe}
    \Comment{No concrete counterexamples found}
    \end{algorithmic}
\end{algorithm}

\paragraph[Linearize]{\overapproxAlg{}.}
\looseness=-1
To verify polynomial specifications via off-the-shelf (linear) \ac{OLNNV}, we need to soundly approximate the query's nonlinear constraints.
In principle, we could perform this approximation for each azulejo (see \normalizeAlg{}) separately.
To this end, consider an atomic polynomial constraint $p\left(x\right) \leq 0$ which is part of a query.
If there are two azulejos with the constraints $p\left(x\right) \leq 0$ and $p\left(x\right) > 0$,
separate approximation would lead to the constraint $\overline{p}\left(x\right) \leq 0$ and $\underline{p}\left(x\right) > 0$ (with $\overline{p}$/$\underline{p}$ resp. over/under-approximations of $p$).
This would \emph{duplicate} the exploration of the area in between $\overline{p}$ and $\underline{p}$ and can lead to an exponential blowup for many approximations.
Instead, we use a \emph{global} piece-wise approximation via OVERT~\cite{Sidrane2021} (orange lines around the blue nonlinear constraint on \Cref{fig:ncubev}) and integrate the approximate constraints into the original query via implications (e.g. we add $\overline{p}\left(x\right) \leq 0 \implies p\left(x\right)\leq 0$).
These additional linear constraints are then automatically enumerated via \normalizeAlg{} (see azulejos for $\mathrm{case}_1$ in \Cref{fig:ncubev}).
The linearization happens for input as well as output constraints.
When passing the \ac{OLNNV} query to the off-the-shelf tool, we soundly omit the nonlinear constraints, thus only leaving behind linear constraints $q_l$ (green and orange in \Cref{fig:ncubev}).

\paragraph{Retaining completeness.}
Without further efforts, \normalizeAlg{} and \overapproxAlg{} together yield a sound algorithm to check nonlinear, not normalized \ac{OLNNV} queries, but not a complete one:
A reachability analysis via \ac{OLNNV} for a given azulejo may produce spurious counterexamples that are an artifact of the linearization (see \Cref{fig:ncubev})
The key insight to achieve completeness is the observation that for piece-wise \emph{linear} \acp{NN} any concrete counterexample generated via \ac{OLNNV} corresponds to a \emph{counterexample region} $\iota$ (yellow polytope on the left of \Cref{fig:ncubev}) on which the \ac{NN} reduces to an affine transformation $\omega$ (obtained by fixing piece-wise functions to the linear segment of the concrete counterexample, i.e. we fix the value of the $\mathds{1}$ functions; see also \Cref{subsec:background:nnv}).
This insight can be used in two ways:
First, we can use it to enumerate all counterexample regions $\iota$ (by adjusting the tool's internal enumeration and/or via the algorithm \enumerateAlg{}; see \Cref{subsec:enumerate}).
Secondly, for a given counterexample region, we can check whether there exists a concrete counterexample to the nonlinear specification via SMT solving (in \Cref{algorithm:nnnv:the_algorithm} this is performed by \filterAlg{}; see \Cref{subsec:enumerate}).
By exploiting the affine transformation $\omega$, our SMT encoding for nonlinear constraints has at most $I$ free variables (for $I$ input dimensions of the \ac{NN}) and is therefore \emph{significantly more tractable} than an encoding of the entire \ac{NN} in SMT.

Using the components outlined above (for details see \Cref{apx:mosaic}), we prove the soundness and completeness of \chFourFancyName{}.
Completeness turns out to also be of \emph{practical relevance} as approximation alone would have failed to verify the DNC \ac{NN} of the ACAS benchmark discussed in \Cref{sec:evaluation}.
\begin{theoremE}[Soundness and Completeness][end,text link=]
\label{thm:nnnv:sound_and_complete}
Let $g$ be a piece-wise linear \ac{NN}, $p$ be a real arithmetic formula and $R$ variable ranges for all in- and output variables of $g$.
\Cref{algorithm:nnnv:the_algorithm} returns \texttt{unsafe} iff there exists an input $z \in \reals^I$ such that $\left(z,g\left(z\right)\right)$ is in the range $R$ and $p\mleft(z,g\mleft(z\mright)\mright)$ is satisfied.
\end{theoremE}
\begin{proofE}
\label{proof:nnnv:sound_and_complete}
This proof assumes the results from \Cref{subsec:query_decompose,subsec:enumerate,subsec:overapprox}.
We begin by proving soundness, i.e. if \Cref{algorithm:nnnv:the_algorithm} returns safe, then there exists no counterexample.
Consider a counterexample region found by \enumerateAlg{}.
\Cref{lem:counter-example-filtering} tells us that this counterexample can only be concrete if formula $\nu$ is satisfied.
This is the check performed by \filterAlg{}.
Thus, \Cref{algorithm:nnnv:the_algorithm} only skips a counterexample region if it is not concrete.
Further, we know through \Cref{def:enumerateAlg} that all counterexamples are returned by the procedure for a given query $q_l$.
Further, we know that the disjunction over all $q_l\land q_n$ returned by \normalizeAlg{} is equivalent to its input $p_o$ (\Cref{lem:decomp:correct}) and thus the disjunction over all $q_l$ is an over-approximation thereof.
Finally, \overapproxAlg{} returns a formula $p_o$ which is equivalent to the input query $p$ (\Cref{lem:linearization:equivalence}).
Therefore, any counterexample of $p$ must also be a counterexample of some $q_l$ returned by \normalizeAlg{}.
Consequently, we iterate over all possible counterexamples and only discard them if they are spurious.
Thus, our algorithm is sound.

We now turn to the question of completeness, i.e. we prove that any time \Cref{algorithm:nnnv:the_algorithm} returns \texttt{unsafe} then there is indeed a concrete counterexample of $p$.
First, remember that \Cref{lem:linearization:equivalence} ensures that $p$ and $p_o$ are equivalent.
Furthermore, \Cref{lem:decomp:correct} ensures that the disjunction over all $q_l \land q_n$ generated by \normalizeAlg{} is equivalent to $p_o$.
Assume we found a counterexample.
The algorithm will return \texttt{unsafe} iff \filterAlg{} returns that the counterexample is concrete.
According to \Cref{lem:counter-example-filtering} we know that this is only the case if there is indeed a concrete counterexample for $q_l \land q_n$.
Since this counterexample then also satisfies $p_o$ (see above), we only return \texttt{unsafe} if \filterAlg{} found a concrete counterexample for $p$.
As real arithmetic is decidable and all other procedures in the algorithm terminate as well, this yields a terminating, sound and complete algorithm.
\end{proofE}

In \Cref{sec:evaluation} we build upon nnenum~\cite{bak2020improved,bakOverapprox} which can enumerate all counterexample regions of an \ac{NN} with $N$ ReLU activations in time $\mathcal{O}\big(2^N\big)$, and upon  cylindrical algebraic decomposition (CAD~ \cite{DBLP:journals/cca/Collins74}) with complexity $\mathcal{O}\big(2^{2^V}\big)$ for $V$ variables\footnote{For simplicity of analysis we fix the complexity's other variables (e.g. maximal degree).}.
Assuming $M$ atomic formulas in the \ac{OLNNV} query and $I$ input dimensions this yields a worst-case runtime of $\mathcal{O}\big(2^{M+\mathbf{N}+{2^I}}\big)$.
This is an exponential improvement over naive $\mathcal{O}\big(2^{M+2^{\mathbf{N}+I}}\big)$ CAD encodings as $N >> I$.
In practice the performance is even better; usually, \normalizeAlg{} explores fewer queries, nnenum returns less counterexample regions and SMT solving tends to perform very well for the small input dimensions $I$ of \ac{NNCS} control.

\section{Evaluation}
\label{sec:evaluation}
\sepfootnotecontent{SNNT}{%
See \url{https://github.com/samysweb/NCubeV} or \cite{samuel_teuber_2024_13922169}
}
\looseness=-1 %
We implemented Mosaic for ReLU \acp{NN} in a new Julia~\cite{Bezanson_Julia_A_fresh_2017} tool called N\sepfootnote{SNNT}V
based on the software packages nnenum~\cite{bak2020improved,bakOverapprox}, PicoSAT~\cite{PicoSAT,PicoSatJL} and Z3~\cite{Z3overall,Z3nonlinear}.
We provide wall-clock times on an AMD Ryzen 7 PRO 5850U CPU (\ac{SNNT} is sequential; nnenum uses multithreading).
The evaluation presented in this section focuses on vertical airborne collision avoidance (VCAS), with other experiments in \Cref{apx:evaluation}.
Airborne Collision Avoidance Systems try to recognize plane trajectories that might lead to a \ac{NMAC} with other aircrafts and advise the pilot to avoid such collisions.
\acp{NMAC} are defined as two planes (ownship and intruder) flying closer than 500 ft horizontally or 100 ft vertically.
Currently, the \ac{FAA} develops a new airborne collision avoidance system called \acf{ACASX}.
Prior work by Jeannin et al.~\cite{Jeannin2017} showed a nondeterministic, provably safe \ac{dL} envelope for airborne collision avoidance.
While the original proposal for ACAS X~\cite{holland2013optimizing,kochenderfer2012next} was shown to be unsafe~\cite{Jeannin2017}, 
the correctness of a VCAS \ac{NN} implementation~\cite{julian2016policy,julian2020safe} was proven~\cite{JulianKDVCASSafe} and disproven~\cite{Julian2019a} in special cases.
The proposed \acp{NN} contain 6 hidden layers with 45 neurons each and produce one of 9 collision avoidance advisories (\emph{Strengthen Climb to at least 2500 ft/min} (SCL2500) to \emph{Strengthen Descent to at least 2500ft/min} (SDES2500); see \Cref{tab:acas_table} in \Cref{apx:acas_table} for a list of allowed advisories).
The objective of the \acp{NN} is to ensure safety while minimizing the number of alerts sent to the pilot.
We present an \emph{exhaustive} analysis of the VCAS \ac{NNCS} for level flight intruders.

\begin{wraptable}[15]{r}{0.6\textwidth}
    \centering
    \caption{Verification of ACAS \acp{NN} for level flight: Previous advisory (=Prev. Adv.), runtime; number of counterexample (=CE) regions and time to the discovery of the first CE.}

    \begin{tabular}{l|l|l|l|l}
        Prev. Adv. & Status & Time & CE regions & First CE\\\hline
        DNC & \textbf{safe} & 0.35 h & --- & --- \\
        DND & \textbf{safe} & 0.28 h & --- & --- \\
        DES1500 & \textbf{unsafe} & 5.45 h & 49,428 & 0.04 h\\
        CL1500 & \textbf{unsafe} & 5.18 h & 34,658 & 0.08 h\\
        SDES1500 & \textbf{unsafe} & 4.05 h & 5,360 & 0.97 h\\
        SCL1500 & \textbf{unsafe} & 4.89 h & 11,323 & 0.36 h\\
        SDES2500 & \textbf{unsafe} & 3.66 h & 5,259 & 1.39 h\\
        SCL2500 & \textbf{unsafe} & 4.45 h & 7,846 & 0.53 h\\
    \end{tabular}
    \label{tab:acas}
\end{wraptable}%
We provide additional experimental results in \Cref{apx:evaluation}.
First, we demonstrate the feasibility of our approach for the running example of \ac{ACC}.
Depending on NN size and chosen linearization, our approach can verify or exhaustively enumerate counterexamples for the NNCS in 47 to 300 seconds.
For a case study on Zeppelin steering under (uniformly sampled) wind perturbations, we adapt a differential hybrid games formalization~\cite{Platzer2017} to analyze an NN controller trained by us.
Here, we encountered a controller that showed very positive empirical performance while being provably unsafe in large parts of the input space:
While performing very well on average, the control policy was vulnerable to unlikely wind perturbations -- an issue we only found through our verification.
For \ac{ACC}, we also perform a comparison to other techniques:
While Closed-Loop techniques are useful for the analysis of bounded-time safety, their efficiency greatly depends on the system's dynamics and the considered input space.
Our infinite-time horizon approach can be more efficient than Closed-Loop techniques as it evades the necessity to analyze the system's dynamics along with the NN (see \Cref{tab:clnnv_comp}).
Usually, it is desirable to show infinite-time safety on the entire (controllable) state space.
However, the approximation errors incurred via prior closed-loop NNV techniques prohibit this as they will either ignore states inside the controllable region or allow unsafe actions pushing the system outside its controllable region.
Conversely, SMT-based techniques do not have these approximation issues, but cannot scale to NNs of the size analyzed in this work.
We also provide a conceptual comparison demonstrating the efficiency of the Mosaic procedure for normalized query generation over DNNV's expansion-based algorithm (see \Cref{tab:dnnv_comp}), naive SMT solving (\Cref{tab:smt_comp}) and Genin et al.'s tailored ACAS approach~\cite{Genin2022}.

\paragraph{Verification Results for VCAS.}
\looseness=-1
We use the nondeterministic control envelope and loop invariant by Jeannin et al.~\cite[Thm. 1]{Jeannin2017} to analyze safety for intruders in level flight (i.e. intruder vertical velocity is 0).
For the same reasons as prior work~\cite{Jeannin2017}, we ignore Clear-of-Conflict advisories.
The \ac{OLNNV} queries obtained via \chThreeFancyName{} had up to 112 distinct atoms and trees up to depth 9.
To determine the maximal output neuron, \ac{OLNNV} queries initially contain atoms sorting the \ac{NN}'s outputs $x_1,\dots,x_n$ (e.g., $x_1\leq x_2$).
To avoid enumerating all permutations, %
we perform symmetry elimination via an atomic predicate encoding that some output $i$ is maximal.
\looseness=-1
We analyzed the \emph{full range} of possible \ac{NN} inputs for intruders in level flight (relative height $\left|h\right|\leq 8000 \text{ft}$, ownship velocity $\left|v\right| \leq 100 \text{ft/s}$ and time to \ac{NMAC} $6\text{s} \leq \tau \leq 40\text{s}$).
Our results are in \Cref{tab:acas}:
\textbf{safe} implies that the \ac{NN}'s advisories (other than Clear-of-Conflict) in this scenario \emph{never} lead to a collision when starting within the invariant.
The safety for DNC was only verifiable through SMT filtering (approximation yielded spurious counterexamples).
This \emph{underscores the importance of \chFourFancyName{}'s completeness} and implies that \cite{Julian2019a} is insufficient to prove safety.
A non-exhaustive analysis for non-level flight yielded counterexamples even for DNC/DND (see \Cref{apx:acas}).
For unsafe level flight scenarios, we exhaustively characterize unsafe regions.
This characterization goes far beyond characterizations in prior work~\cite{Julian2019a} that were generated using manual approximation and resulted in (non-exhaustive) point-wise characterizations.
\Cref{fig:acas} shows a concrete avoidable \ac{NMAC} (more examples are in \Cref{apx:acas}).

\begin{figure}[t]
    \centering
    \includegraphics[width=0.7\textwidth]{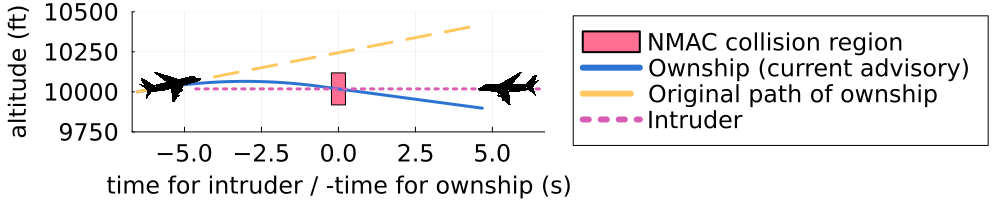}
    \caption{An unsafe advisory by the Airborne Collision Avoidance \ac{NN}: After a previous advisory to climb at least 1500ft/min, the \ac{NN} advises to reverse vertical direction (``Strengthen Descent to at least 1,500 ft/min''). This leads to an NMAC 6 seconds later. More examples are in \Cref{apx:acas}.}
    \label{fig:acas}
\end{figure}

\paragraph{Scalability.}
\looseness=-1
\ac{SNNT} provides guarantees for an \ac{NN}'s full input space.
Hence, \ac{SNNT} is not directly comparable to, e.g., the scalability of local robustness verifiers, that, while sometimes scaling to hundred thousands of \relu{}s~\cite{DBLP:journals/corr/abs-2312-16760}, only analyze tiny fractions of the input space~\cite[Sec. 6]{tran2020verification}.
Contemporary work on global properties outside \acp{NNCS} has been scaled to 100 \relu{} nodes~\cite{DBLP:conf/cav/AthavaleBCMNW24}.
Related work on infinite-time \acp{NNCS} (see also \Cref{sec:related}) has not been scaled beyond 30 nodes in similar settings~\cite{DBLP:conf/nips/LechnerZCH21,DBLP:conf/nips/ChangRG19}.
The largest \acp{NN} verified by us so far had 270 \relu{}s, indicating \ac{SNNT} is at the frontier of State-of-the-Art scalability for \emph{global} properties.
Some \ac{NNCS} applications (e.g. ACAS~\cite{julian2016policy}) turn to \acp{NN} to efficiently encode complex strategies in mid-scale \acp{NN}.
\ac{SNNT} scales to \acp{NN} of this kind.

\section{Related Work}
\label{sec:related}
\paragraph{Shielding.}
\looseness=-1
Justified Speculative Control~\cite{Fulton2018} is closely related in its use of \ac{dL}.
However, we verify the \ac{NNCS} \emph{a priori} instead of treating ML models as a black box and \emph{a posteriori} using runtime enforcement techniques~\cite{Mitsch2016,DBLP:conf/birthday/KonighoferBEP22}.
Shielding can also be used to train models such that they (probabilistically) conform to a shield/monitor~\cite{DBLP:conf/aaai/AlshiekhBEKNT18}.
Training methodologies are beyond the scope of this paper.

\paragraph{Barrier Certificates.}
An orthogonal direction of research explores learning Neural Barrier/Lyapunov Functions to prove safety properties~\cite{DBLP:journals/trob/DawsonGF23}.
Although initially used for ``pure'' dynamical and hybrid systems (without NNs)~\cite{noroozi2008generation,DBLP:conf/corl/RichardsB018,DBLP:conf/tacas/PeruffoAA21,DBLP:journals/csysl/AbateAGP21,DBLP:conf/tacas/AhmedPA20,DBLP:conf/hybrid/AbateAEGP21,DBLP:conf/nips/Zhang0V023,DBLP:conf/hybrid/ZhaoC0SYLTCL21},
the methods have since been extended to NNCS with discrete~\cite{DBLP:conf/ijcai/BacciG021,DBLP:conf/nips/LechnerZCH21,DBLP:conf/ciss/ChenFMPP21}
and continuous~\cite{DBLP:conf/nips/ChangRG19,DBLP:conf/corl/DawsonQGF21,DBLP:conf/iclr/QinZCCF21} time behavior.
While the former works can only approximate continuous time behavior, the latter techniques use off-the-shelf SMT solvers (see also \Cref{subsec:eval:closed_loop}) for certificate verification which severely limits scalability.
While some works ignore verification entirely~\cite{DBLP:conf/corl/DawsonQGF21,DBLP:conf/iclr/QinZCCF21}, the remaining work only considered \emph{linear single-layer} \acp{NN}~\cite[Appendix]{DBLP:conf/nips/ChangRG19} verified with dReal (see SMT comparison in \Cref{subsec:eval:closed_loop}).
Dawson et al.~\cite{DBLP:conf/corl/DawsonQGF21} note ``scalable verification for learned certificate functions remains an open problem''.
Using \ac{SNNT} as an alternative to SMT for certificate verification is future work.
\chThreeFancyName{} evades the necessity to \emph{learn} barrier certificates by reusing established control-theory literature.

\paragraph{Open-Loop NNV.}
\looseness=-1
VEHICLE~\cite{Daggitt2022} integrates \ac{OLNNV} with Agda. %
However, VEHICLE only allows importing normalized, linear properties which limits applicability to \ac{CPS} verification in realistic settings.
\Ac{OLNNV} tools \cite{zhang2018efficient,xu2020automatic,xu2020fast,wang2021beta,Henriksen2020,Henriksen2021,katz2017reluplex,katz2019marabou,bunel2020branch,bak2020improved,DBLP:conf/cav/ElboherGK20}
do not consider the physical environment and thus \emph{cannot} guarantee the safety of an \ac{NNCS}. 
Even in cases where such methods allow nonlinear behavior in activation functions, they do not admit the verification of arbitrary polynomial constraints over the input and output space.
Most methodologies could be integrated into \chFourFancyName{}'s framework, i.e. we can lift complete off-the-shelf \ac{OLNNV} tools to verify polynomial constraints with arbitrary structure.
DNNV~\cite{Shriver2021} proposed an approach for \ac{OLNNV} query normalization using a simple expansion algorithm.
DNNV has the same limitations as all \ac{OLNNV} tools (no \ac{NNCS} analysis; no nonlinear constraints)
and is less efficient than \chFourFancyName{} w.r.t. \ac{NN} reachability analysis (see \Cref{sec:nnnv}).
Pre-image computation~\cite{kotha2024provably,10.1007/978-3-031-57256-2_1,matoba2020exact} computes input regions producing a fixed \ac{NN} prediction.
For efficiency, our work constrains the input space with invariants and value ranges---this efficiency would be lost by a backward computation alone.

\paragraph{Related Techniques.}
\looseness=-1
Unlike \cite{qin2018verification,DBLP:conf/nips/BalunovicBSGV19}, we support \emph{arbitrary} polynomial constraints and retain completeness.
Moreover, these works do not support arbitrary logical structure and represent \ac{OLNNV} techniques unable to analyze \acp{NNCS}.
Some prior work used techniques for constructing counterexample regions~\cite{DBLP:conf/nips/ZhangSS18,DBLP:conf/iclr/Dimitrov0GV22} but only for \emph{individual} datapoints---neither to compute exhaustive characterizations nor to regain completeness for incomplete verifiers.
Unlike classical DPLL(T)~\cite{DPLLT}, $\normalizeAlg{}$ is tailored to theory-solving w.r.t. reachability analyzers. To this end, $\normalizeAlg{}$ groups output regions with the same input constraints which deduplicates work (see \Cref{sec:nnnv}).

\paragraph{Closed-Loop NNV.}
\Ac{CLNNV} tools~\cite{Forets2019JuliaReachReachability,Schilling2021,tran2019star,tran2020neural,9093970,IvanovACM21,Ivanov2021,Huang2019,Fan2020,Dutta2019,Sidrane2021,Akintunde2022} only consider a fixed time horizon and thus cannot guarantee infinite-time horizon safety (see \Cref{subsec:eval:closed_loop}).
Unlike \cite{Julian2019a,Genin2022,Bak2022}, our approach is automated and applicable to \emph{any} \ac{CPS} expressible in \ac{dL} (not just ACAS X; see case studies in \Cref{apx:evaluation}).
Other approaches verify simplified control outputs~\cite{Genin2022}; rely on hand-crafted approximations while lacking exhaustive counterexamples characterizations~\cite{Julian2019a}; or require quantization effectively analyzing only a surrogate system instead~\cite{Bak2022}.

\section{Conclusion and Future Work}
\label{sec:conclusion}
\looseness=-1
This work presents \chThreeFancyName{}, the first technique exploiting \ac{dL} contracts to prove safety of \acp{NNCS} with piece-wise Noetherian \acp{NN}.
\chThreeFancyName{} requires \ac{OLNNV} tools capable of verifying non-normalized polynomial properties that did not exist. %
Thus, with \chFourFancyName{} we present an efficient, \emph{sound and complete} approach for the verification of such properties on piece-wise linear \acp{NN}.
We implemented \chFourFancyName{} for ReLU \acp{NN} in the tool \ac{SNNT} and demonstrate
the applicability and scalability
of our approach on multiple case studies (\Cref{sec:evaluation} and \Cref{apx:evaluation}).
The application to \acp{NNCS} by Julian et al.~\cite{julian2016policy,julian2020safe} shows that our approach scales even to intricate, high-stakes applications such as airborne collision avoidance.
Our results underscore the categorical difference of our approach to \ac{CLNNV} techniques.
Overall, we demonstrate an efficient and generally applicable approach
that opens the door for developing of goal-oriented \emph{and} infinite-time horizon safe \acp{NNCS} in the real world.

\paragraph{Future Work.}
We believe there is potential for engineering and algorithmic improvements that could further improve the performance of \ac{SNNT}.
Our implementation could also be extended to other piece-wise linear activation functions.
We would also like to explore the efficiency of \ac{SNNT} for non-polynomial specifications w.r.t. suitable SMT solvers.
\chFourFancyName{} may be of interest even beyond NNCS verification in \chThreeFancyName{}.
Since, e.g., neural barrier certificate verification also requires nonlinear \ac{OLNNV}, \chFourFancyName{} could be equally applicable in this context.
Finally, it would be interesting to apply our approach to further case studies.
To this end, the bottle-neck is currently the limited availability of \ac{NNCS} which are safe w.r.t. an infinite-time horizon.

\section*{Acknowledgements}
This work was supported by funding from the pilot program Core-Informatics of the Helmholtz Association (HGF) and by an Alexander von Humboldt Professorship.
Part of this research was carried out while Samuel Teuber was funded by a scholarship from the International Center for Advanced Communication Technologies (interACT) in cooperation with the Baden Württemberg Foundation.

\bibliography{neurips_2024}

\newif\ifchecklist
\checklisttrue

\clearpage
\appendix

\section{Additional Background}
\label{apx:additional_background}
This section provides additional background on Differential Dynamic Logic and the status of various activation functions w.r.t. our classification in \Cref{tab:nn_classes}.
\subsection{Differential Dynamic Logic}
\label{apx:subsec:background:dl}
\Acf{dL}~\cite{Platzer2017a,Platzer2020,DBLP:conf/lics/Platzer12a,DBLP:journals/jar/Platzer08} can analyze models of hybrid systems that are described through \emph{hybrid programs}.
The syntax of hybrid programs with Noetherian functions is defined by the following grammar, where the term $e$ and formula $Q$ are over real arithmetic with Noetherian functions:
\begin{equation}
\label{apx:eq:hp_grammar}
\grammar{\alpha,\beta}{\left. x \coloneqq e ~\middle\mid~ x \coloneqq * ~\middle\mid~ ?Q ~\middle\mid~ x'=f(x)\&Q ~\middle\mid~ \alpha \cup \beta ~\middle\mid~ \alpha;\beta ~\middle\mid~ \alpha^*\right.}
\end{equation}
\looseness=-1
The semantics of hybrid programs are defined by a transition relation over states in the set $\dLstatespace$, each assigning real values to all variables.
For example, the assignment state transition relation is defined as 
$\dLinterpretationapp{x \coloneqq e} = \{ \left(\nu,\omega\right) \in \dLstatespace^2 \mid \omega = \interpretationsubstitute{\nu}{x}{\nu\mleft(e\mright)} \} $
where $\interpretationsubstitute{\nu}{x}{\nu\mleft(e\mright)}$ denotes the state that is equal to $\nu$ everywhere except for the value of $x$, which is modified to $\nu\mleft(e\mright)$.
The other programs in the same order as in~(\ref{apx:eq:hp_grammar}) describe nondeterministic assignment of $x$, test of a predicate $Q$, continuous evolution along the differential equation within domain $Q$, nondeterministic choice, sequential composition, and nondeterministic repetition.
For a given program $\alpha$,
we distinguish between bound variables $\boundedvars{\alpha}$ and free variables $\freevars{\alpha}$ where bound variables are (potentially) written to, and free variables are read.
The formula $\left[\alpha\right] \psi$ expresses that $\psi$ is always satisfied after the execution of $\alpha$ and $\left\langle \alpha\right\rangle \psi$ that there exists a state satisfying $\psi$ after the execution of $\alpha$.
If a state $\nu$ satisfies $\psi$ we denote this as $\nu \vDash \psi$.
\ac{dL} comes with a sound and relatively complete proof calculus~\cite{DBLP:journals/jar/Platzer08,Platzer2017a,Platzer2020} as well as the interactive theorem prover \acf{keymaerax}~\cite{Fulton2015}.

\paragraph{ModelPlex.}
As demonstrated in the example, many safety properties for \acp{CPS} can be formulated through a \ac{dL} formula with a loop in which $\alpha_{\text{ctrl}}$ describes the (discrete) software, and $\alpha_{\text{plant}}$ describes the (continuous) physical environment:
\begin{equation}
\label[formula]{apx:eq:proven_safety_contract}
\phi \implies \left[ \left( \alpha_{\text{ctrl}}; \alpha_{\text{plant}} \right)^* \right] \psi .
\end{equation}
\looseness=-1
$\phi$ describes initial conditions, and $\psi$ describes the safety criterion 
to be guaranteed when following the control-plant loop.
To ensure that the behavior of controllers and plants in practice match all assumptions represented in the contract, \ac{modelplex} shielding~\cite{Mitsch2016} synthesizes correct-by-construction monitors for \acp{CPS}.
\Ac{modelplex} can also synthesize a correct controller monitor formula $\controllerFormula$ (\Cref{def:correct_monitor}).
The formula $\controllerFormula$ encodes a relation between two states.
If $\controllerFormula$ is satisfied, then the variable change from $x_i$ to $x_i^+$ corresponds to behavior modeled by $\alpha_{\text{ctrl}}$, i.e. the change upholds the guarantee from \Cref{apx:eq:proven_safety_contract}.
To reason about this state relation, we say that a state tuple $\left(\nu,\omega\right)$ satisfies a formula $\zeta$ (denoted as $\left(\nu,\omega\right) \vDash \zeta$) iff
$\interpretationsubstitute{\nu}{
\hphantom{\omega}x_1^+\ldots x_n^+
}{
\omega\mleft(x_1\mright)\ldots\omega\mleft(x_n\mright)
} \vDash \zeta$ (i.e. $\nu$ with the new state's value $\omega\mleft(x_i\mright)$ as the value of $x_i^+$ for all $i$, satisfies $\zeta$).
\begin{definition}[Correct Controller Monitor~\cite{Mitsch2016}]
\label{def:correct_monitor}
A controller monitor formula $\controllerFormula$ with free variables $x_1,x_1^+,\dots,x_n,x_n^+$ is called \emph{correct} for the hybrid program controller $\alpha_{\text{ctrl}}$ with bound variables $x_1,\dots,x_n$ iff the following \ac{dL} formula is valid:
$
\controllerFormula \implies \langle \alpha_{\text{ctrl}} \rangle \bigwedge_{i=1}^n x_i = x_i^+
$.
\end{definition}

\subsection{Classification of Activation Functions}
\label{subsec:activation_table}
\Cref{tab:activation_functions} is meant to provide a brief overview on the status of various common activation functions w.r.t. the classes from \Cref{tab:nn_classes}.
The chosen activation functions are representative examples from the comprehensive survey by Kunc et al.~\cite{kunc2024decades}.
For piece-wise polynomial and Noetherian functions, any activation function represented by a polynomial over the activation functions presented in \Cref{tab:activation_functions} is resp. piece-wise polynomial or Noetherian.
For piece-wise linear functions, any linear combination is resp. piece-wise linear.
This table does not claim to be comprehensive---many more activation functions can be classified into one of the categories.

\begin{table}[t]
    \centering
    \begin{tabularx}{\textwidth}{l | X}
         Activation Function & Justification\\\hline\hline
         \multicolumn{2}{l}{\textbf{ReLU \acp{NN}}}\\\hline
         ReLU & ReLU \\\hline
         \multicolumn{2}{l}{\textbf{piece-wise linear \acp{NN}}}\\\hline
         LeakyReLU~\cite[3.6.2]{kunc2024decades} & Split into $<0$ and $\geq 0$.\\
         HardTanh~\cite[3.6.18]{kunc2024decades} & Split into 3 regions.\\
         \hline
         \multicolumn{2}{l}{\textbf{piece-wise polynomial \acp{NN}}}\\\hline
         Square-based activation functions~\cite[3.8]{kunc2024decades} &
         by definition. \\
         Polynomial universal activation function~\cite[3.16]{kunc2024decades} &
         by definition for integer exponents.\\
         \hline
         \multicolumn{2}{l}{\textbf{piece-wise Noetherian \acp{NN}}}\\\hline
         Sigmoid~\cite[3.2]{kunc2024decades} & $\sigma'\mleft(x\mright)=\sigma\mleft(x\mright)\left(1-\sigma\mleft(x\mright)\right)$\\
         Tanh~\cite[3.2]{kunc2024decades} & $\mathrm{tanh}\left(x\right) = 2\sigma\left(2x\right)-1$\\
         Arctan~\cite[3.2.4]{kunc2024decades} & $\mathrm{arctan}'\left(x\right)=1/\left(1+x^2\right)$\\
         SiLU~\cite[3.3]{kunc2024decades} & Polynomial over $\sigma$\\
         $\exp\left(x\right)$ & $\exp'\left(x\right)=\exp\left(x\right)$\\
         GELU~\cite[3.3.1]{kunc2024decades} & In a Noetherian Chain with $\exp$ and the gaussian error funciton\\
         Approximate GELU~\cite[3.3.1]{kunc2024decades} & Polynomial over $\sigma$ or $\mathrm{tanh}$\\
         Softmax~\cite[3.5]{kunc2024decades}\footnote{The analysis of softmax can often be avoided by using the observation that its application does not change the order of outcomes, i.e. for classification-like tasks we can use the maximum before Softmax application instead.} & Polynomial over $\exp$.
    \end{tabularx}
    \caption{Overview on classification of activation functions.}
    \label{tab:activation_functions}
\end{table}

\section{Mosaic}
\label{apx:mosaic}
\looseness=-1
An overview of the algorithm is given in \Cref{fig:ncubev}:
A piece-wise linear \ac{NN} $g$ is a function which maps from an input space (left) to an output space (right).
We consider a part of the input space that is constrained by linear (orange) and nonlinear constraints (blue).
As our query is not normalized, it may talk about multiple parts of the input space, e.g. in our case the two sets labeled with $\text{case}_1$ and $\text{case}_2$.
For any such part of the input space, say $\text{case}_1$, we have a specification about \emph{unsafe} parts of the output space which must not be entered (red dashed areas on the right).
For classical \ac{OLNNV} the task is then, given a single input polytope, to compute the set of reachable outputs for $g$ and to check whether there exists an output reaching an unsafe output polytope.
In our case the task is more complicated, because the input is not a polytope, but an arbitrary polynomial constraint.
Moreover, for each polynomial input constraint ($\text{case}_1$ and $\text{case}_2$ in \Cref{fig:ncubev}) we may have different nonlinear unsafe output sets.
In order to retain soundness, we over- \emph{and} underapproximate nonlinear constraints (see turquoise linear approximations around the blue and red nonlinear constraints in \Cref{fig:ncubev}).
Once all nonlinear queries have approximations, we generate a \emph{mosaic} of the input space where each \emph{azulejo} (i.e. each input region) has its own normalized \ac{OLNNV}-query (polytope over the input; disjunction of polytopes over the output).
We must not only split between the two original cases ($\text{case}_1$ and $\text{case}_2$), but also between different segments of the approximating constraints (see the polytopes on the left in four shades of gray).
Each normalized query has associated nonlinear constraints that must be satisfied, but cannot be checked via off-the-shelf \ac{OLNNV} tools.
Using the normalized, linear \ac{OLNNV} queries we then instrument off-the-shelf tools to check whether any overapproximated unsafe region (turquoise on the right) is reachable.
The amber colored regions represent parts of the outputs reachable by $g$:
As can be seen by the two red dots, a reachable point within the overapproximated unsafe region may be a concrete unsafe output, or it may be spuriously unsafe due to the overapproximation.
To retain completeness, we need to exhaustively filter spurious counterexamples.
This is achieved by \emph{generalizing} the counterexample to a  region around the point in which the behavior of $g$ is equivalent to a single affine transformation.
Such counterexample regions always exist 
due to $g$'s piece-wise linearity.
For example, in \Cref{fig:ncubev} $g$ affinely maps the input space triangle (left) to the upper output space triangle (in amber on the right).
We can then check for concrete counterexamples in this region w.r.t. the affine transformation using an SMT solver.
This avoids the need to encode the entire \ac{NN} in an SMT formula.
By excluding explored regions, we can then enumerate all counterexample regions and thus characterize the unsafe input set.

\looseness=-1
The approach is outlined in \Cref{algorithm:nnnv:the_algorithm} and proceeds in four steps, all of which will be presented in detail throughout the sections below:
First, \overapproxAlg{} generates approximate linearized versions for all nonlinear atoms of $p$ on a bounded domain and enriches the formula with these constraints ($p_o$). %
Next, \normalizeAlg{} generates a mosaic of $p_o$'s input space
where each azulejo (i.e. each input space region) has an associated linear normalized query $q_l$.
Each $q_l$ is paired with an associated disjunctive normal form of nonlinear constraints $q_n$.
The disjunction over all $q_l \land q_n$ is equivalent to the input query $p_o$ and the disjunction over all $q_l$ overapproximates \overapproxAlg{}'s input $p$.
Each of the linear queries $q_l$ is processed by \enumerateAlg{} which internally uses an off-the-shelf \ac{OLNNV} tool to enumerate all counterexample regions for a given query.
Each counterexample region is defined through a polytope in the input space $\iota \subset \mathbb{R}^I$ and an affine mapping to the output space $\omega:\mathbb{R}^I \to \mathbb{R}^O$ that summarizes the \ac{NN}'s local behavior in $\iota$.
The procedure \filterAlg{} then
checks whether a counterexample region is spurious using an SMT solver.
This task is easier than searching nonlinear counterexamples directly since the \ac{NN}'s behavior is summarized by the affine mapping $\omega$.
Using the definitions from the following subsections, this procedure is sound and complete (see proof on \cpageref{proof:nnnv:sound_and_complete}):

\newcommand{\modelsof}[1]{\ensuremath{\text{models}\mleft(#1\mright)}}
\subsection{Linearization}
\label{subsec:overapprox}
\looseness=-1
The procedure \overapproxAlg{} enriches each nonlinear atom $a_i$ of an \ac{OLNNV} query with linear approximations.
The approximations are always with respect to a value range $R$ and we use  overapproximations $\overline{a_i}$ (for any state $\nu$ with $\nu \vDash a_i \land R$ it holds that $\nu \vDash \overline{a_i}$) as well as underapproximations $\underline{a_i}$ (for any state $\nu$ with $\nu \vDash \underline{a_i} \land R$ it holds that $\nu \vDash a_i$).
Essential to this component is the idea that \overapproxAlg{} produces an equivalent formula:
Approximate atoms do not replace, but complement the nonlinear atoms and the generation of concrete linear regions is left to the mosaic step (see \Cref{subsec:query_decompose}).
\overapproxAlg{} is defined as follows:
\begin{definition}[Linearization]
\label{def:lin_overapprox}
\overapproxAlg{} receives an \ac{OLNNV} query $p$ with nonlinear atoms $a_1,\dots,a_k$ and value range $R$ s.t. $p \implies R$ is valid.
It returns a query $p \land \bigwedge_{i=1}^k \left( \left(a_i \implies \overline{a_i}\right) \land \left(\underline{a_i} \implies a_i\right) \right)$ where $\overline{a_i}\in\formulaset{\linreals}$ (resp. $\underline{a_i}\in\formulaset{\linreals}$) are overapproximations (resp. underapproximations) of $a_i$ w.r.t. $R$.
\end{definition}
We use an approximation procedure based on \ac{OVERT}~\cite{Sidrane2021} while further approximating $\max/\min$ terms within \ac{OVERT}.
This results in a disjunction of linear constraints (see below for details).
As highlighted above, \overapproxAlg{} produces equivalent formulas
and therefore retains the relations between linear and nonlinear atoms (see proof on \cpageref{proof:linearization:equivalence}):
\begin{lemmaE}[Equivalence of Linearization][end,text link=]
\label{lem:linearization:equivalence}
Let $p\in\formulaset{\reals}$ be some \ac{OLNNV} query and $p_o$ be the result of $\overapproxAlg{}\mleft(p\mright)$.
Then $p$ is equivalent to $p_o$.
\end{lemmaE}
\begin{proofE}
\label{proof:linearization:equivalence}
Let $p\in\formulaset{\reals}$ be some nonlinear \ac{OLNNV} query and $a_1,\dots,a_k$ be the nonlinear atoms in $p$.
From the definition of \overapproxAlg{}, we know that $p_o$ has the form $p \land \bigwedge_{i=1}^k \left( \left(a_i \implies \overline{a_i}\right) \land \left(\underline{a_i} \implies a_i\right) \right)$.
By definition, it holds for approximations $\underline{a_i}, \overline{a_i}$ that 
for any state $\nu$ with $\nu \vDash a_i \land R$ it also holds that $\nu \vDash \overline{a_i}$ (resp. for any state $\nu$ with $\nu \vDash \underline{a_i} \land R$ it also holds that $\nu \vDash a_i$).
Consequently, the formulas $R \implies \left(a_i \implies \overline{a_i}\right)$ and $R \implies \left(\underline{a_i} \implies a_i\right)$ are valid for all $a_i$.
Let $\nu$ be a state such that $\nu \vDash p$.
Then, by definition $\nu \vDash R$ and due to the above mentioned validity it therefore holds that $\nu \vDash a_i \implies \overline{a_i}$ and $\nu \vDash \underline{a_i} \implies a_i$.
Therefore, $\nu \vDash p_o$.
Conversely, for any state with $\nu \vDash p_o$ it also holds that $\nu \vDash p$.
\end{proofE}

\paragraph{Approximation.}
For concicenes we present our approximation approach for over-approximations.
Our under-approximations are computed in the same manner, however lower and upper bound computation of terms is flipped in this case.
We can approach the question of overapproximation construction from a perspective of models:
For a given formula $\zeta$, let $\dLinterpretationapp{\zeta}=\left\{\nu \in \dLstatespace \middle\mid \nu \vDash \zeta \right\}$ be the set of models (i.e. states satisfying $\zeta$).
We then obtain the following Lemma for the relation between overapproximations and model sets:
\begin{lemma}[Supersets are Overapproximations]
Assume bounds $B$ on all variables and a formula $\zeta$.
Another formula $\zeta_o \in \formulaset{\linreals}$ is a linear overapproximation of $\zeta$ iff $\dLinterpretationapp{B \land \zeta} \subseteq \dLinterpretationapp{B \land \zeta_o}$.
\end{lemma}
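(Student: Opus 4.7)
The plan is to prove both directions by straightforward unfolding of definitions, since the lemma essentially translates the pointwise condition in \Cref{def:lin_overapprox} into set-inclusion language. Recall that, per \Cref{def:lin_overapprox}, $\zeta_o \in \formulaset{\linreals}$ is a linear overapproximation of $\zeta$ w.r.t.\ the bounds $B$ exactly when for every state $\nu$ with $\nu \vDash \zeta \land B$ one has $\nu \vDash \zeta_o$. On the other side, the model-set notation $\dLinterpretationapp{\varphi} = \{\nu \in \dLstatespace \mid \nu \vDash \varphi\}$ directly translates satisfaction into set membership, so the subset condition $\dLinterpretationapp{B \land \zeta} \subseteq \dLinterpretationapp{B \land \zeta_o}$ is equivalent to the implication $\nu \vDash B \land \zeta \implies \nu \vDash B \land \zeta_o$ holding for all $\nu$.

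For the forward direction ($\Rightarrow$), I would assume $\zeta_o$ is a linear overapproximation of $\zeta$ and pick any $\nu \in \dLinterpretationapp{B \land \zeta}$; then $\nu \vDash B$ and $\nu \vDash \zeta$, hence by the overapproximation hypothesis $\nu \vDash \zeta_o$, so together with $\nu \vDash B$ this yields $\nu \in \dLinterpretationapp{B \land \zeta_o}$. For the reverse direction ($\Leftarrow$), I would take any $\nu$ with $\nu \vDash \zeta \land B$, conclude $\nu \in \dLinterpretationapp{B \land \zeta}$, apply the subset hypothesis to obtain $\nu \in \dLinterpretationapp{B \land \zeta_o}$, and project onto the $\zeta_o$ conjunct to recover $\nu \vDash \zeta_o$, which is exactly the overapproximation condition.

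The only mildly subtle point, rather than an obstacle, is the asymmetric appearance of $B$ on the right of the subset: one might worry that $\dLinterpretationapp{B \land \zeta} \subseteq \dLinterpretationapp{\zeta_o}$ would be the ``natural'' subset formulation. However, since every $\nu$ in $\dLinterpretationapp{B \land \zeta}$ automatically satisfies $B$, adding the $B$ conjunct on the right side is redundant for this direction, and for the reverse direction it is precisely what allows us to discard the $B$ piece and extract only $\nu \vDash \zeta_o$. The syntactic constraint $\zeta_o \in \formulaset{\linreals}$ is carried along as a hypothesis in both directions and does not interact with the model-theoretic argument.
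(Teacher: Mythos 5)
Your proof is correct: the lemma is a direct definitional restatement (the paper in fact states it without proof, treating it as immediate from the definition of overapproximation and of the model set $\dLinterpretationapp{\cdot}$), and your two-directional unfolding, including the observation that the $B$ conjunct on the right-hand side is redundant given $\nu \vDash B$ on the left, is exactly the argument the paper leaves implicit.
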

This presentation only considers the case of a polynomial constraint $\theta > 0$.
Our approximation procedure begins by computing the relational approximation of $\theta$ using the OVERT algorithm~\cite{Sidrane2021}.
By resolving intermediate variables introduced through OVERT, we obtain an approximation of the form $\underline{\theta_{pwl}} \leq \theta \leq \overline{\theta_{pwl}}$ where both bounds are piece-wise linear functions (i.e. linear real arithmetic with the addition of $\max$ and $\min$ operators).
It then holds that $\dLinterpretationapp{B \land \theta > 0} \subseteq \dLinterpretationapp{B \land \overline{\theta_{pwl}}>0}$.
We now distinguish between univariate and multivariate piece-wise linear behavior:
For univariate piece-wise linear behavior there is some variable $v \in \varsofformula{\overline{\theta_{pwl}}}$ and some coefficient $c \in \reals$ with terms $\theta_1,\theta_2$ such that $\dLinterpretationapp{B \land \overline{\theta_{pwl}}>0} = \dLinterpretationapp{B \land \overline{\theta_1}>0 \land v>c} \cup \dLinterpretationapp{B \land \overline{\theta_2}>0 \land v \leq c}$.
In order to subsume piece-wise linear splits along variable $v$ which are close to $c$, we construct the following overapproximation for a small $\varepsilon>0$ resulting in two normalized queries:
\[
\dLinterpretationapp{B \land \overline{\theta_{pwl}}>0} \subseteq  \dLinterpretationapp{B \land \overline{\theta_1}>0 \land v>\left(c-\varepsilon\right)} \cup \dLinterpretationapp{B \land \overline{\theta_2}>0 \land v \leq \left(c+\varepsilon\right)}.
\]
For the multivariate cases we approximate the piece-wise linear behavior.
In particular, we introduce a (to the best of our knowledge) novel, closed-form upper bound for the linear approximation of $\max$ terms:
\begin{lemma}[Upper Bound for multivariate $\max$]
\label{lemma:approx:upper}
Let $f,g:\mathbb{R}^{I+O}\to\mathbb{R}$ be two linear functions, and let $B\subset\mathbb{R}^{I+O}$ be a closed interval box, then:
Assume ${x_g \coloneqq \arg\max_{x \in B} g\left(x\right) - f\left(x\right)}$ and ${x_f \coloneqq \arg\max_{x \in B} f\left(x\right) - g\left(x\right)}$
 where $f\left(x_f\right)-g\left(x_f\right)$ and $g\left(x_g\right)-f\left(x_f\right)$ are both positive. 
Further, assume the following assignments with ${\gamma \coloneqq f\left(x_f\right) - f\left(x_g\right) - g\left(x_f\right) + g\left(x_g\right)}$:
\begin{align*}
    \mu \coloneqq -\frac{g\left(x_f\right)-f\left(x_f\right)}{\gamma}
    &&
    c \coloneqq -\frac{\left(f\left(x_f\right)-g\left(x_g\right)\right)\left(f\left(x_g\right)-g\left(x_g\right)\right)}{\gamma}.
\end{align*}
In this case, it holds for all $x \in B$ that:
$\mu f\left(x\right) + \left(1-\mu\right) g\left(x\right) + c \geq \max\left(f\left(x\right),g\left(x\right)\right)$.
In particular, it holds that
\[
\dLinterpretationapp{B \land \left(\max\left(f\left(x\right),g\left(x\right)\right) > 0\right)}
\subseteq
\dLinterpretationapp{B \land \left(\left(\mu f\left(x\right) + \left(1-\mu\right) g\left(x\right) + c\right) > 0\right)}
\]
\end{lemma}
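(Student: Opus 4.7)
The plan is to reduce the desired pointwise inequality to two affine inequalities. Since $\max(f(x),g(x)) \geq f(x)$ and $\max(f(x),g(x)) \geq g(x)$ with equality achieved by one of the two, showing that the affine function
$L(x) := \mu f(x) + (1-\mu) g(x) + c$
satisfies both $L(x) \geq f(x)$ and $L(x) \geq g(x)$ on $B$ is equivalent to $L(x) \geq \max(f(x),g(x))$ on $B$. The set inclusion then follows immediately: if $x \in B$ and $\max(f(x),g(x)) > 0$, then $L(x) \geq \max(f(x),g(x)) > 0$, so $x$ lies in the right-hand set.

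First I would verify that $\mu \in (0,1)$, which is what makes $L$ an affine combination. By assumption, $a := f(x_f) - g(x_f) > 0$ and $b := g(x_g) - f(x_g) > 0$, so $\gamma = a + b > 0$. Then $\mu = a/\gamma \in (0,1)$, and $1 - \mu = b/\gamma \in (0,1)$.

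Next I would analyze the two affine residuals
\[
r_f(x) := L(x) - f(x) = (1-\mu)\bigl(g(x)-f(x)\bigr) + c, \qquad r_g(x) := L(x) - g(x) = \mu\bigl(f(x)-g(x)\bigr) + c.
\]
Each is affine in $x$, and since $1-\mu, \mu > 0$, the minimum of $r_f$ over $B$ is attained where $f-g$ is maximized, namely at $x_f$, while the minimum of $r_g$ over $B$ is attained where $g-f$ is maximized, namely at $x_g$. Thus
\[
\min_{x \in B} r_f(x) = c - (1-\mu)\, a, \qquad \min_{x \in B} r_g(x) = c - \mu\, b.
\]
Both minima being nonnegative is therefore equivalent to $c \geq (1-\mu)a$ and $c \geq \mu b$. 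The specific choice $\mu = a/\gamma$, $c = ab/\gamma$ realizes equality in both constraints simultaneously, making $L$ the tightest affine upper bound of its form.

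The remaining step, which I expect to be the main obstacle, is the algebraic verification that the closed-form expressions given in the lemma for $\mu$ and $c$ (written in terms of $f(x_f), f(x_g), g(x_f), g(x_g)$) indeed coincide with $a/\gamma$ and $ab/\gamma$. This reduces to expanding the numerators, using $\gamma = a + b = f(x_f) - f(x_g) - g(x_f) + g(x_g)$, and grouping terms; the signs work out because $f(x_g) - g(x_g) = -b$. Once this identification is done, the two affine inequalities $L \geq f$ and $L \geq g$ hold on all of $B$, and the stated set inclusion follows.
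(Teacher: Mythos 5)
Your strategy is in substance the same as the paper's: the paper characterizes $(\mu,c)$ as the solution of the two touching conditions $\mu f(x_f)+(1-\mu)g(x_f)+c=f(x_f)$ and $\mu f(x_g)+(1-\mu)g(x_g)+c=g(x_g)$, and then proves $L\ge g$ and $L\ge f$ by reducing each inequality to the extremal deviation attained at $x_f$ resp.\ $x_g$ --- which is exactly your residual-minimization argument. Writing $a:=f(x_f)-g(x_f)$ and $b:=g(x_g)-f(x_g)$, your conclusions $\mu=a/\gamma$, $1-\mu=b/\gamma$, and that the tight constant is $c=ab/\gamma$ (equality in both constraints $c\ge(1-\mu)a$ and $c\ge\mu b$) coincide with what the paper's proof actually establishes.

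The gap sits precisely in the step you deferred as ``the main obstacle'' and asserted would work out: for the constant as printed in the lemma it does not. The stated formula gives $c=-\frac{(f(x_f)-g(x_g))(f(x_g)-g(x_g))}{\gamma}=\frac{b\,(f(x_f)-g(x_g))}{\gamma}$, which equals your $ab/\gamma=\frac{b\,(f(x_f)-g(x_f))}{\gamma}$ only if $g(x_f)=g(x_g)$. Worse, with the printed $c$ the claimed bound is false: take $f(x)=x$, $g(x)=-x+\tfrac12$ on $B=[-1,1]$, so $x_f=1$, $x_g=-1$, $a=\tfrac32$, $b=\tfrac52$, $\gamma=4$, and both stated positivity hypotheses hold (in particular $g(x_g)-f(x_f)=\tfrac12>0$); the printed formula yields $c=-\tfrac{5}{16}$, hence $L(x)=-\tfrac14 x$ and $L(1)=-\tfrac14<1=\max\left(f(1),g(1)\right)$. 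So the first factor in the numerator of $c$ must read $f(x_f)-g(x_f)$ (and the hypothesis is presumably meant as $g(x_g)-f(x_g)>0$, which is what you silently used); the paper's own proof implicitly relies on this corrected value, since its defining equation system and its final simplification require $\left(f(x_f)-g(x_f)\right)\left(f(x_g)-g(x_g)\right)=-ab$. With that correction your argument closes and is equivalent to the paper's; as written, the identification you postponed cannot be carried out, so the proposal is not yet a proof of the statement as printed.
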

\begin{proof}
At first, the choices for $\mu$ and $c$ may seem arbitrary, however they are actually the solution of the following set of equations:
\begin{align*}
    \mu f\left(x_f\right) + \left(1-\mu\right) g\left(x_f\right)+c &= f\left(x_f\right) \\
    \mu f\left(x_g\right) + \left(1-\mu\right) g\left(x_g\right)+c &= g\left(x_g\right)
\end{align*}
The choice of $\mu$ and $c$ ensures that we obtain a shifted convex mixture of the two linear functions that matches $f$ and $g$ at their points of maximal deviation.
We can now prove that this shifted mixture is indeed larger than $f$ or $g$ at any point within $B$.
Let us begin by proving that our bound is larger than $g$ for $x \in B$.
In the following each formula implies the validity of the formula above:
\begin{align*}
    g\left(x\right) &\leq \mu f\left(x\right) + \left(1-\mu\right) g\left(x\right) + c\\
    0 & \leq \mu \left(f\left(x\right) - g\left(x\right)\right) + c\\
    0 & \leq -\frac{1}{\gamma} \left(
    \left(g\left(x_f\right) - f\left(x_f\right) \right)
    \smash{\underbrace{\left(f\left(x\right) - g\left(x\right) \right)}_{
    \mathclap{\leq f\left(x_f\right) - g\left(x_f\right) \text{ for }x \in B}
    }}
    + 
    \left(f\left(x_f\right)-g\left(x_g\right)\right)\left(f\left(x_g\right)-g\left(x_g\right)\right)
    \right)%
    \vphantom{\underbrace{\left(f\left(x\right) - g\left(x\right) \right)}_{
    \mathclap{\leq f\left(x_f\right) - g\left(x_f\right) \text{ for }x \in B}
    }}\\
    0 & \leq -\frac{1}{\gamma} \left(
    \left(g\left(x_f\right) - f\left(x_f\right) \right)
    \left(f\left(x_f\right) - g\left(x_f\right)\right)
    + 
    \left(f\left(x_f\right)-g\left(x_g\right)\right)\left(f\left(x_g\right)-g\left(x_g\right)\right)
    \right) \\
    0 & \leq \frac{1}{\gamma} \left(
    f\left(x_f\right) - g\left(x_f\right)
    \right) \gamma \Leftrightarrow g\left(x_f\right) \leq f\left(x_f\right)
\end{align*}
$g\left(x_f\right) \leq f\left(x_f\right)$ is trivially true since $x_f$ was specifically chosen this way.
We also have to prove that our bound is bigger than $f$ for $x \in B$:
\begin{align*}
f\left(x\right) &\leq \mu f\left(x\right) + \left(1-\mu\right) g\left(x\right) + c\\
0 &\leq \underbrace{
\left(g\left(x\right) - f\left(x\right)\right)
}_{
\mathclap{\geq g\left(x_f\right) - f\left(x_f\right) \text{ for }x \in B}
}
+ 
\underbrace{\mu \left( f\left(x\right)-g\left(x\right) \right) + c}_{
\mathrlap{\geq f\left(x_f\right)-g\left(x_f\right) \text{ (see previous proof)}}
}\\
0 &\leq g\left(x_f\right) - f\left(x_f\right) + f\left(x_f\right) - g\left(x_f\right) = 0
\end{align*}
Thus we obtain an upper bound for the function.
\end{proof}
By applying \ac{OVERT} followed by the univariate resolution and multivariate overapproximation up to saturation, we compute an overapproximation and underapproximation for each nonlinear atom.
Subsequently, we append these new formulas to the original formula as defined in \Cref{def:lin_overapprox}.

\paragraph{Running Example.}
\looseness=-1
The turquoise constraints in \Cref{fig:ncubev} visualize exemplary linearized constraints.
For \ac{ACC} one nonlinear atom is $\accRelPos-\frac{\accRelVel^2}{2B} \geq 0$.
The formula $\text{accApprox} \equiv \accRelPos - \frac{100^2}{2B} \geq 0 \land \accRelVel> 50 \lor \accRelVel\leq 50 \land \accRelPos - \frac{50^2}{2B} \geq 0$
underapproximates the atom for $\accRelVel \in \left[0, 100\right]$.
We can thus append the following formula to our query:
$\left(\text{accApprox} \implies \accRelPos -\frac{\accRelVel^2}{2B} \geq 0\right)$.
For $\accRelVel \in \left[0, 100\right]$ this formula is always satisfied\footnotemark.
\footnotetext{In practice, $\accRelVel$ may also be negative requiring a more complex approximation.}

\subsection{Input Space Mosaics}
\label{subsec:query_decompose}
\looseness=-1
The \normalizeAlg{} procedure takes a central role in the verification of nonlinear, non-normalized \ac{OLNNV} queries. %
Classically, one uses DPLL(T) to decompose an arbitrary formula into conjunctions then handled by a theory solver.
\Ac{OLNNV}'s crux is its use of reachability methods which do not lend themselves well to \textit{classic} DPLL(T):
Its usage would result in duplicate explorations of the same input space w.r.t. different output constraints which is inefficient.
Therefore, we generalize DPLL(T)~\cite{DPLLT} through the \normalizeAlg{} procedure.
The procedure receives a quantifier-free\footnote{A quantifier-free $p$ can be assumed as real arithmetic admits quantifier elimination~\cite{tarski1951}. In practice, all queries of interest were already quantifier-free.}, non-normalized \ac{OLNNV} query and 
enumerates \emph{azulejos} of the input space each with an associated normalized linear \ac{OLNNV} query $q_l$ (conjunction over input atoms, disjunctive normal form over output atoms) and nonlinear atoms in disjunctive normal form $q_n$.
The input space is thus turned into a mosaic and the disjunction over all queries is \emph{equivalent} to the input query.
We can then obtain classical DPLL(T) by marking all atoms as linear input constraints.
Our implementation of \normalizeAlg{} instruments a SAT solver on the Boolean skeleton of $p$ as well as a real arithmetic SMT solver to restructure a formula in this way. %

For a formula $\zeta \in \formulaset{\reals}$, let $\satatoms{\zeta}$ be the set of set of signed atoms such that for all $A \in \satatoms{\zeta}$
it holds that $A$ only contains atoms of $\zeta$ or its negations ($A \subseteq \atomsof{\zeta} \cup \left\{\neg b\mid b \in\atomsof{\zeta}\right\}$).
Further, we require for $\satatoms{\zeta}$ that for any state $\nu$ it holds that $\nu \vDash \zeta$ iff there exists an $A \in \satatoms{\zeta}$ such that $\nu \vDash \bigwedge_{a \in A} a$.
Note that there may exist multiple such sets in which case we can choose an arbitrary one.
For example, for $\zeta \equiv x>0 \lor \neg\left(y>0\right)$ we could get 
$
\satatoms{\zeta} = \left\{
\left\{x>0\right\},
\left\{\neg\left(x>0\right),\neg\left(y>0\right)\right\}\right\}
$.
For a given formula $\zeta$, we will call $J_\zeta$ the set of input variables.
We introduce the following notation for projection of \text{sat-atoms} on the set $J_\zeta$:
\[
\projection{\satatoms{\zeta}}{J_\zeta} = \left\{ \left\{ a \mid a \in A \land \varsofformula{a} \subseteq J_\zeta\right\} \mid A \in \satatoms{\zeta} \right\}.
\]
For example, reconsidering the previous example with $J_\zeta=\left\{x\right\}$ we would get 
$\projection{\satatoms{\zeta}}{J_\zeta} = \left\{\left\{ x > 0\right\}, \left\{\neg\left(x>0\right)\right\}\right\}$.
For a given \ac{OLNNV} query $p_o$ and a given set of input variables $J_{p_o}$, \normalizeAlg{} then initially enumerates feasible combinations of linear input atoms (the azulejos) and for each such combination feasible combinations of mixed/output atoms are enumerated.
This results in the following set:
\begin{equation*}
S_1 = \left\{
\left(
\bigwedge_{\substack{a\in i\\a \in \formulaset{\linreals}}} a
\right)
\land \left(\bigvee_{
o \in \satatoms{p_o \land i}
} \left(\bigwedge_{\substack{b \in o\\b \in \formulaset{\linreals}}} b\right)\right)
~\middle|~
i \in \projection{\satatoms{p_o}}{J_{p_o}}
\right\}
\end{equation*}
Finally, for each $q_l \in S_1$, we can generate all possible combinations of nonlinear atoms $S_2=\satatoms{p_o\land q_l}$ and generate their disjunction:
\begin{equation}
    q_n \equiv \bigvee_{A \in S_2} \bigwedge_{a \in A} a
\end{equation}
We achieve this by enumerating all satisfying assignments for the boolean skeleton of $p_o$ (i.e. the formula where all atoms are substituted by boolean variables) using an incremental SAT solver.
This initially happens in the same manner as it is done for the classical version of DPLL(T).
However, once a model is found, we fix the assignment of linear input-only atoms and enumerate all other satisfying assignments of linear atoms, generating the disjunctions within $S_1$.
For each conjunction we additionally enumerate possible assignments for the nonlinear atoms.
Notably, through the encoding of \overapproxAlg{} the procedure automatically knows which truth-combinations of a nonlinear constraint and its approximations may appear.
We additionally provide information on linear dependencies between linear atoms to the SAT solver.
All enumeration procedures are interleaved with calls to SMT solvers for linear and polynomial real arithmetic constraints, which check whether a given combination of constraints is indeed also satisfiable in the theory of real arithmetic (i.e. when interpreting the atoms as real arithmetic constraints instead of as boolean variables).
In order to discard unsatisfiable solutions more quickly, we make use of unsatisfiability cores and conversely use a cache for satisfiable assignment combinations.
We exploit partial models returned by the SAT solver to omit atoms which can (potentially) appear in both polarities for a given combination of constraints.

We show that the decomposition is correct (i.e. the disjunction over all queries is equivalent to the original query, see \Cref{lem:decomp:correct}) and that it is minimal in the sense that 
the resulting azulejos do not overlap (see \Cref{lem:decomp:minimum}) with proofs on \cpageref{proof:decomp:correct,proof:decomp:minimum}:
\begin{propositionE}[Correctness of Mosaic][end,text link=]
\label{lem:decomp:correct}
Let $p$ be any \ac{OLNNV} query.
Let $Q \subset \formulaset{\linreals}\times\formulaset{\reals}$ be the set returned by $\normalizeAlg{}\mleft(p\mright)$, then the following formula is valid:
$p \iff \left(\bigvee_{\left(q_l,q_n\right) \in Q} \left(q_l \land q_n\right)\right)$
\end{propositionE}%
\begin{proofE}
\label{proof:decomp:correct}
We begin by considering the case where some state $\nu$ satisfies $\bigvee_{\left(q_l,q_n\right) \in Q} q_l \land q_n$.
By definition, this means that there exists some $\left(q^*_l, q^*_n\right) \in Q$ such that $\nu \vDash q_l \land q_n$.
Through the definition of the set $S_1$ in \Cref{apx:mosaic}, we know that $q^*_l$ contains a conjunction over linear input atoms $i^*_l$.
Let $o^*_l \in \satatoms{p \land i^*_l}$ be the set of mixed/output atoms such that $\nu \vDash \bigwedge_{b \in o^*} b$.
Further, since $\nu \vDash q_n$, we know there also exists an $A^*\in\satatoms{p\land q_l}$ such that $\nu \vDash \bigwedge_{a^*\in A^*}a^*$.
Through the definition of sat-atoms and its projection we then know that $A^* \cup i^* \cup o^* \in \satatoms{p}$.
Consequently, it must hold that $\nu \vDash p$.

Consider now the other direction where for some state $\nu$ it holds that $\nu \vDash p$.
By definition of sat-atoms, its projection and $S_1$ we know that there must exist some $i^* \in \projection{\satatoms{p}}{J_{p}}$ such that $\nu \vDash i^*$.
Moreover, there must exist an $o \in \satatoms{p \land i^*}$ such that $\nu \vDash \bigwedge_{b \in o^*} b$.
Finally, since $\nu \vDash  i^* \land o$, there must exist an $A^* \in \satatoms{p \land i^* \land o}$ such that $\nu \vDash  \bigwedge_{a^*\in A^*}a^*$
Consequently, there exists a $\left(q_l,q_n\right) \in Q$ such that $\nu \vDash q_l \land q_n$ and therefore $\nu \vDash \bigvee_{\left(q_l,q_n\right) \in Q} q_l \land q_n$.
\end{proofE}%
\begin{propositionE}[Flatness of Mosaic][end,text link=]
\label{lem:decomp:minimum}
\looseness=-1
Let $\left(i_1 \land \bigvee_j o_{1,j}\right), \left( i_2 \land \bigvee_j o_{2,j} \right)$ be two linear queries enumerated by \normalizeAlg{} then $i_1 \land i_2$ is unsatisfiable.
\end{propositionE}%
\begin{proofE}
\label{proof:decomp:minimum}
Assume there were two linear queries $\left(i_1 \land \bigvee_j o_{1,j}\right)$ and $\left( i_2 \land \bigvee_j o_{2,j} \right)$ such that $i_1 \land i_2$ had a model.
By definition, each set $A \in \satatoms{p}$ must contain each atom of $p$ or its negation.
Consider now the projection $\projection{\satatoms{p}}{J_{p}}$ from which we obtain all $i$s (in particular $i_1$ and $i_2$):
Since $i_1$ and $i_2$ contain the same set of atoms, it must be the case that for some atom $a \in i_1$, it holds that $\neg a \in i_2$ or vice versa (otherwise, the two would be identical).
Through the law of the excluded middle, we get that $a \land \neg a$ is unsatisfiable, and thus $i_1 \land i_2$ is unsatisfiable.
\end{proofE}%
\noindent
We could use this approach to decompose a nonlinear formula into a set of normalized linear \ac{OLNNV} queries without approximation.
\normalizeAlg{} then soundly omits all nonlinear constraints.
However, this leads to many spurious counterexamples.
Therefore, we add linear approximations (\Cref{subsec:overapprox}) of atoms which are then automatically part of the conjunctions returned by \normalizeAlg{}.

\paragraph{Running Example.}
\looseness=-1
In the previous section we extended our query by a linear underapproximation.
Our procedure generates an azulejo for the case where $\accRelPos - \frac{100^2}{2B} \geq 0 \land \accRelVel > 50$ is satisfied (implying $\accRelPos - \frac{\accRelVel^2}{2B} \geq 0$) and the case where it is not.
While the linear approximation is an edge of the mosaic tile, the original atom (the tile's ``painting'' describing the precise constraint) would be part of the \emph{nonlinear} disjunctive normal form.
For each azulejo, the output conjunctions of $\text{accCtrlFml}$ are enumerated.
For OVERT's approximation with $N=1$, our implementation decomposes the ACC query into 20 normalized queries with up to 10 cases in the output constraint disjunction.
Without \normalizeAlg{} each case would be treated as a separate reachability query leading to significant duplicate work.

\paragraph{Relation to DPLL(T)}
Abstracting away the real-arithmetic, the \normalizeAlg{} algorithm generates tuples of normalized \ac{OLNNV} queries and disjunctive normal forms that are satisfiable w.r.t. a theory solver $T$.
The algorithm itself interleaves SAT-based reasoning about a boolean abstraction (annotated with information on whether an atom is linear and/or an input constraint) and theory solver invocations.
We can now consider the case where all atoms (independent of their concrete contents and the theory $T$) are annotated as linear input constraints:
In this case \normalizeAlg{} merely returns a mosaic of this ``input'' space where each azulejo corresponds to a conjunction of atoms that is satisfiable w.r.t. to the theory solver $T$ and the disjunction over all those conjunctions is then once again equivalent to the original formula -- corresponding to DPLL(T)'s behavior.

\subsection{Counterexample Generalization and Enumeration}
\label{subsec:enumerate}
The innermost component of our algorithm enumerates all counterexample regions (\enumerateAlg{}).
To this end, \enumerateAlg{} requires an algorithm which generalizes counterexample \emph{points} returned by \ac{OLNNV} to \emph{regions} (\generalizeAlg{}).
For each such counterexample region we can then check if there exist concrete violations of the nonlinear constraints (\filterAlg{}).
We begin by explaining \generalizeAlg{} which
converts a counterexample point returned by \ac{OLNNV} into a counterexample region.
The key insight for this approach is that a concrete counterexample $\left(z_0,x_0\right)$ 
returned by an \ac{OLNNV} tool
induces a region of points with \emph{similar behavior} in the \ac{NN}.
A concrete input $z_0$ induces a fixed activation pattern for all piece-wise linear activations within the \ac{NN} in a region $\iota$ around $z_0$.
Consider the first layer's activation function $f^{(1)}$:
$f^{(1)}$ can be decomposed into linear functions $f_i$ and so is a sum of affine transformations $A_i z_0 + b_i$ which are active iff $q_i\left(z_0\right)$ is true.
We can then describe $f^{(1)}$'s local behavior around $z_0$ as the linear combination of all affine transformations active for $z_0$.
This sum is itself an affine transformation.
By iterating this approach across layers, we obtain a single affine transformation $\omega$ describing the \ac{NN}'s behavior in $\iota$.
The regions returned by \generalizeAlg{} are then defined as follows:
\begin{definition}[Counterexample Region]
For a given \ac{OLNNV} query $q$ and piece-wise linear \ac{NN} $g$, let $\left(z_0,x_0\right) \in \reals^{I}\times\reals^{O}$ be a counterexample, i.e. $x_0=g\mleft(z_0\mright)$ and $q\mleft(z_0,x_0\mright)$ holds.
The \emph{counterexample region} for $z_0$ is the maximal polytope $\iota \subset \reals^I$ with a linear function $\omega$  s.t. $z_0 \in \iota$ and $\omega\mleft(z\mright)=g\mleft(z\mright)$ for all $z \in \iota$.
\end{definition}
Star Sets~\cite{tran2019star,bak2020improved} can compute $\left(\iota,\omega\right)$ by steering the Star Set according to the activations of $z_0$.
As the number of counterexample regions is exponentially bounded by the number of piece-wise linear nodes, we can use \generalizeAlg{} for exhaustive enumeration.
This is only a worst-case bound due to the NP-completeness of \ac{NN} verification~\cite{katz2017reluplex,DBLP:conf/rp/SalzerL21}.
In practice, the number of regions is much lower since many activation functions are linear in all considered states.
While a given counterexample region certainly has a point violating the \emph{linear} query that was given to the \ac{OLNNV} tool, it may be the case that the counterexample is \emph{spurious}, i.e. it does not violate the nonlinear constraints.
However, we can use the concise description of counterexample regions to check whether this is the case:
The function $\omega$ describes the \ac{NN}'s \emph{entire} behavior within $\iota$ as a single affine transformation and is thus much better suited for \acs{SMT}-based reasoning.
This \ac{SMT}-based check is performed by \filterAlg{} based on the following insight:
\begin{lemmaE}[Counterexample Filter][end]
\label{lem:counter-example-filtering}
Let $\left(q_l,q_n\right)$ be a tuple returned by \normalizeAlg{}.
A counterexample region $\left(\iota,\omega\right)$ for $q_l$ is a counterexample region for $q_l \land q_n$ iff the formula 
$
\eta \equiv \left( q_l\mleft(z,x^+\mright) \land q_n\mleft(z,x^+\mright) \land z \in \iota \land x^+ = \omega\mleft(z\mright) \right)
$
is satisfiable.
\end{lemmaE}
\begin{proofE}
Assume some $\left(\iota,\omega\right)$ is indeed a counterexample region for $q_l \land q_n$.
In this case, we know that there is some $z\in\iota$ such that with $x^+=g\mleft(z\mright)$ we get $q_l\mleft(z,x^+\mright) \land q_n\mleft(z,x^+\mright)$.
However, by definition of counterexample regions we also know that $g\mleft(z\mright)=\omega\mleft(z\mright)$.
Therefore, the assignments of $z$ and $x^+$ satisfy $\eta$.
Next, consider the other direction.
I.e. we assume we have a satisfying assignment for $\eta$.
By definition we know that for the given assignment of $z$ it holds that $x^+=g\mleft(z\mright)=\omega\mleft(z\mright)$.
Therefore, $z,x^+$ respect the neural network and satisfy $q_l \land q_n$, which are the two requirements for a counterexample.
\end{proofE}
The size of the formula $\eta$ only depends on $q_l$, $q_n$, $I$, and $O$ and, crucially, is \emph{independent} of the size and architecture of the \ac{NN}.
In practice, even $x^+$ can be eliminated (substitute linear terms of $\omega\left(x^+\right)$).

Based on these insights, the last required component is a mechanism for the exhaustive enumeration of all counterexample regions (denoted as $\enumerateAlg{}$).
There are two options for \enumerateAlg{}:
Either we use geometric path enumeration~\cite{tran2019star,bak2020improved} to enumerate all counterexample regions (used for the evaluation) or we instrument arbitrary complete off-the-shelf \ac{OLNNV} tools for linear queries through \Cref{algorithm:prep:extension}.
We define \enumerateAlg{} as follows:
\begin{definition}[Exhaustive Counterexample Generation]
\label{def:enumerateAlg}
An exhaustive enumeration procedure $\enumerateAlg{}$ receives a linear, normalized \ac{OLNNV} query $q$ and a piece-wise linear \ac{NN} $g$ and returns a covering $E$ of counterexample regions, i.e. $E$ satisfies
$
\left\{z \in \reals^I \mid q\mleft(z,g\mleft(z\mright)\mright) \right\} \subseteq \bigcup_{\left(\iota,\omega\right) \in E} \iota.
$
\end{definition}

\begin{algorithm}
    \caption{Enumeration of counterexample regions using off-the-shelf \ac{OLNNV} tools.}
    \label{algorithm:prep:extension}
    \begin{algorithmic}
    \Require Query $\bar{p}$, \ac{FNN} $g$
    \Procedure{Enumerate}{$\bar{p},g$}
    \State $s,E$ $\leftarrow$ \texttt{sat}$, \emptyset$
    
    \While{$s=\texttt{sat}$}
        \State $s, e \leftarrow \Call{NNV}{\bar{p},g}$
        \Comment{Call \ac{OLNNV} tool}
        \If{$s=\texttt{sat}$}
            \State $\iota,\omega \leftarrow \Call{Generalize}{e,g}$
            \Comment{Generalize counterexample}
            \State $E \leftarrow E \cup \left\{\left(\iota,\omega\right)\right\}$
            \Comment{Store counterexample}
            \State $\bar{p} \leftarrow \bar{p} \land \neg \iota$
            \Comment{Exclude counterexample region from remaining search space}
        \EndIf
    \EndWhile
    \State \Return $E$
    \EndProcedure
    \end{algorithmic}
\end{algorithm}

\section{Proofs}
\label{apx:proofs}
\printProofs

\section{Adaptive Cruise Control}
\label{apx:acc}
\paragraph{Information on the \ac{dL} model.}
The controller $\alpha_{\text{ctrl}}$ has three nondeterministic options:
it can brake with $-B$ (no constraints), set relative acceleration to $\accRelAcc = 0$ (constraint $\text{accCtrl}_0$) or choose any value in the range $\left[-B,A\right]$ (constraint $\text{accCtrl}_1$).
The constraints for the second and third action are as follows:
\begin{align*}
    \text{accCtrl}_0 \equiv &\left(2B\left(\accRelPos+T\accRelVel\right)>\accRelVel^2\right)\\
    \text{accCtrl}_1 \equiv &
    2B\left( \accRelPos + T\accRelVel + 0.5T^2\accRelAcc  \right) > \left(\accRelVel + T\accRelAcc\right)^2 \land \\
    &\left(
            -\accRelVel > T\accRelAcc \lor 0 < \accRelVel
         \lor \left( \accRelVel^2 < 2\accRelAcc\accRelPos \right)
        \right)
\end{align*}
We can prove the safety of this control envelope for the following initial condition which is also the loop invariant:
\[
\text{accInit} \equiv \text{accInv} \equiv \accRelPos > 0 \land \accRelPos 2 B \geq \accRelVel^2
\]
The right-hand side of the invariant/initial condition ensures that the distance is still large enough to avoid a collision through an emergency brake ($\accRelAcc = -B$).
Based on these foundations, the full specification for the \ac{NN} generated by \chThreeFancyName{} reads as follows:
\begin{align*}
        \Big(&0\leq \accRelPos \land \accRelPos \leq 100 \land -200 \leq \accRelVel \land \accRelVel \leq 200~\land\\
        &-B \leq \accRelAcc^+ \land \accRelAcc^+ \leq A~\land\\
        &\accRelPos > 0 \land \accRelPos \geq \accRelVel^2/(2*B)\Big)~\implies\\
        \Big(&\accRelAcc^+ \geq A = \accRelVel~\lor\\
        &\accRelAcc^+ \geq -B \land
        \accRelAcc^+  <  A \land
        \accRelAcc^+ \neq 0~\land\\
        &\big(
        (-\accRelVel/\accRelAcc^+  > T \lor -\accRelVel  <  0)~\land\\
        &\accRelPos + \accRelVel * T + \accRelAcc^+ * T^2 / 2 > (\accRelVel + \accRelAcc^+ * T)^2 / (2 * B) ~\lor\\
        &\accRelPos + \accRelVel * T + \accRelAcc^+ * T^2 / 2 > (\accRelVel + \accRelAcc^+ * T)^2 / (2 * B) ~\land\\
        &\accRelPos*\accRelAcc^+ - \accRelVel^2 + \accRelVel^2 /2 > 0
        \big)~\lor\\
        &\accRelPos + \accRelVel * T > \accRelVel^2 / (2 * B)~\land\accRelAcc^+ = 0 
        \Big)
\end{align*}
For our verification, we set $T=0.1$ (note that this is a bound on the frequency of control decisions, not a time horizon) and $A=B=100$.

\section{Extended Evaluation}
\label{apx:evaluation}
We implemented our procedure in a new tool\sepfootnote{SNNT} called \ac{SNNT}.
Due to the widespread use of ReLU \acp{NN}, \ac{SNNT} focuses on the verification of generic \ac{OLNNV} queries for such \acp{NN}, but could be extended in future work.
Our tool is implemented in Julia~\cite{Bezanson_Julia_A_fresh_2017} using  nnenum~\cite{bak2020improved,bakOverapprox} for \ac{OLNNV}, PicoSAT~\cite{PicoSAT,PicoSatJL} and Z3~\cite{Z3overall,Z3nonlinear}.
Our evaluation aimed at answering the following questions:
\begin{enumerate}
    \item[Q1] Can \ac{SNNT} verify infinite-time horizon safety or exhaustively enumerate counterexample regions for a given \ac{NNCS}?
    \item[Q2] Does our approach advance the State-of-the-Art?
    \item[Q3] Does our approach scale to complex real-world scenarios such as ACAS X?
\end{enumerate}
The case studies comprised continuous and discrete control outputs.
(Q3) is answered in the paper's main evaluation (see \Cref{sec:evaluation}; the remaining questions are discussed below.
Times are wall-clock times on a 16 core AMD Ryzen 7 PRO 5850U CPU (\ac{SNNT} itself is sequential while nnenum uses multithreading).

\subsection{Verification of Adaptive Cruise Control}
\label{subsec:eval:acc}
\looseness=-1
\looseness=-1
We applied our approach to the previously outlined running example.
To this end, we trained two \acp{NN} using PPO~\cite{Raffin2021}: \texttt{ACC} contains 2 layers with 64 \relu{} nodes each while \texttt{ACC\_Large} contains 4 layers with 64 \relu{} nodes each.
Our approach only analyzes the hybrid system \emph{once} and reuses the formulas for all future verification tasks (e.g. after retraining). %
We analyzed both \acp{NN} and a third one (see below for details) using \ac{SNNT} for coarser and tighter approximation settings (using OVERT's setting $N\in\left\{1,2,3\right\}$) on the value range $\left(\accRelPos,\accRelVel\right)\in\left[0,100\right]\times\left[-200,200\right]$.
The analyses took 47 to 300 seconds depending on the \ac{NN} and approximation.
The runtimes show mixed results for tighter approximations:
While tighter approximations (i.e. a higher $N$) sometimes improves performance (e.g. for \texttt{ACC\_Large} retrained), it can also harm performance (e.g.  as seen for \texttt{ACC\_Large}).
We suspect that this is a combination of two factors.
First, finer approximations yield a larger number of queries which may increase the overall overhead.
Secondly, our adjustments to OVERT's approximation using approximation of piece-wise linearities (see \Cref{subsec:overapprox}) may in some cases worsen the approximation in comparison to a lower $N$.
We leave a more fine-grained analysis of approximation techniques to future work and focus our analysis on approximations with $N=1$.
Across all \acp{NN} we find that approximation helps:
The first row shows performance when omitting the approximated constraints in the \ac{OLNNV} query and uniformly performs worse than an $N=1$ approximation.
\ac{SNNT} finds the \ac{NN} \texttt{ACC\_Large} to be \emph{unsafe} and provides an exhaustive characterization of all input space regions with unsafe actions.

\paragraph{Information on the counterexamples found for \texttt{ACC\_Large}.}
\Cref{fig:acc_larger_trajectory} shows the input state where the x-axis represents possible values for $\accRelPos$ and the y-axis represents possible values for $\accRelVel$.
The orange line represents the edge of the safe state space, i.e. all values below the orange line are outside the reachable state space of the contract.
The red areas represent all parts of the state space where \ac{SNNT} found concrete counterexamples for the checked controller monitor formula.
Furthermore, the plot contains two lines representing the system's evolution over time when started at certain initial states.
In particular, we observe one trajectory leading to a crash due to an erroneous decision in the red area around $\accRelPos=5,\accRelVel=-25$.
This concrete counterexample was found by sampling initial states from the regions provided by \ac{SNNT}.
\begin{figure}
    \centering
    \includegraphics[width=0.85\textwidth]{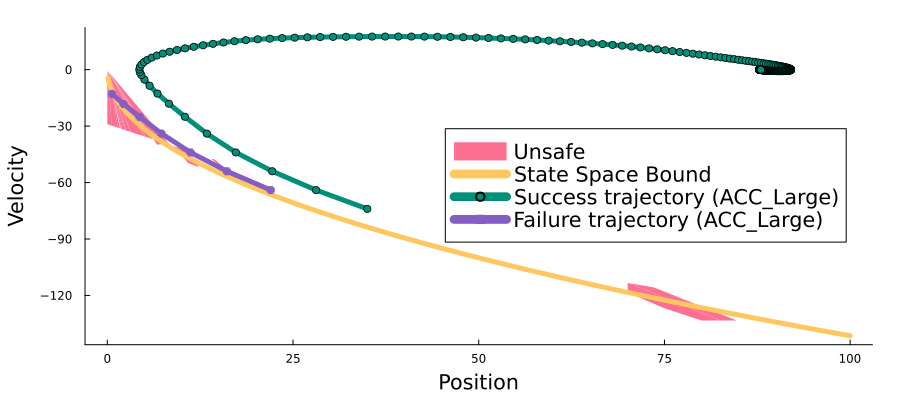}
    \caption{This plot shows the (input) state space for the \texttt{ACC\_Large} \ac{NNCS}: The orange line represents the boundary of the safe state space; the red areas indicate regions with counterexamples; the purple and green lines show potential trajectories of the system (dots represent discrete controller decisions).}
    \label{fig:acc_larger_trajectory}
\end{figure}

\paragraph{Information on runtimes.}
The runtimes can be seen in \Cref{tab:accruntimes} where \#Filtered corresponds to the number of counterexample regions that were found to be spurious for \texttt{ACC\_Large}.
Comparing linear and $N=1$ performance we see that approximation especially helps for larger \acp{NN} (\texttt{ACC} vs \texttt{ACC\_Large}).
This is the case because the overapproximate constraints can filter out numerous counterexample regions, which would otherwise have to be processed by the \filterAlg{} procedure.
This effect is less significant for smaller networks where the time for overapproximation construction takes longer in comparison to the \ac{NN} analysis time and less counterexamples are generated due to the lower number of ReLU nodes.
\begin{table}
    \caption{Runtime of \ac{SNNT} on the ACC networks per approximation.
    Final column lists filtered counterexamples for \texttt{ACC\_Large}}
    \begin{tabularx}{0.7\textwidth}{X|c|c|c|c}
    Approx. &\texttt{ACC} & \texttt{ACC\_Large} & \texttt{ACC\_Large} retrained & \#Filtered\\\hline
    linear & 65s & 228s & 277s & 5658 \\\hline
    $N=1$ & 47s & 87s & 124s & 1889\\\hline
    $N=2$ & 53s & 300s& \hphantom{1}70s & 5604\\\hline
    $N=3$ & 92s & 139s& 102s& 1306\\
    \end{tabularx}
    \label{tab:accruntimes}
\end{table}

\paragraph{Further Training.}
\looseness=-1
Approx. 3\% of \texttt{ACC\_Large}'s inputs resulted in unsafe actions which demonstrably resulted in car crashes.
We performed a second training round on \texttt{ACC\_Large} where we initialized the system within the counterexample regions for a boosted $p \approx 13\%$ of all runs (choosing the best-performing $p$).
By iterating this approach twice, we obtained an \ac{NN} which was safe 
except for very small relative distances 
(for $\left(\accRelPos,\accRelVel\right)\in\left[0,0.08\right]\times\left[-2,0.1\right]$).
\ac{SNNT} certifies the safety outside this remaining region (see column \texttt{ACC\_Large} retrained) which can be safeguarded using an emergency braking backup controller.
Notably, this an \emph{a priori} guarantee is for an arbitrarily long trip.

\paragraph{Results on Q1.}
Our tool \ac{SNNT} is capable of verifying and refuting infinite-time horizon safety for a given \ac{dL} contract.
The support for exhaustively enumerating counterexamples can help in guiding the development of safer \acp{NNCS}.

\subsection{Comparison to Other Techniques}%
\label{subsec:eval:closed_loop}
\looseness=-1
Although \Ac{CLNNV} tools focus on finite-time horizons,
we did compare our approach with the tools from ARCH Comp 2022~\cite{lopez2022arch} (a superset of ARCH Comp 2023~\cite{DBLP:conf/arch/LopezAFJL023}) on \ac{ACC}.
We began by evaluating safety certification on a small subset of the input space of \texttt{ACC\_Large} ($0.009\%$ of the states verified by \ac{SNNT})
for multiple configurations of each tool (see \Cref{tab:clnnv_comp}).
Only \texttt{NNV} was capable of showing safety for 0.1 seconds (vs. time unbounded safety) while taking vastly longer for the tiny fraction of the state space.

\begin{table}[t]
    \centering
    \caption{Comparison of verification tools for \acp{NNCS} on the ACC Benchmark:
    Share of state space analyzed and best results of each tool.}
    \begin{tabular}{l|l|l|l|p{1.8cm}|l}
        Tool & Nonlinearities & Evaluated & Time (s) & Share of & Result\\
        &&Configurations&&State Space&\\\hline
        NNV\hfill~\cite{tran2019star,tran2020neural,9093970} & no & \hfill 4 & \hfill 711 & \hphantom{10}0.009\% & \textbf{safe for 0.1s}\\
        JuliaReach\hfill~\cite{Forets2019JuliaReachReachability,Schilling2021} & no & \hfill 4 & \hfill --- & \hphantom{10}0.009\% & unknown\\ %
        CORA\hfill~\cite{althoff2015introduction} & yes & \hfill 10 & \hfill --- & \hphantom{10}0.009\% & unknown\\ %
        POLAR\hfill~\cite{Polar2022} & poly. Zono. & \hfill $12$ & \hfill --- & \hphantom{10}0.009\% & unknown\\\hline%
        \textbf{\ac{SNNT}} & polynomial & \hfill 1 & \hfill \textbf{124} & 100.000\% & \textbf{safe for $\infty$}
    \end{tabular}
    \label{tab:clnnv_comp}
\end{table}

\paragraph{Comparison to \texttt{NNV}.}
\looseness=-1
We performed a more extensive comparison to \texttt{NNV} by attempting to prove with \texttt{NNV} that the \ac{NNCS} has no trajectories leading from within to outside the loop invariant.
This would witness infinite-time safety.
Due to a lack of support for nonlinear constraints, we approximate the regions.
Over-approximating the invariant as an input region trivially produces unsafe trajectories, thus we can only under-approximate.
Notably, this immediately upends any soundness or completeness guarantees (it does not consider all possible \ac{NN} inputs nor all allowed actions).
We apply an interval-based approximation scheme similar to OVERT (detailed in the subsequent paragraph).
This scheme is parameterized by $\accRelPos$'s step size ($\sigma$), $\accRelVel$'s distance to the invariant ($\varepsilon$) and the step size for approximating the unsafe set ($\rho$).
The right configuration of $\left(\sigma,\varepsilon,\rho\right)$ is highly influential, but equally unclear.
For example, with $\sigma=0.25,\epsilon=5$ and $\rho=1$
we can ``verify''
not only the retrained \texttt{ACC\_Large} \ac{NN} for $2 \leq \accRelPos \leq 3$, but also the original, unsafe \texttt{ACC\_Large} \emph{despite concrete counterexamples}.
This is a consequence of a coarse approximation, but also a symptom of a larger problem:
Neither over- nor under-approximation yields useful results.
In particular, discarding inputs close to the invariant's edge equally removes states most prone to unsafe behavior (see \Cref{fig:acc_larger_trajectory} in \Cref{subsec:eval:acc}).

\paragraph{Approximation Scheme employed for NNV.}
We consider input space stripes of width $\sigma>0$, i.e.  $\left(\accRelPos,\accRelVel\right)\in\left[p_0,p_0+\sigma\right]\times\left[-\sqrt{2A\accRelPos}+\epsilon,\frac{TB}{2}\right]$ ($\accRelVel$ is bounded through the minimally allowed velocity and the maximal velocity that can still decrease $\accRelPos$).
While $\sigma$ determines the granularity of the under-approximation of \ac{NN} inputs, $\epsilon>0$ discards velocities too close to the loop invariant which cannot be proven using an underapproximation and must be non-zero to prove any system.
For each stripe we compute the smallest reachable position $p^*$ and compute a piece-wise linear overapproximation of the negated loop invariant $\accRelPos < \accRelVel^2/\left(2A\right)$ on the interval $\left[p^*,p_0\right]$ using an approach conceptually similar to OVERT~\cite{Sidrane2021}.
We determine the number of pieces through a step size $\rho>0$, i.e. the interval $\left[p^*,p^*+\rho\right]$ will have a different line segment than the interval $\left[p^*+\rho,p^*+2\rho\right]$.
As the negation of the loop invariant (the orange line in \Cref{fig:acc_larger_trajectory}) is non-convex, we integrated an iterative check for disjunctions of unsafe sets into the verification procedure of \texttt{NNV}.

\paragraph{Comparison with DNNV (\Cref{tab:dnnv_comp})}
As DNNV~\cite{Shriver2021} does not support nonlinear properties, a direct comparison to the tool is impossible.
However, we improve upon one important feature implemented in DNNV, namely query normalization.
By exporting the boolean skeleton generated by \normalizeAlg{}, we can use projected model counting~\cite{SRSM19} to estimate the number of propositionally satisfiable conjunctions over linear constraints.
Although the rule based normalization performed by DNNV may produce fewer formulas (this depends on the formula structure and implementation details of the rewriting system),
this count provides an upper bound on the number of conjunctions that can be generated for a formula.
Without \normalizeAlg{}, a rewriting system would first generate a large disjunctive normal form (with at most \#Conjunction many elements), then check the feasibility of generated conjunctions and hand feasible conjunctions to an \ac{OLNNV} tool.
As indicated by \Cref{tab:dnnv_comp}, such an approach can lead to the number of conjunctions reaching into the trillions which becomes entirely intractable in practice.
As can be seen in \Cref{tab:dnnv_comp}, our tool (\# Queries) only produces a fraction of the propositionally satisfiable conjunctions (\# Conjunctions) and also significantly reduces the number of \ac{OLNNV} queries in comparison to an approach that splits up disjunctions (\# Feasible Conjunctions).
Note, that DNNV is also required to check generated conjunctions for feasibility, thus, our approach is also efficient in this regard by requiring a comparatively low number of SMT calls.
Given the feasiblity of 39 trillion conjunctions, one may wonder whether the propositional structure encoded in the boolean skeleton is of use at all.
In this instance, we consider conjunctions over 110 distinct atoms.
Indeed, the propositional structure adds value: Without it, we would obtain  $2^{110}\approx 10^{33}$ possible conjunctions, i.e. based on the propositional structure we only consider a fraction of approx. $10^{-19}$ of all possible combinations. 
\normalizeAlg{} further reduces this fraction to a degree that is manageable via \ac{OLNNV}.

\begin{table}
    \centering
    \begin{tabular}{l|l|l|l||l}
        Property & \# Conjunctions & \textbf{\# Queries}& \# Feasible Conjunctions & \# SMT calls  \\\hline
        ACC & \hphantom{99}2.4k & 20\hphantom{.9K} & 86\hphantom{.9K} & 261\hphantom{.9K}\\
        ACC (Fallback) & \hphantom{99}5.1k & 15\hphantom{.9K} & 72\hphantom{.9K} & 235\hphantom{.9K}\\\hline
        ACAS (DNC) & 117.5M & \hphantom{9}1.7k & \hphantom{9}9.9k & \hphantom{9}11.4k\\
        ACAS (DND) & \hphantom{9}88.9M & \hphantom{9}1.8k & 10.4k & \hphantom{9}12.0k\\
        ACAS (DES1500) & 451.3B & 12.5k & 58.8k & \hphantom{9}66.4k\\
        ACAS (CLI1500) & 374.4B & 13.1k & 62.5k & \hphantom{9}70.4k\\
        ACAS (SDES1500) & \hphantom{99}9.1T & 18.6k & 64.1k & 75.8k\\ %
        ACAS (SCLI1500) & \hphantom{9}18.2T & 21.8k & 76.0k & 88.5k\\
        ACAS (SDES2500) & \hphantom{9}39.0T & 19.0k & 66.7k & 78.5k\\
        ACAS (SCLI2500) & \hphantom{9}19.4T & 18.6k & 67.7k & 79.8k\\
    \end{tabular}
    \caption{Comparison of feasible conjunctions/queries for non-normalized \ac{OLNNV} queries for an approximation with $N=1$: \#Conjunctions is the number of propositionally satisfiable conjunctions over linear constraints, \textbf{\# Queries is the number of \ac{OLNNV} queries generated by \ac{SNNT}}, \# Feasible Conjunctions is the number of \ac{OLNNV} queries when splitting up disjunctions, \# SMT calls is the number of feasibility checks performed by \ac{SNNT}'s \normalizeAlg{} implementation during query generation.}
    \label{tab:dnnv_comp}
\end{table}

\paragraph{Comparison with SMT solvers (\Cref{tab:smt_comp})}
An alternative approach for the verification of non-linear \ac{OLNNV} queries could be encoding the problem using an off-the-shelf SMT solver.
In this case, the SMT solver has to check the satisfiability of the nonlinear \Cref{eq:safety_property}.
We can instrument the Lantern package~\cite{Genin2022} to encode the \ac{NN} into a SMT formula.
Thus, we performed a comparison on the \texttt{ACC\_Large} \ac{NN} as well as the retrained \texttt{ACC\_Large} \ac{NN}, i.e. on a satisfiable as well as a non-satisfiable instance.
We compared our approach to dReal~\cite{dReal}, Mathematica~\cite{Mathematica}, Z3~\cite{Z3overall}, MathSAT~\cite{mathsat} (due to its use of incremental linearization) as well as the first and second place of SMT-Comp 2023 in the \texttt{QF\_NRA} track: Z3++~\cite{z3pp1} and  cvc5~\cite{cvc5}.
The results of our comparison can be observed in \Cref{tab:smt_comp}.
The observed timeouts after 12 hours are unsurprising insofar as the work on linear \ac{OLNNV} techniques was partially motivated by the observation that classical SMT solvers struggle with the verification of \acp{NN}.
\begin{table}
    \centering
    \begin{tabular}{c|c|c|c|c}
        \multirow{2}{*}{Tool} & \multicolumn{2}{c|}{\texttt{ACC\_Large}} & \multicolumn{2}{c}{\texttt{ACC\_Large} retrained}\\\cline{2-5}
        & Status & Time & Status & Time\\\hline\hline
        Mathematica & MO & --- & MO & ---\\
        dReal & TO & --- & TO & ---\\
        Z3 & unknown & \hphantom{2}510s & unknown & 1793s\\
        Z3++ & unknown & 2550s & unknown & 2269s\\
        cvc5 & TO & --- & TO & ---\\
        MathSAT & TO & --- & TO & ---\\
        \ac{SNNT} & \textbf{sat} & \hphantom{25}\textbf{87s} & \textbf{unsat} & \hphantom{1}\textbf{124s}
    \end{tabular}
    \caption{Comparison of \ac{SNNT} with State-of-the-Art SMT solvers: Timeout (TO) was set to 12 hours}
    \label{tab:smt_comp}
\end{table}

\paragraph{Comparison to the techniques by Genin et al.~\cite{Genin2022}}
While the work by Genin et al.~\cite{Genin2022} represents a case-study with techniques specifically applied to an \ac{NN} for a simplified airborne collision avoidance setting,
some ideas from the example in \cite{Genin2022} might in principle generalize to other case studies.
Unfortunately, the case-study considered by \cite{Genin2022} are not the \acp{NN} from Julian et al.~\cite{julian2016policy,julian2020safe}, but simplified \acp{NN} with a single acceleration control output.
As the authors did not publish their trained NNs, their exact verification formulas, or their verification runtimes, we instead compare our approach with this line of work on our ACC benchmark.
To this end, we approximate the verification property derived in \Cref{subsec:eval:acc} using the box approximation techniques described by the authors and use their Lantern Python package to translate the verification tasks into linear arithmetic SMT problems.
Using Z3, their technique does not terminate within more than 50 hours on the (unsafe) ACC\_Large network and thus fails to analyze the NN. This demonstrates significant scalability limitations compared to our approach.
Moreover, it is worth pointing out that the authors themselves acknowledge that the technique is incomplete which distinguishes our complete lifting procedure from their approach.

\paragraph{Results on Q2.}
\looseness=-1
If \ac{CLNNV} is a hammer then guaranteeing infinite-time safety is a screw:
It is a categorically different problem requiring a different tool.
\ac{SNNT} provides safety guarantees which go infinitely beyond the guarantees achievable with State-of-the-Art techniques (\ac{CLNNV} or otherwise).
A direct CAD/SMT encoding of \Cref{eq:safety_property}
as well as the techniques by Genin et al.~\cite{Genin2022} are no alternatives
due to timeouts ($>$12h) or ``unknown'' results (see also \Cref{tab:smt_comp}).

\subsection{Zeppelin Steering}
\label{subsec:eval:zeppelin}
As a further case study, we considered the task of steering a Zeppelin under uncertainty:
The model's goal was to learn avoiding obstacles while flying in a wind field with nondeterministic wind turbulences.
This problem has previously been studied with differential hybrid games~\cite{Platzer2017}.
The examined scenario serves two purposes:
On the one hand, it shows that our approach can reuse safety results from the \ac{dL} literature which drastically increases its applicability;
on the other hand it is a good illustration for why verification (rather than empirical evidence) is so important when deploying \acp{NN} in safety-critical fields.

After transferring the differential hybrid games logic contract into a differential dynamic logic contract and proving its safety, we trained a model to avoid obstacles while flying in a wind field with uniformly random turbulences via PPO.
After 1.4 million training steps, we obtained an agent that did not crash for an evaluation run of 30,000 time steps.
Given these promising results we proceeded to verify the agent's policy assuming a safe -- or at least ``almost safe'' -- flight strategy had been learnt.
However, upon verifying the \ac{NN}'s behavior for obstacles of circumfence 40, we found that it produced potentially unsafe actions for large parts of the input space.
The reason this unsafety was not observable during empirical evaluation was the choice of uniformly random wind turbulences:
The unsafe behavior only appears for specific sequences of turbulences which occur extremely rarely in the empirical setting.
This flaw in the training methodology was only found due to the verification.
This is where our approach differs from simulation-based evaluation:
With an SMT filter timeout of 4 seconds, \ac{SNNT} provides an \emph{exhaustive} characterization of \emph{all} potentially unsafe regions in 4.1 hours while providing 72 concrete counterexample regions.
This is where our approach differs from simulation-based evaluation, as we were able to generate an exhaustive characterization of counterexample regions.
In this instance, the tool's bootlenecks were approximation construction and the SMT based counterexample finding.
This case study and the stark difference between simulation and verification underscore the importance of rigorous verification of \acp{NN} as an addition to empirical evidence in safety-critical areas.

\section{Allowed advisories for Vertical Airborne Collision Avoidance}
\label{apx:acas_table}
\Cref{tab:acas_table} provides an overview of possible advisories for a Vertical Airborne Collision Avoidance System.
The allowed range of vertical velocity and the required minimal acceleration are integrated into the \ac{dL} model used by \chThreeFancyName{}.
\begin{table}[!h]
    \centering
    \begin{tabularx}{\textwidth}{l | X | l | l}
        Advisory & Description & Vertical Velocity & Min. Acceleration\\
        && [ft/min] & \\\hline
        COC & Clear of conflict & --- & ---\\
        DNC & Do not climb & $\left[-\infty,0\right]$ & $g/4$\\
        DND & Do not descend & $\left[0,\infty\right]$ & $g/4$\\
        DES1500 & Descend at least 1500 ft/min & $\left[-\infty,-1500\right]$ & $g/4$\\
        CL1500 & Climb at least 1500 ft/min & $\left[1500,\infty\right]$ & $g/4$\\
        SDES1500 & Strengthen descent to at least 1500 ft/min & $\left[-\infty,-1500\right]$ & $g/3$\\
        SCL1500 & Strengthen climb to at least 1500 ft/min & $\left[1500,\infty\right]$ & $g/3$\\
        SDES2500 & Strengthen descent to at least 2500 ft/min & $\left[-\infty,-2500\right]$ & $g/3$\\
        SCL2500 & Strengthen climb to at least 2500 ft/min & $\left[2500,\infty\right]$ & $g/3$
    \end{tabularx}
    \caption{Overview on Vertical Airborne Collision Advisories (simplified version of \cite[Table 1]{Jeannin2017})}
    \label{tab:acas_table}
\end{table}

\section{\acp{NMAC} produced by \ac{NN}-based \ac{ACASX} advisories}
\label{apx:acas}
Further counterexamples for the advisories of the \ac{NNCS} can be found in \Cref{fig:acas:des1500,fig:acas:sdes1500,fig:acas:scl1500,fig:acas:sdes2500,fig:acas:scl2500}.
\begin{figure}[!h]
    \centering
    \includegraphics[width=\textwidth]{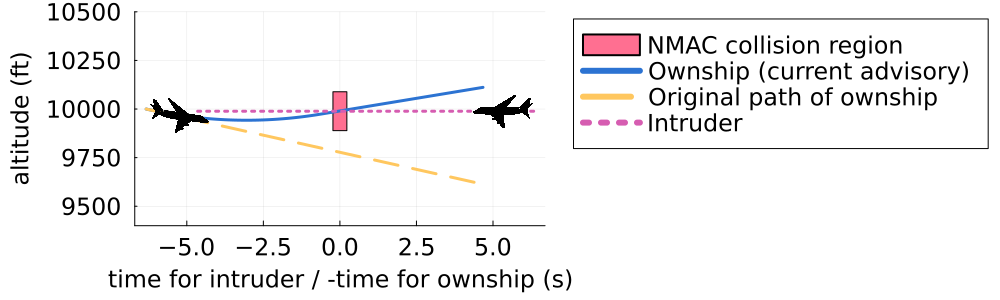}
    \caption{After a previous advisory to descend at least 1500ft/min, the NN advises the pilot to strengthen climb to at least 1500ft/min leading to a NMAC.}
    \label{fig:acas:des1500}
\end{figure}
\begin{figure}[!h]
    \centering
    \includegraphics[width=\textwidth]{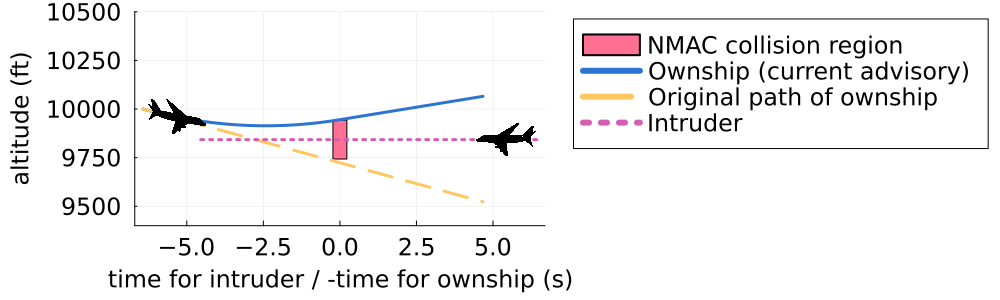}
    \caption{After a previous advisory to strengthen descent to at least 1500ft/min, the NN advises the pilot to strengthen climb to at least 1500ft/min leading to a NMAC.}
    \label{fig:acas:sdes1500}
\end{figure}
\begin{figure}[!h]
    \centering
    \includegraphics[width=\textwidth]{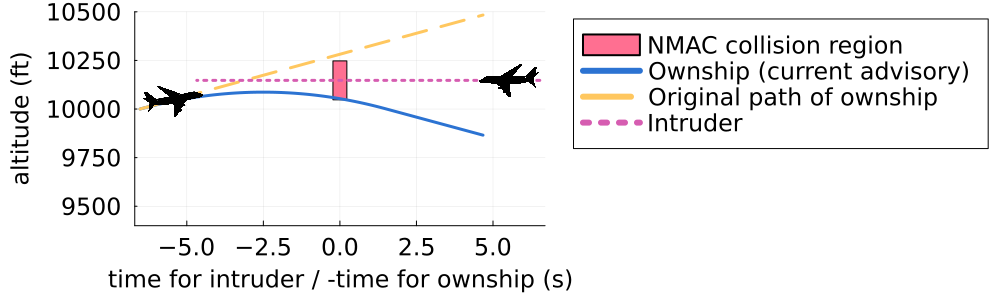}
    \caption{After a previous advisory to strengthen climb to at least 1500ft/min, the NN advises the pilot to strengthen descent to at least 2500ft/min leading to a NMAC.}
    \label{fig:acas:scl1500}
\end{figure}
\begin{figure}[!h]
    \centering
    \includegraphics[width=\textwidth]{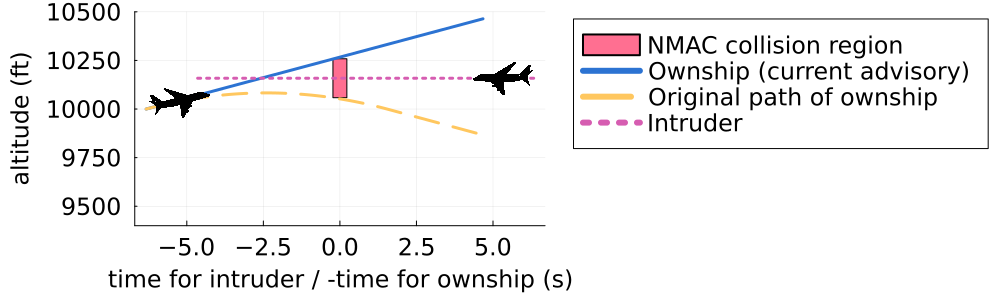}
    \caption{After a previous advisory to strengthen descent to at least 2500ft/min, the NN advises the pilot to strengthen descent to at least 2500ft/min leading to a NMAC.}
    \label{fig:acas:sdes2500}
\end{figure}
\begin{figure}[!h]
    \centering
    \includegraphics[width=\textwidth]{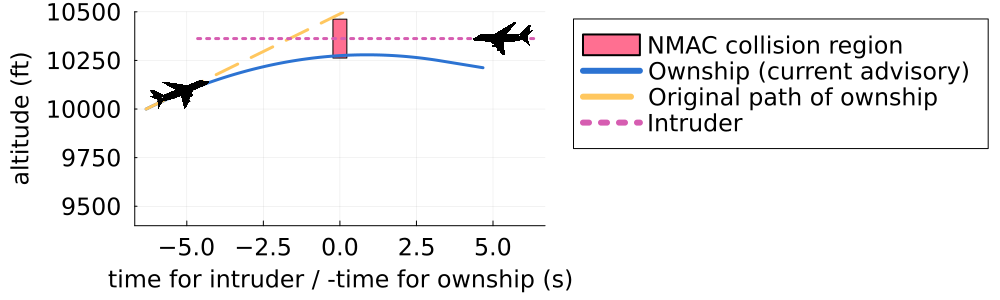}
    \caption{After a previous advisory to strengthen climb to at least 2500ft/min, the NN advises the pilot to strengthen descent to at least 1500ft/min leading to a NMAC.}
    \label{fig:acas:scl2500}
\end{figure}

Counterexamples to the safety of \ac{NNCS} advisories for non-level flight of the intruder in the case of a previous advisory \emph{Do Not Climb} and \emph{Do Not Descend} can be found in \Cref{fig:acas:vI:dnc,fig:acas:vI:dnd}.
Note, that for non-level flight (i.e. both intruder and ownship have a non-zero vertical velocity), there exist two possible interpretations for the advised vertical velocities.
These can be interpreted as absolute or relative velocity.
For our counterexamples in \Cref{fig:acas:vI:dnc,fig:acas:vI:dnd} we opt for the relative velocity interpretation.
This does not affect the analysis for level flight intruders where the interpretations coincide.

\begin{figure}[!h]
    \centering
    \includegraphics[width=\textwidth]{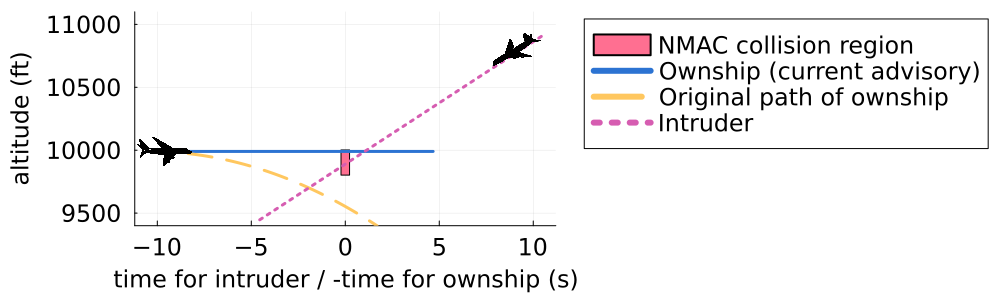}
    \caption{After a previous advisory to not climb, the NN advises the pilot to climb with at least 1500ft/min leading to a NMAC (assumes relative velocity interpretation).}
    \label{fig:acas:vI:dnc}
\end{figure}
\begin{figure}[b]
    \centering
    \includegraphics[width=\textwidth]{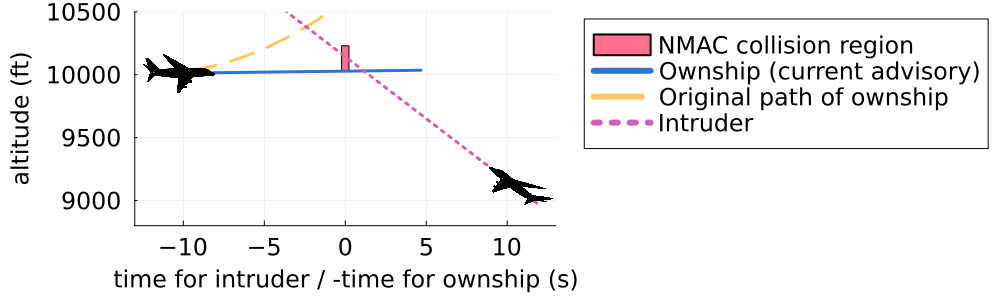}
    \caption{After a previous advisory to not descend, the NN advises the pilot to descend with at least 1500ft/min leading to a NMAC (assumes relative velocity interpretation).}
    \label{fig:acas:vI:dnd}
\end{figure}

\FloatBarrier

\clearpage

\ifchecklist
\section*{NeurIPS Paper Checklist}

\begin{enumerate}

\item {\bf Claims}
    \item[] Question: Do the main claims made in the abstract and introduction accurately reflect the paper's contributions and scope?
    \item[] Answer: \answerYes{} %
    \item[] Justification: The claims made in the introduction are supported by the theorems (with proofs), algorithms and evaluations in \Cref{sec:safe_controller_implementations,sec:nnnv,sec:evaluation} (and its appendices).%
\item {\bf Limitations}
    \item[] Question: Does the paper discuss the limitations of the work performed by the authors?
    \item[] Answer: \answerYes{} %
    \item[] Justification: The assumptions made by our theoretical results (in particular in \Cref{sec:safe_controller_implementations}) are explicitly stated and the exponential worst-case runtime of \chFourFancyName{} is discussed in \Cref{sec:nnnv}.

\item {\bf Theory Assumptions and Proofs}
    \item[] Question: For each theoretical result, does the paper provide the full set of assumptions and a complete (and correct) proof?
    \item[] Answer: \answerYes{} %
    \item[] Justification: All formal statements and proofs can be found in \Cref{apx:proofs} and are referenced accordingly in the main paper.

    \item {\bf Experimental Result Reproducibility} %
    \item[] Question: Does the paper fully disclose all the information needed to reproduce the main experimental results of the paper to the extent that it affects the main claims and/or conclusions of the paper (regardless of whether the code and data are provided or not)?
    \item[] Answer: \answerYes{} %
    \item[] Justification: We provide an artifact for the \ac{SNNT} tool which comprises all benchmarks mentioned in the paper.
    The artifact is equipped with scripts for building the tool (\texttt{./build.sh <julia 1.10 path>}), running an example (\texttt{./run\_example.sh}; runs approx. 130 seconds) or running all \ac{SNNT} benchmarks (\texttt{./run\_experiments.sh}; logs will be saved in \texttt{./experiments/<benchmark class>/<benchmark>}).
    For traceability, the benchmark logs are part of the artifact (will be overwritten by rerunning benchmarks).
    For the comparisons to other tools in \Cref{subsec:eval:closed_loop}, we provide logs and documentation of the evaluated configurations (see \texttt{./experiments/comparison} and \texttt{./experiments/dnnv\_comp}).
    For all benchmarks, the underlying \ac{dL} models are provided (\texttt{./experiments/<benchmark class>/<benchmark>/keymaerax}).
    Where \acp{NN} were trained, we also provide the Jupyter notebooks used for training as-is (\texttt{./experiments/<benchmark class>/<benchmark>/training}).

\item {\bf Open access to data and code}
    \item[] Question: Does the paper provide open access to the data and code, with sufficient instructions to faithfully reproduce the main experimental results, as described in supplemental material?
    \item[] Answer: \answerYes{} %
    \item[] Justification: All code and data can be found on GitHub: \url{https://github.com/samysweb/NCubeV}. We also provide an archived version of our artifact with DOI~\cite{samuel_teuber_2024_13922169}.

\item {\bf Experimental Setting/Details}
    \item[] Question: Does the paper specify all the training and test details (e.g., data splits, hyperparameters, how they were chosen, type of optimizer, etc.) necessary to understand the results?
    \item[] Answer: \answerYes{} %
    \item[] Justification: We provide details on the machine used for evaluation; all other details are described in the paper and found in the artifact.

\item {\bf Experiment Statistical Significance}
    \item[] Question: Does the paper report error bars suitably and correctly defined or other appropriate information about the statistical significance of the experiments?
    \item[] Answer: \answerNo{} %
    \item[] Justification: We did not have the computational resources to run the experiments multiple times (this is in particular the case for long-running experiments).
    More importantly, our results show fundamental, categorical differences between the tools:
    We repeatedly demonstrate that our tool addresses problems that all other techniques fail to address within a very generous timeout.
    Thus, a statistical significance test would not be useful in this matter as the results are binary (solves the problem vs. does not solve the problem).

\item {\bf Experiments Compute Resources}
    \item[] Question: For each experiment, does the paper provide sufficient information on the computer resources (type of compute workers, memory, time of execution) needed to reproduce the experiments?
    \item[] Answer: \answerYes{} %
    \item[] Justification: All experiments ran on the same machine (see \Cref{sec:evaluation})
    
\item {\bf Code Of Ethics}
    \item[] Question: Does the research conducted in the paper conform, in every respect, with the NeurIPS Code of Ethics \url{https://neurips.cc/public/EthicsGuidelines}?
    \item[] Answer: \answerYes{} %
    \item[] Justification: The core contribution of our research is a methodology to make \ac{NNCS} safer. Making \ac{NNCS} safer can have very positive societal impacts by increasing (warranted) trust in infrastructure and preventing catastrophic failures. We do not see potential issues in any other category mentioned by the NeurIPS Code of Ethics.

\item {\bf Broader Impacts}
    \item[] Question: Does the paper discuss both potential positive societal impacts and negative societal impacts of the work performed?
    \item[] Answer: \answerNo{} %
    \item[] Justification: We see our work as foundational research that could be applied in numerous settings to guarantee safety of \ac{NNCS}. We see no potential for negative societal impacts directly stemming from this work.
    Conversely, making \ac{NNCS} safer can have very positive societal impacts by increasing (warranted) trust in infrastructure and preventing catastrophic failures.
    
\item {\bf Safeguards}
    \item[] Question: Does the paper describe safeguards that have been put in place for responsible release of data or models that have a high risk for misuse (e.g., pretrained language models, image generators, or scraped datasets)?
    \item[] Answer: \answerNA{} %
    \item[] Justification: Our contribution does not contain any high-risk datasets or models.

\item {\bf Licenses for existing assets}
    \item[] Question: Are the creators or original owners of assets (e.g., code, data, models), used in the paper, properly credited and are the license and terms of use explicitly mentioned and properly respected?
    \item[] Answer: \answerYes{} %
    \item[] Justification: Evaluated models from the literature are appropriately cited.

\item {\bf New Assets}
    \item[] Question: Are new assets introduced in the paper well documented and is the documentation provided alongside the assets?
    \item[] Answer: \answerYes{} %
    \item[] Justification: The code of our tool \ac{SNNT} will be released as open-source with the paper -- this includes instructions for verifying models with the tool.

\item {\bf Crowdsourcing and Research with Human Subjects}
    \item[] Question: For crowdsourcing experiments and research with human subjects, does the paper include the full text of instructions given to participants and screenshots, if applicable, as well as details about compensation (if any)? 
    \item[] Answer: \answerNA{} %
    \item[] Justification: No human subjects.

\item {\bf Institutional Review Board (IRB) Approvals or Equivalent for Research with Human Subjects}
    \item[] Question: Does the paper describe potential risks incurred by study participants, whether such risks were disclosed to the subjects, and whether Institutional Review Board (IRB) approvals (or an equivalent approval/review based on the requirements of your country or institution) were obtained?
    \item[] Answer: \answerNA{} %
    \item[] Justification: No human subjects.

\end{enumerate}
\fi

\end{document}